\PassOptionsToPackage{unicode}{hyperref}
\PassOptionsToPackage{hyphens}{url}
\PassOptionsToPackage{dvipsnames,svgnames,x11names}{xcolor}
\documentclass[
  11pt]{article}
\usepackage{xcolor}
\usepackage[margin=1.1in]{geometry}
\usepackage{amsmath,amssymb}
\usepackage{iftex}
\ifPDFTeX
  \usepackage[T1]{fontenc}
  \usepackage[utf8]{inputenc}
  \usepackage{textcomp} % provide euro and other symbols
\else % if luatex or xetex
  \usepackage{unicode-math} % this also loads fontspec
  \defaultfontfeatures{Scale=MatchLowercase}
  \defaultfontfeatures[\rmfamily]{Ligatures=TeX,Scale=1}
\fi
\usepackage{lmodern}
\ifPDFTeX\else
\fi
\IfFileExists{upquote.sty}{\usepackage{upquote}}{}
\IfFileExists{microtype.sty}{% use microtype if available
  \usepackage[]{microtype}
  \UseMicrotypeSet[protrusion]{basicmath} % disable protrusion for tt fonts
}{}
\makeatletter
\@ifundefined{KOMAClassName}{% if non-KOMA class
  \IfFileExists{parskip.sty}{%
    \usepackage{parskip}
  }{% else
    \setlength{\parindent}{0pt}
    \setlength{\parskip}{6pt plus 2pt minus 1pt}}
}{% if KOMA class
  \KOMAoptions{parskip=half}}
\makeatother
\makeatletter
\ifx\paragraph\undefined\else
  \let\oldparagraph\paragraph
  \renewcommand{\paragraph}{
    \@ifstar
      \xxxParagraphStar
      \xxxParagraphNoStar
  }
  \newcommand{\xxxParagraphStar}[1]{\oldparagraph*{#1}\mbox{}}
  \newcommand{\xxxParagraphNoStar}[1]{\oldparagraph{#1}\mbox{}}
\fi
\ifx\subparagraph\undefined\else
  \let\oldsubparagraph\subparagraph
  \renewcommand{\subparagraph}{
    \@ifstar
      \xxxSubParagraphStar
      \xxxSubParagraphNoStar
  }
  \newcommand{\xxxSubParagraphStar}[1]{\oldsubparagraph*{#1}\mbox{}}
  \newcommand{\xxxSubParagraphNoStar}[1]{\oldsubparagraph{#1}\mbox{}}
\fi
\makeatother

\usepackage{longtable,booktabs,array}
\usepackage{calc} % for calculating minipage widths
\usepackage{etoolbox}
\makeatletter
\patchcmd\longtable{\par}{\if@noskipsec\mbox{}\fi\par}{}{}
\makeatother
\IfFileExists{footnotehyper.sty}{\usepackage{footnotehyper}}{\usepackage{footnote}}
\makesavenoteenv{longtable}
\usepackage{graphicx}
\makeatletter
\newsavebox\pandoc@box
\newcommand*\pandocbounded[1]{% scales image to fit in text height/width
  \sbox\pandoc@box{#1}%
  \Gscale@div\@tempa{\textheight}{\dimexpr\ht\pandoc@box+\dp\pandoc@box\relax}%
  \Gscale@div\@tempb{\linewidth}{\wd\pandoc@box}%
  \ifdim\@tempb\p@<\@tempa\p@\let\@tempa\@tempb\fi% select the smaller of both
  \ifdim\@tempa\p@<\p@\scalebox{\@tempa}{\usebox\pandoc@box}%
  \else\usebox{\pandoc@box}%
  \fi%
}
\def\fps@figure{htbp}
\makeatother

\usepackage[]{natbib}
\usepackage{amsmath,amssymb,bm}
\usepackage{authblk}

\author[1,2]{Michael Bücker}
\affil[1]{Münster School of Business, FH Münster University of Applied Sciences}
\affil[2]{Institute for Process Management and Digital Transformation (IPD), FH Münster University of Applied Sciences}
\affil[ ]{Correspondence: \texttt{michael.buecker@fh-muenster.de}}
\makeatletter\let\author\@gobble\makeatother
\makeatletter
\@ifpackageloaded{caption}{}{\usepackage{caption}}
\AtBeginDocument{%
\ifdefined\contentsname
  \renewcommand*\contentsname{Table of contents}
\else
  \newcommand\contentsname{Table of contents}
\fi
\ifdefined\listfigurename
  \renewcommand*\listfigurename{List of Figures}
\else
  \newcommand\listfigurename{List of Figures}
\fi
\ifdefined\listtablename
  \renewcommand*\listtablename{List of Tables}
\else
  \newcommand\listtablename{List of Tables}
\fi
\ifdefined\figurename
  \renewcommand*\figurename{Figure}
\else
  \newcommand\figurename{Figure}
\fi
\ifdefined\tablename
  \renewcommand*\tablename{Table}
\else
  \newcommand\tablename{Table}
\fi
}
\@ifpackageloaded{float}{}{\usepackage{float}}
\floatstyle{ruled}
\@ifundefined{c@chapter}{\newfloat{codelisting}{h}{lop}}{\newfloat{codelisting}{h}{lop}[chapter]}
\floatname{codelisting}{Listing}

\usepackage{amsthm}
\theoremstyle{plain}
\newtheorem{theorem}{Theorem}[section]
\theoremstyle{plain}
\newtheorem{corollary}{Corollary}[section]
\theoremstyle{plain}
\newtheorem{proposition}{Proposition}[section]
\theoremstyle{definition}
\newtheorem{definition}{Condition}[section]
\theoremstyle{plain}
\newtheorem{lemma}{Lemma}[section]
\theoremstyle{remark}
\AtBeginDocument{}

\makeatother
\makeatletter
\@ifpackageloaded{caption}{}{\usepackage{caption}}
\@ifpackageloaded{subcaption}{}{\usepackage{subcaption}}
\makeatother
\usepackage{bookmark}
\IfFileExists{xurl.sty}{\usepackage{xurl}}{} % add URL line breaks if available
\hypersetup{
  pdftitle={The price of a weak shadow variable: inference under nonignorable missingness as relevance vanishes},
  pdfkeywords={Effective sample size, Exponential
tilt, Identification-robust inference, Missing not at
random, Nonresponse instrument},
  colorlinks=true,
  linkcolor={blue},
  filecolor={Maroon},
  citecolor={Blue},
  urlcolor={Blue},
  pdfcreator={LaTeX via pandoc}}

\title{The price of a weak shadow variable: inference under nonignorable
missingness as relevance vanishes}
\author{}
\date{2026-10-16}
\begin{document}
\maketitle
\begin{abstract}
When an outcome is missing not at random, identification is commonly
restored by a shadow variable: a covariate that predicts the outcome but
is excluded from the response mechanism. The literature treats the first
requirement, relevance, as a yes-or-no condition. We make it continuous,
parametrizing it by a scalar that scales the shadow variable's entire
contribution to the moment conditions, and ask what inference costs as
that scalar vanishes. One functional of the inverse weight governs the
answer. We characterize it exactly, show that it fixes the effective
sample size, and use it to sharpen the rates of the estimator that
avoids the weight. The characterization is concrete. Whether the
functional saturates or diverges turns not on boundedness of the outcome
but on whether the tilting variable has an atom at its essential
supremum: a binary outcome saturates, a uniform one does not. It also
bounds how far a scanned set reaches, and that reach grows only as the
square root of the logarithm of the sample size, so a thousandfold
increase buys one unit of tilt while the set to be scanned widens. A
reanalyzed credit-scoring study shows the cost: a characteristic an
optimizing selector uses cannot satisfy exclusion.

\textbf{Keywords:} Effective sample size · Exponential tilt ·
Identification-robust inference · Missing not at random · Nonresponse
instrument.
\end{abstract}

\section{Introduction}\label{sec-intro}

Let \(Y\) be an outcome observed only when \(R=1\), with covariates
\(X\) always observed, and suppose the response probability depends on
\(Y\) itself. Such missingness is not ignorable, and without further
structure neither the distribution of \(Y\) nor the parameters of a
model for it are identified. The device that has become standard is a
\emph{shadow variable}: a further covariate \(Z\), always observed, that
predicts \(Y\) but is conditionally independent of \(R\) given \((Y,X)\)
\citep{TangLittleRaghunathan2003, dHaultfoeuille2010, WangShaoKim2014, MiaoTchetgen2016}.
Under completeness conditions the model is then point identified, and a
range of estimators is available: semiparametric likelihood
\citep{QinLeungShao2002, ChenFang2019}, pseudo-likelihood
\citep{ZhaoShao2015}, and estimators that avoid the inverse weight
altogether \citep{LiMiaoTchetgen2022}.

In all of this the shadow variable's predictive power enters as a
condition to be assumed: completeness, or non-degeneracy of a
coefficient, or a rank condition. It is either satisfied or it is not.
The point is clearest where the shadow variable is not given but
searched for. \citet{ChenShaoFang2021}, \citet{ZhaoWangShao2021} and
\citet{ChenShao2025} select one from a set of candidate covariates, and
what their criteria detect is the exclusion restriction; relevance
enters as the companion requirement that the candidate be \emph{related}
to the outcome, which a candidate either is or is not.
\citet{ChenShaoFang2021} are explicit about the consequence: their
criterion takes the same limiting value zero at a candidate that is
excluded and relevant and at one that is excluded and irrelevant, so
what it consistently recovers is the set of covariates satisfying
exclusion, and relevance must be assumed away by hypothesis. One recent
line drops that condition rather than grading it:
\citet{ChenSimchiLeviXiong2026} call a shadow variable \emph{weak} when
completeness is not imposed, and obtain partial identification with
sharp bounds. The weakness studied here is different in kind. The
conditions hold and the parameter stays point identified for every
positive relevance; what vanishes is not identification but the
information the data carry about it. Yet in applications the condition
is satisfied by margins that vary enormously, and the question of what
happens in between is not addressed. The purpose of this paper is to
make that question precise and answer it.

That the question is worth asking is not a new observation.
\citet{NakanoHoshino2026} study a scalar mean under a linear shadow
model and obtain confidence sets valid along sequences in which the
relevance drifts to zero: as relevance vanishes their sets come out
unbounded, which is what a valid set must do once the data cease to
restrict the parameter. \citet{Dufour1997} had already shown, in
generality, that any valid confidence set for a parameter that is
unidentified on part of the parameter space must be unbounded with
probability close to its nominal level, and that Wald sets built from
asymptotic standard errors have coverage zero. The observation that
relevance may be weak is due to \citet{NakanoHoshino2026}. What we add
is an account of what weak relevance does to each component of the
machinery: to the weights, to the conditioning of the sieve, to the
feasibility of empirical likelihood, and to the arithmetic.

Three findings organize the paper. First, in the moment system that
identifies the model the entire contribution of the shadow variable
factorizes, at every point of the identified set, as a fixed function
multiplied by a scalar \(\rho\) (Section~\ref{sec-curve}); relevance is
therefore a global rather than a local property, and everything can be
followed as a function of it. Second, the second moment of the inverse
weight decomposes exactly, with all dependence on the response model
carried by a \(\chi^{2}\)-divergence \(\Lambda\) of an exponential tilt
(Section~\ref{sec-tilt}); an integral representation of \(\log\Lambda\)
yields a sharp dichotomy: \(\Lambda\) is capped when the tilting
variable has an atom at its essential supremum, and diverges along the
identified set when it does not. Boundedness of the outcome is not the
criterion---a uniform outcome is bounded and still diverges---though a
binary outcome, having an atom, falls on the capped side. Third,
\(\Lambda\) determines the effective sample size exactly, and thereby
how far a scanned confidence set can actually be computed. That reach
grows like \((\log n)^{1/2}\) and is free of \(\rho\), whereas the set
that must be scanned widens without limit as \(\rho\) falls
(Section~\ref{sec-reach}). The procedure prescribed to reveal
nonidentification therefore conceals it, and gives no sign of having
done so.

The remaining sections develop the estimation-side consequences. Working
with the representer construction of \citet{LiMiaoTchetgen2022}, which
avoids the inverse weight, Section~\ref{sec-representer} shows that a
cancellation peculiar to the shadow-variable construction makes several
rates one power of \(\rho\) smaller than norm bounds suggest, and that
the resulting threshold for the regularization coincides with the
independent condition for the design matrix to be well conditioned---a
coherence that the cruder bound does not have.
Section~\ref{sec-diagnostics} separates two failures that are easily
confused: feasibility of the empirical likelihood problem is governed by
the effective sample size and is flat in the strength of identification,
across a tenfold range of the latter. Section~\ref{sec-simulation}
reports a simulation study, and Section~\ref{sec-application} reanalyzes
a published credit-scoring study, reproducing it exactly before
diagnosing it, and shows that in that application---and in a second,
much larger one---no usable shadow variable existed.
Section~\ref{sec-selection} explains why that had to be so whenever the
response is generated by an agent scoring the same characteristics for
the same outcome, and where a valid shadow variable can still be sought.
One further term, \emph{exponential tilting}, names a different object
in \citet{Caner2010}; Section~\ref{sec-terminology} distinguishes the
two.

\section{Model, identification, and a relevance
parameter}\label{sec-curve}

\subsection{The model and its moment conditions}\label{sec-moments}

Let \(\beta_0\) solve \(E\{\psi(Y,X;\beta)\}=0\) for a known estimating
function \(\psi\), and suppose \(Y\) is observed only when \(R=1\), so
that the observation is \(O=(X,Z,R,RY)\), with \(X\) a vector of
covariates, \(Z\) the shadow variable and \(R\) the response indicator.
Write the working propensity as
\begin{equation}\phantomsection\label{eq-propensity}{
  \pi_\gamma(y,x)=\operatorname{expit}(\gamma_0+\gamma_X^{\mathrm{T}}x+\gamma_Y y),
}\end{equation} where \(\operatorname{expit}(u)=(1+e^{-u})^{-1}\) is the
inverse of the logit and \(\gamma=(\gamma_0,\gamma_X,\gamma_Y)\)
collects its coefficients. Write \(\pi_0\) for the propensity at the
true value of \(\gamma\)---the subscript marks the truth, not the
intercept coordinate \(\gamma_0\)---so that
\(\operatorname{pr}(R=1\mid Y,X,Z)=\pi_0(Y,X)\): the response
probability depends on the outcome itself, and not on the shadow
variable. The exclusion restriction defining a shadow variable is that
\(Z\) does not enter the response mechanism,
\begin{equation}\phantomsection\label{eq-exclusion}{
  Z\perp\!\!\!\perp R\mid(Y,X).
}\end{equation} As a conditional independence, (\ref{eq-exclusion})
imposes infinitely many restrictions. To estimate from it, one selects
finitely many test functions \(h(Z,X)\) and requires the weighting
residual \(R/\pi_\gamma-1\) to be uncorrelated with each of them, which
yields the stacked moment condition
\begin{equation}\phantomsection\label{eq-moments}{
  E\{g(O;\theta)\}=0,\qquad
  g(O;\theta)=\begin{pmatrix}
    (R/\pi_\gamma)\,\psi(Y,X;\beta)\\[2pt]
    (R/\pi_\gamma-1)\,h(Z,X)
  \end{pmatrix},
  \qquad \theta=(\beta,\gamma).
}\end{equation} The moment system (\ref{eq-moments}) is the construction
of \citet{QinLeungShao2002}, extended to a nonresponse instrument by
\citet{WangShaoKim2014} and \citet{ChenFang2019}; \citet{ZhangEtAl2025}
give a recent alternative in which the identifying restriction is a
multiplicative non-interaction in the selection probability rather than
a completeness or rank condition on the stacked system.
\citet{BeppuMorikawa2024} show that for a categorical instrument the two
conditions are equivalent when the outcome is categorical too, but that
completeness can fail even in simple models once one of the two is
continuous; for that case they give a sufficient condition that can be
checked on the observed data.

Each coordinate of \(h\) contributes one equation. With
\(h(Z,X)=(1,X^{\mathrm{T}},Z^{\mathrm{T}})^{\mathrm{T}}\) the second
block supplies \(1+\dim X+\dim Z\) of them against the \(2+\dim X\)
unknowns in \(\gamma\), so the system is just identified when \(Z\) is
scalar, and the last coordinate---the one involving \(Z\)---is the only
one that distinguishes dependence of \(R\) on \(Y\) from dependence of
\(R\) on \(E(Y\mid X)\). That is what the shadow variable is for, in the
sense of \citet{WangShaoKim2014} and \citet{MiaoTchetgen2016}: a
variable excluded from the response model but not from the outcome
regression.

\subsection{The identified curve}\label{sec-curve-sub}

Drop the \(Z\)-coordinate of (\ref{eq-moments}). Exactly one scalar
equation is removed from a just-identified system, so the remaining
equations do not determine \(\theta\): they leave a one-dimensional set.

Write \(F(\beta,\gamma_0,\gamma_X;s)\) for the non-\(Z\) moments of
(\ref{eq-moments}) evaluated at \(\gamma_Y=s\), and \(J(s)\) for its
Jacobian in \((\beta,\gamma_0,\gamma_X)\).

\begin{proposition}[The identified
curve]\protect\hypertarget{prp-curve}{}\label{prp-curve}

Suppose \(F\) is continuously differentiable and \(J(s)\) is nonsingular
on an interval \(\mathcal S\) containing the truth. Then there is a
continuously differentiable curve
\(s\mapsto\theta(s)=\{\beta(s),\gamma_0(s),\gamma_X(s),s\}\),
\(s\in\mathcal S\), unique in a neighborhood of the truth, along which
all moments of (\ref{eq-moments}) except the \(Z\)-coordinate vanish
identically.

\end{proposition}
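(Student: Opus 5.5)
The plan is to apply the implicit function theorem to \(F\), treating \(s=\gamma_Y\) as the parameter and \(\eta=(\beta,\gamma_0,\gamma_X)\) as the unknown. First I will check the dimensions, since that is what makes the implicit function theorem applicable. The first block of (\ref{eq-moments}) has \(\dim\beta\) equations, because \(\psi\) identifies \(\beta\). With \(h=(1,X^{\mathrm{T}},Z^{\mathrm{T}})^{\mathrm{T}}\) and \(Z\) scalar, deleting the \(Z\)-coordinate leaves \(1+\dim X\) equations in the second block. So \(F\) maps a \((\dim\beta+1+\dim X)\)-dimensional \(\eta\), together with the scalar \(s\), into \(\mathbb R^{\dim\beta+1+\dim X}\), and \(J(s)\) is square. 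Let \(\theta_0=(\eta_0,s_0)\) be the truth. Then \(F(\eta_0;s_0)=0\), because all of (\ref{eq-moments}) vanishes at the truth: \(E(R/\pi_0\mid Y,X,Z)=1\) under (\ref{eq-exclusion}) and the correct propensity.

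The local step is then standard. Since \(F\) is \(C^1\) jointly in \((\eta,s)\) and \(J(s_0)=\partial_\eta F(\eta_0;s_0)\) is nonsingular, the implicit function theorem gives an open interval \(U\ni s_0\), a neighbourhood \(V\ni\eta_0\), and a unique \(C^1\) map \(\eta(\cdot)\colon U\to V\) satisfying \(F\{\eta(s);s\}=0\). Its derivative is \(\eta'(s)=-J(s)^{-1}\partial_sF\{\eta(s);s\}\). Setting \(\theta(s)=\{\eta(s),s\}\) gives the curve, and the uniqueness clause is exactly the local uniqueness asserted in the proposition.

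The remaining step, which I expect to be the main obstacle, is extending the curve from \(U\) to all of \(\mathcal S\). I would use a continuation argument. Let \(s^\ast\) be the supremum of the points to which a \(C^1\) solution branch through \(\theta_0\) extends. At any \(s\) where the branch is defined, the hypothesis that \(J(s)\) is nonsingular lets the implicit function theorem be reapplied. Local uniqueness then makes the patches agree on overlaps, so they glue into one \(C^1\) curve.

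To push past \(s^\ast\), I need \(\eta(s)\) to stay in a compact set as \(s\uparrow s^\ast\), so that it has a limit point \(\eta^\ast\). Continuity of \(F\) gives \(F(\eta^\ast;s^\ast)=0\), and nonsingularity of \(J(s^\ast)\) allows continuation beyond \(s^\ast\). This boundedness is not automatic. I would read the hypothesis "\(J(s)\) nonsingular on \(\mathcal S\)" as being evaluated along the solution branch. I would also add, or argue from the structure of \(F\), that \(\|J(s)^{-1}\partial_sF\|\) is locally bounded along the branch. Then \(\eta'\) is bounded and \(\eta\) cannot escape in finite \(s\).

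Failing that, I would weaken the claim and state it on the maximal subinterval of \(\mathcal S\) containing the truth on which the branch stays bounded. That still suffices for everything the paper builds on the curve. The same argument applies for \(s\) decreasing.
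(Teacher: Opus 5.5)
Your argument is the paper's: the implicit function theorem at the truth, followed by continuation along \(\mathcal S\) via an open--closed argument on the set of \(s\) reached. Your caution at the closedness step is warranted. The paper asserts closedness only ``because \(J\) is nonsingular at its boundary points'', which cannot stop the branch from escaping to infinity: the scalar \(F(\eta;s)=e^{-\eta}-(1-s)\) has \(\partial_\eta F\neq0\) everywhere, yet its branch \(\eta(s)=-\log(1-s)\) through \((0,0)\) blows up at \(s=1\). Your a priori bound on \(\eta'\), or your restriction to the maximal subinterval on which the branch stays bounded, is exactly what the global claim needs to be rigorous.
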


\begin{proof}
\(F\) vanishes at the truth by construction. The implicit function
theorem applied at that point gives a unique continuously differentiable
solution \((\beta,\gamma_0,\gamma_X)(s)\) locally, with derivative
\(-J(s)^{-1}\partial F/\partial s\); nonsingularity of \(J\) throughout
\(\mathcal S\) extends the solution along \(\mathcal S\) by
continuation---the set of \(s\) reached is open by the same theorem and
closed because \(J\) is nonsingular at its boundary points, hence all of
the interval \(\mathcal S\)---on which the moments remain finite by
Condition \ref{def-mgf}.
\end{proof}

The condition on \(J\) is the statement that the remaining parameters
are identified once \(\gamma_Y\) is fixed, which is what those moments
are for; it is not an assumption about \(Z\). Nor is the curve an
artefact of a linearization: tracing it numerically by a Newton
continuation in \(s\) holds the four non-\(Z\) moments below
\(8\times10^{-15}\) at all 201 points spanning \(\gamma_Y\in[-6,4]\),
which is the range scanned in Section~\ref{sec-simulation}. Its
existence is why inference must proceed by scanning \(s\) rather than by
solving for a point, and why the relevant object is a set.

\subsection{Relevance as a continuous parameter}\label{sec-relevance}

That the identifying power of a shadow variable may be weak, and that
one may study it through a sequence drifting to zero, is due to
\citet{NakanoHoshino2026}, who work with a scalar mean and a linear
shadow model and obtain confidence sets by inverting an Anderson--Rubin
statistic. We adopt their point of view. What the present construction
adds is that in the moment system (\ref{eq-moments}) the contribution of
\(Z\) factorizes exactly.

Take \begin{equation}\phantomsection\label{eq-relevance}{
  Z=\rho S+(1-\rho^{2})^{1/2}\nu
}\end{equation} with \(S\) the component of the outcome noise that \(Z\)
tracks, and \(\nu\) the part of \(Z\) that carries no information about
the outcome, independent of \((X,Y,R)\). Both have mean zero and unit
variance and are mutually uncorrelated, so that
\(\operatorname{var}(Z)=1\) for every \(\rho\) and
\(\rho=\operatorname{corr}(Z,S)\). The weights are squared in that sum
because variances are: lowering \(\rho\) rotates \(Z\) away from \(S\)
rather than shrinking it, and \(\rho\in[0,1]\) alone varies the
relevance.

Write \(g_Z\) for the \(Z\)-coordinate of \(g\) in (\ref{eq-moments}),
the one moment the others leave undetermined.

\begin{lemma}[The identifying signal factorizes through the
relevance]\protect\hypertarget{lem-signal}{}\label{lem-signal}

Along the curve of Proposition~\ref{prp-curve},
\begin{equation}\phantomsection\label{eq-factorisation}{
  E\{g_Z(\theta(s))\}=\rho\,\varphi(s)
}\end{equation} for a function \(\varphi\) not depending on \(\rho\).

\end{lemma}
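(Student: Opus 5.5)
The plan is to substitute the decomposition (\ref{eq-relevance}) directly into the $Z$-coordinate and use linearity of expectation. Write $w_s=R/\pi_{\gamma(s)}(Y,X)-1$ for the weighting residual evaluated along the curve of Proposition~\ref{prp-curve}. With $h(Z,X)=(1,X^{\mathrm T},Z^{\mathrm T})^{\mathrm T}$ and $Z$ scalar,
\[
  E\{g_Z(\theta(s))\}=E(w_s Z)=\rho\,E(w_s S)+(1-\rho^{2})^{1/2}E(w_s\nu).
\]
The task is then twofold: show that the second term vanishes, and show that $\varphi(s):=E(w_sS)$ does not depend on $\rho$.

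For the second term, $w_s$ is a function of $(R,Y,X)$ alone, since $\pi_\gamma$ involves only $(y,x)$ and the curve $\theta(s)$ will be argued below not to depend on $\rho$. Because $\nu$ is independent of $(X,Y,R)$ with mean zero, $E(w_s\nu)=E(w_s)E(\nu)=0$, provided $E|w_s|<\infty$. That integrability is supplied by the moment condition already invoked in the proof of Proposition~\ref{prp-curve} (Condition~\ref{def-mgf}).

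The step needing care is the claim that nothing else carries $\rho$. First, the joint law of $(X,Y,S,R)$ must not involve $\rho$. Here $S$ is a component of the outcome noise, so it is a fixed function of, or jointly distributed with, $(Y,X)$ independently of $\rho$. The response mechanism $\pi_0(Y,X)$ does not involve $Z$, and hence does not involve $\rho$; it is the exclusion restriction (\ref{eq-exclusion}) that guarantees $R$ depends on $Z$ only through $(Y,X)$. Second, the curve $\theta(s)$ must be $\rho$-free. It is defined by the non-$Z$ moments $F$, namely $E\{(R/\pi_\gamma)\psi\}$ and $E\{(R/\pi_\gamma-1)(1,X^{\mathrm T})^{\mathrm T}\}$. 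These are expectations of functions of $(R,Y,X)$ only, so $F$, its Jacobian $J(s)$, and the implicit-function solution are all independent of $\rho$, including the uniqueness neighbourhood and the continuation along $\mathcal S$. Given both facts, $\varphi(s)=E[\{R/\pi_{\gamma(s)}(Y,X)-1\}S]$ is a functional of the $\rho$-free law of $(R,Y,X,S)$ evaluated at a $\rho$-free curve, which establishes (\ref{eq-factorisation}).

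The main obstacle is making precise that the data-generating process of $(X,Y,S,R)$ is held fixed as $\rho$ varies, and that $\nu$ is independent of $(X,Y,R)$ jointly, not merely of $S$. The statement of (\ref{eq-relevance}) asserts the independence of $\nu$ from $(X,Y,R)$. I would additionally state explicitly that the law of $(X,Y,S)$ and $\pi_0$ are fixed, which is the intended reading of "varying only the relevance". One further point: if $h$ were allowed to contain nonlinear functions of $Z$, the factorization would fail. The lemma therefore relies on $Z$ entering $g_Z$ linearly, and I would note this as the role of the chosen test function.
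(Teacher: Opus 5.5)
Your proposal is correct and is essentially the paper's proof: substitute (\ref{eq-relevance}) into $g_Z$, drop the $\nu$ term by independence from $(X,Y,R)$ and mean zero, and conclude that $\varphi(s)=E[\{R/\pi_{\gamma(s)}-1\}S]$ is $\rho$-free because the law of $(X,Y,R,S)$ is, and because the curve is fixed by moments not involving $Z$. Your integrability remark via Condition~\ref{def-mgf} and your note that linearity of $h$ in $Z$ is essential are sound additions; the one loose step is attributing $\rho$-freeness of the law of $(X,Y,S,R)$ to (\ref{eq-exclusion}), which concerns $Z$. What is actually needed is that the law of $R$ given $(X,Y,S)$ is fixed, which is immediate when $S$ is a function of $(X,Y)$ and is otherwise part of the construction, as the paper simply asserts.
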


\begin{proof}
\(g_Z=(R/\pi_\gamma-1)Z\) and \(E\{(R/\pi_\gamma-1)\nu\}=0\) because
\(\nu\) is independent of \((X,Y,R)\) with mean zero; the remaining term
is \(\rho E\{(R/\pi_\gamma-1)S\}\), which does not involve \(\rho\). Two
things are needed for that last clause: the joint law of \((X,Y,R,S)\)
is free of \(\rho\), since (\ref{eq-relevance}) fixes only the
conditional law of \(Z\); and so is the curve, since the moments that
determine it do not involve \(Z\).
\end{proof}

This is a property of the parametrization rather than a discovery about
the model. (\ref{eq-relevance}) places the whole of \(Z\)'s dependence
on \((X,Y)\) in the term \(\rho S\), and the proof does no more than
verify that the orthogonal part contributes nothing; a different
parametrization of relevance would not factorize. It is recorded because
it is the device the rest of the paper uses. What
(\ref{eq-factorisation}) buys is that the same scalar scales the signal
at \emph{every} point of the identified set at once, so relevance is a
global property of the problem rather than a local property of a
particular point, and every quantity downstream can be followed as a
function of it. Two consequences are immediate: the set of \(s\) not
rejected widens without limit as \(\rho\downarrow0\), and, writing
\(G_{\gamma\gamma}=\partial E\{g(O;\theta)\}/\partial\gamma^{\mathrm{T}}\)
for the block of the moment Jacobian in the response parameter,
\(\sigma_{\min}(G_{\gamma\gamma})\) vanishes linearly in \(\rho\): one
column of \(G_{\gamma\gamma}\) carries the factor \(\rho\) and the
others do not, so the smallest singular value is proportional to
\(\rho\) to leading order and zero at \(\rho=0\).

A qualification belongs here, because (\ref{eq-factorisation}) separates
the two ingredients of identification and only one of them is \(\rho\).
For \(\rho>0\) the identified set is \(\{s:\varphi(s)=0\}\), so point
identification requires \(\varphi\) to vanish only at the truth; the
relevance governs how sharply a zero can be located, not how many there
are. Where \(\varphi\) has a second zero the parameter is set-identified
for every \(\rho\), including \(\rho=1\), and no amount of relevance
repairs it. All statements below about small \(\rho\) presume a unique
zero, which holds in the designs of Section~\ref{sec-simulation} and is
testable in the sense of Section~\ref{sec-diagnostics}: a \(\varphi\)
with two zeros shows itself as a scanned set with two separated
components.

\subsection{Two terms used differently elsewhere}\label{sec-terminology}

Two collisions of terminology should be disposed of before proceeding.

The first was noted in Section~\ref{sec-intro}, and with
(\ref{eq-relevance}) in hand it can be put exactly. The weakness of
\citet{ChenSimchiLeviXiong2026} is a failure of completeness, and what
they report are identified regions: sets of parameter values the data
cannot separate even in the limit. The weakness here is a small
\(\rho\). Completeness holds, the parameter is point identified for
every \(\rho>0\), and what is reported is a confidence set for that
point---one the data may locate very poorly, but a confidence set and
not an identified region. The distinction matters for what the sets
mean: theirs do not shrink with the sample, ours do, but only so slowly
that Section~\ref{sec-reach} becomes the binding constraint.

\emph{Exponential tilting} in \citet{Caner2010} names the member of the
generalized empirical likelihood family obtained by choosing the
exponential divergence in the estimating criterion, and that paper
studies it under weak instruments. The tilt in Section~\ref{sec-tilt} is
a tilt of the law of the outcome, induced by the response probability,
and enters through the weights rather than the criterion. The two
objects share a name and nothing else.

\section{The cost of the response tilt}\label{sec-tilt}

\subsection{Set-up and notation}\label{sec-tilt-setup}

Let \(O_1,\dots,O_n\) be independent copies of the observation \(O\) of
Section~\ref{sec-curve}, and write \(P\) for their common law. Write
\(w\) for the data-dependent part of the linear predictor in
(\ref{eq-propensity}), \begin{equation}\phantomsection\label{eq-w}{
  w:=\gamma_X^{\mathrm{T}}X+\gamma_Y Y ,
}\end{equation} so that \(\pi_\gamma=\{1+e^{-(\gamma_0+w)}\}^{-1}\) and
the intercept is carried separately. Throughout, \(\gamma_Y\) is the
parameter that indexes departure from ignorability: \(\gamma_Y=0\) is
missingness at random, and \(|\gamma_Y|\) measures its strength. The
identified set traced in Section~\ref{sec-curve} is a curve along which
\(\gamma_Y\) varies, so every quantity below is to be read as a function
of position on that curve.

\begin{definition}[Overlap]\protect\hypertarget{def-overlap}{}\label{def-overlap}

\(0<\pi_0(Y,X)<1\) almost surely; the response rate
\(E(\pi_0)=\operatorname{pr}(R=1)\) is then interior.

\end{definition}

\begin{definition}[Moments]\protect\hypertarget{def-mgf}{}\label{def-mgf}

\(E(\pi_0e^{-w})<\infty\) and \(E(\pi_0e^{-2w})<\infty\) at every
\(\gamma\) on the portion of the curve of Proposition~\ref{prp-curve}
that is scanned.

\end{definition}

Condition~\ref{def-mgf} is not innocuous, and
Section~\ref{sec-dichotomy} shows that it is precisely what fails when
\(-w\) has no atom at its essential supremum.

\subsection{An exact decomposition}\label{sec-decomposition}

Inverse probability weighting requires \(E(\pi_0/\pi_\gamma)=1\), which
pins down the intercept.

\begin{lemma}[Closed form for the
intercept]\protect\hypertarget{lem-intercept}{}\label{lem-intercept}

Under Conditions \ref{def-overlap} and \ref{def-mgf},
\(E(\pi_0/\pi_\gamma)=1\) if and only if
\(\gamma_0=\log E(\pi_0e^{-w})-\log\{\operatorname{pr}(R=0)\}\).

\end{lemma}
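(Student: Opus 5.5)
Write \(1/\pi_\gamma=1+e^{-(\gamma_0+w)}\), which is exact for the logistic propensity (\ref{eq-propensity}) with \(w\) as in (\ref{eq-w}). Then
\[
  \frac{\pi_0}{\pi_\gamma}=\pi_0+e^{-\gamma_0}\,\pi_0e^{-w}.
\]
The plan is to take expectations of this identity and solve for \(\gamma_0\). The whole argument is a one-line computation once each term is shown to be integrable and the resulting equation is shown to be solvable.

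First, integrability. \(E(\pi_0)=\operatorname{pr}(R=1)\in(0,1)\) by Condition~\ref{def-overlap}, using \(\operatorname{pr}(R=1\mid Y,X)=\pi_0(Y,X)\). The term \(E(\pi_0e^{-w})\) is finite by Condition~\ref{def-mgf}. It is also strictly positive, because \(\pi_0>0\) almost surely and \(e^{-w}>0\). Hence
\[
  E(\pi_0/\pi_\gamma)=\operatorname{pr}(R=1)+e^{-\gamma_0}E(\pi_0e^{-w}),
\]
where both terms are finite and the second is nonnegative. Linearity of expectation applies without any interchange issues.

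Second, solve the equation. Setting the expression equal to \(1\) gives
\[
  e^{-\gamma_0}E(\pi_0e^{-w})=1-\operatorname{pr}(R=1)=\operatorname{pr}(R=0),
\]
and \(\operatorname{pr}(R=0)>0\) by Condition~\ref{def-overlap}. Since both sides are strictly positive and finite, taking logarithms gives
\[
  \gamma_0=\log E(\pi_0e^{-w})-\log\operatorname{pr}(R=0).
\]
Conversely, substituting this value back makes the sum equal to one. Uniqueness follows because \(\gamma_0\mapsto e^{-\gamma_0}c\) with \(c>0\) is strictly monotone, so the "if and only if" holds.

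There is one point to address carefully. The quantity \(w\) depends on \((\gamma_X,\gamma_Y)\) but not on \(\gamma_0\), so \(E(\pi_0e^{-w})\) is a constant with respect to \(\gamma_0\). The closed form is therefore explicit and not circular; the paper's separation of the intercept from \(w\) is exactly what makes this work. This is the only place where some subtlety is needed: verifying strict positivity and finiteness so that the logarithm is defined. Beyond that, no real obstacle is expected.
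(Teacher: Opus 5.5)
Your proof is correct and follows the paper's own argument essentially verbatim. You expand $\pi_0/\pi_\gamma=\pi_0+e^{-\gamma_0}\pi_0e^{-w}$, take expectations using $E(\pi_0)=\operatorname{pr}(R=1)$, and solve, with Condition~\ref{def-overlap} supplying the strict positivity of both $E(\pi_0e^{-w})$ and $\operatorname{pr}(R=0)$ that makes the logarithm available, and strict monotonicity of $e^{-\gamma_0}$ giving uniqueness.
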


\begin{proof}
\(\pi_0/\pi_\gamma=\pi_0+e^{-\gamma_0}\pi_0e^{-w}\); take expectations
and use \(E(\pi_0)=\operatorname{pr}(R=1)\), so that
\(e^{-\gamma_0}E(\pi_0e^{-w})=\operatorname{pr}(R=0)\), and solve for
\(\gamma_0\). Both sides are strictly positive by Condition
\ref{def-overlap}, which gives \(E(\pi_0e^{-w})>0\) from \(\pi_0>0\) and
\(\operatorname{pr}(R=0)>0\) from \(\pi_0<1\), so the logarithm is
available; \(e^{-\gamma_0}\) is strictly monotone, so the solution is
unique.
\end{proof}

The second moment of the weight then admits an exact decomposition, in
which all dependence on \(\gamma\) beyond the response rate is carried
by a single scalar functional.

\begin{theorem}[Exact decomposition of the inverse-weight second
moment]\protect\hypertarget{thm-decomposition}{}\label{thm-decomposition}

Under Conditions \ref{def-overlap} and \ref{def-mgf} and the
normalization of Lemma~\ref{lem-intercept},
\begin{equation}\phantomsection\label{eq-decomposition}{
  E\!\left(\frac{R}{\pi_\gamma^{2}}\right)
  = E(\pi_0)+2\operatorname{pr}(R=0)+\operatorname{pr}(R=0)^{2}\,\Lambda(\gamma),
  \qquad
  \Lambda(\gamma):=\frac{E(\pi_0e^{-2w})}{\{E(\pi_0e^{-w})\}^{2}} .
}\end{equation}

\end{theorem}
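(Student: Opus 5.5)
The argument is a direct computation of the same kind as the proof of Lemma~\ref{lem-intercept}. The plan is to condition on \((Y,X)\), expand \(1/\pi_\gamma^{2}\) as a polynomial in \(e^{-\gamma_0}e^{-w}\), and then remove \(e^{-\gamma_0}\) using the normalization.

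\emph{Step 1: condition on \((Y,X)\).} Since \(\operatorname{pr}(R=1\mid Y,X,Z)=\pi_0(Y,X)\) and \(\pi_\gamma\) depends only on \((Y,X)\), the tower property gives
\[
E(R/\pi_\gamma^{2})=E(\pi_0/\pi_\gamma^{2}).
\]

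\emph{Step 2: expand the inverse weight.} From \(1/\pi_\gamma=1+e^{-\gamma_0}e^{-w}\) we get
\[
\pi_0/\pi_\gamma^{2}=\pi_0+2e^{-\gamma_0}\pi_0e^{-w}+e^{-2\gamma_0}\pi_0e^{-2w}.
\]
Each term is integrable by Condition~\ref{def-mgf}, so expectation may be taken term by term. This yields
\[
E(\pi_0)+2e^{-\gamma_0}E(\pi_0e^{-w})+e^{-2\gamma_0}E(\pi_0e^{-2w}).
\]

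\emph{Step 3: substitute the intercept.} Lemma~\ref{lem-intercept} says exactly that \(e^{-\gamma_0}=\operatorname{pr}(R=0)/E(\pi_0e^{-w})\); its proof showed that both quantities are strictly positive under Condition~\ref{def-overlap}, so the division is legitimate. Substituting:
\begin{itemize}
\item the middle term becomes \(2\operatorname{pr}(R=0)\);
\item the last term becomes \(\operatorname{pr}(R=0)^{2}\,E(\pi_0e^{-2w})/\{E(\pi_0e^{-w})\}^{2}=\operatorname{pr}(R=0)^{2}\Lambda(\gamma)\).
\end{itemize}
This is (\ref{eq-decomposition}).

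There is no real obstacle; the only point needing care is justification. Finiteness of the two exponential moments in Condition~\ref{def-mgf} is what allows splitting the expectation. Without it the left side could be \(+\infty\) while the decomposition is merely formal, which is precisely the divergent case discussed in Section~\ref{sec-dichotomy}.
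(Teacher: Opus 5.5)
Your proposal is correct and follows the paper's proof: condition on $(Y,X)$ to replace $R$ by $\pi_0$, expand $\pi_0\{1+e^{-(\gamma_0+w)}\}^{2}$ term by term, and eliminate $e^{-\gamma_0}$ using Lemma~\ref{lem-intercept}, with Condition~\ref{def-mgf} making the third term finite. One small refinement: the three terms are nonnegative, so splitting the expectation needs no integrability, and Condition~\ref{def-mgf} is what makes the right-hand side finite.
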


\begin{proof}
Since \(\pi_\gamma\) is a function of \((Y,X)\), conditioning gives
\(E(R/\pi_\gamma^{2})=E(\pi_0/\pi_\gamma^{2})\). Expand
\(E\{\pi_0(1+e^{-(\gamma_0+w)})^2\}\) and substitute
Lemma~\ref{lem-intercept}; the third term is finite by Condition
\ref{def-mgf}.
\end{proof}

That \(\Lambda\) is built from moment generating functions is where this
connects to a known warning. \citet{LiQinLiu2023} observe that the
inverse-weighted estimating equation can be written so that its leading
term estimates the moment generating function of \(Y\mid X,R=1\), and
attribute the instability of inverse weighting to the difficulty of
estimating such a function; their remedy is to avoid forming it. What
Theorem~\ref{thm-decomposition} adds is that the dependence is exact
rather than a difficulty of estimation: the second moment of the weight
\emph{is} a ratio of two such functions, and Section~\ref{sec-dichotomy}
and Section~\ref{sec-reach} show how fast it grows and what that costs.

The three terms have a reading. Since
\(1/\pi_\gamma=1+e^{-(\gamma_0+w)}\), a respondent enters the weighted
sum once on its own account and once more for the non-respondents it
stands in for. Squaring separates the two roles: \(E(\pi_0)\) is the own
contribution, and the remaining two terms are the cost of the proxy
role---all of whose dependence on \(\gamma\), beyond the response rate,
sits in \(\Lambda\).

Two consequences are worth stating at once. First, \(\Lambda\) is a
\(\chi^{2}\)-divergence \emph{rescaled by the response rate}. Let \(Q\)
be the \(\pi_0\)-weighted law, \(dQ/dP=\pi_0/E(\pi_0)\), which is a
probability measure, and let \(Q_{-w}\) be its tilt by \(e^{-w}\). With
\(\chi^{2}(Q'\,\|\,Q):=E_Q\{(dQ'/dQ-1)^{2}\}\),
\begin{equation}\phantomsection\label{eq-chisq}{
  \Lambda(\gamma)=\frac{1+\chi^{2}(Q_{-w}\,\|\,Q)}{\operatorname{pr}(R=1)} ,
}\end{equation} so \(\Lambda\ge1/\operatorname{pr}(R=1)\), with equality
exactly when \(w\) is degenerate. The factor is easy to lose, because
the \(\pi_0\)-weighted measure has total mass \(\operatorname{pr}(R=1)\)
rather than one; keeping it is what makes Corollary~\ref{cor-ess-bound}
below come out right. Second, \(\Lambda\) is not merely a variance
bound: it determines the effective sample size exactly.

\begin{corollary}[Effective sample
size]\protect\hypertarget{cor-ess}{}\label{cor-ess}

Let \(\omega_i=R_i/\pi_\gamma(Y_i,X_i)\) and let
\(\widehat{\mathrm{ESS}}=(\sum_i\omega_i)^{2}/\sum_i\omega_i^{2}\) be
the effective sample size of \citet[Sec. 11.7B]{Kish1965}. Under
Lemma~\ref{lem-intercept}
\(\widehat{\mathrm{ESS}}/n\to_p1/E(R/\pi_\gamma^{2})\), and by
Theorem~\ref{thm-decomposition} the limit is \(\mathrm{ESS}(\gamma)/n\)
with \begin{equation}\phantomsection\label{eq-ess}{
  \mathrm{ESS}(\gamma) \;=\; \frac{n}{E(\pi_0)+2\operatorname{pr}(R=0)+\operatorname{pr}(R=0)^{2}\Lambda(\gamma)} .
}\end{equation} The right-hand side defines \(\mathrm{ESS}(\gamma)\);
the content of the corollary is that the sample quantity converges to
it.

\end{corollary}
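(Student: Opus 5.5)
The plan is to write the Kish ratio as a ratio of two sample means and apply the weak law of large numbers to each, then the continuous mapping theorem. Dividing numerator and denominator by \(n^{2}\) gives
\[
  \frac{\widehat{\mathrm{ESS}}}{n}
  =\frac{\bigl(n^{-1}\sum_i\omega_i\bigr)^{2}}{n^{-1}\sum_i\omega_i^{2}} .
\]
Here \(\gamma\) is held fixed at a point of the scanned portion of the curve, so the \(\omega_i\) are independent and identically distributed functions of \(O_i\). No estimation of \(\gamma\) enters, and the statement is pointwise in \(\gamma\).

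\emph{First mean.} The first step identifies the limit of \(n^{-1}\sum_i\omega_i\). Conditioning on \((Y,X,Z)\) and using \(\operatorname{pr}(R=1\mid Y,X,Z)=\pi_0(Y,X)\) gives \(E(\omega)=E(\pi_0/\pi_\gamma)\). This equals \(1\) by the normalization of Lemma~\ref{lem-intercept}. Finiteness is guaranteed there by Condition~\ref{def-mgf}, since \(E(\pi_0/\pi_\gamma)=E(\pi_0)+e^{-\gamma_0}E(\pi_0e^{-w})\). Khinchine's weak law then gives \(n^{-1}\sum_i\omega_i\to_p1\).

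\emph{Second mean.} Next I treat \(n^{-1}\sum_i\omega_i^{2}\). Since \(R^{2}=R\), we have \(\omega^{2}=R/\pi_\gamma^{2}\). Its expectation is exactly the quantity in Theorem~\ref{thm-decomposition}, and it is finite because each of the three terms of (\ref{eq-decomposition}) is finite under Condition~\ref{def-mgf}. The weak law again gives \(n^{-1}\sum_i\omega_i^{2}\to_pE(R/\pi_\gamma^{2})\).

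\emph{Continuous mapping.} The last step applies the continuous mapping theorem to \((a,b)\mapsto a^{2}/b\) at the point \((1,E(R/\pi_\gamma^{2}))\). This requires the limit \(b\) to be strictly positive. That follows from (\ref{eq-decomposition}): \(b\ge E(\pi_0)>0\) by Condition~\ref{def-overlap}. Alternatively, Jensen's inequality gives \(b\ge\{E(\omega)\}^{2}=1\). Hence \(\widehat{\mathrm{ESS}}/n\to_p1/E(R/\pi_\gamma^{2})\). Substituting (\ref{eq-decomposition}) into the denominator yields (\ref{eq-ess}), with the right-hand side understood as the definition of \(\mathrm{ESS}(\gamma)\).

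\emph{Expected difficulty.} None of the steps is delicate. The one that needs care is the second mean, because the weak law requires only a finite first moment of \(\omega^{2}\), not a finite variance. So no fourth-moment condition on \(e^{-w}\) is needed. The only place Condition~\ref{def-mgf} enters in its full strength is the finiteness of \(E(\pi_0e^{-2w})\). Without it, the denominator diverges and \(\widehat{\mathrm{ESS}}/n\to_p0\), which is the degenerate case flagged in Section~\ref{sec-dichotomy}.
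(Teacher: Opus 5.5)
Your proposal is correct and matches the paper's proof: write \(\widehat{\mathrm{ESS}}/n\) as the squared mean of the \(\omega_i\) divided by the mean of the \(\omega_i^{2}\), obtain \(E(\omega)=1\) from the normalization of Lemma~\ref{lem-intercept}, apply the weak law to both means and then the continuous mapping theorem, and substitute (\ref{eq-decomposition}). Your extra checks fill in what the paper's two-line argument leaves implicit, namely that both first moments are finite under Condition~\ref{def-mgf} and that the limiting denominator is strictly positive.
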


\begin{proof}
\(n^{-1}\sum_i\omega_i\to_pE(R/\pi_\gamma)=1\) by
Lemma~\ref{lem-intercept} and
\(n^{-1}\sum_i\omega_i^{2}\to_pE(R/\pi_\gamma^{2})\); apply the
continuous mapping theorem and substitute (\ref{eq-decomposition}).
\end{proof}

The distinction between \(\widehat{\mathrm{ESS}}\) and its limit is not
pedantic. The convergence is in probability only, and it is slow
precisely where the quantity matters.

In the data-generating process of Section~\ref{sec-simulation}, at
\(n=2\times10^{6}\) and \(\gamma_Y=-1.5\), the population value
(\ref{eq-ess}) is \(9.6\times10^{4}\), while \(\widehat{\mathrm{ESS}}\)
takes the values \(8.8\times10^{4}\), \(1.1\times10^{5}\) and
\(1.8\times10^{4}\) at three seeds. The empirical effective sample size
is itself governed by the few largest weights, and so is least reliable
exactly where it would be most informative. Every statement below refers
to the population quantity (\ref{eq-ess}).

The identity (\ref{eq-ess}) ties the algebra of this section to the
feasibility results of Section~\ref{sec-diagnostics}: everything that
degrades as \(\Lambda\) grows does so through the effective sample size.
It also has a sanity check built into it.

\begin{corollary}[The effective sample size never exceeds the
respondents]\protect\hypertarget{cor-ess-bound}{}\label{cor-ess-bound}

\(\mathrm{ESS}\le n\operatorname{pr}(R=1)\), with equality exactly when
\(w\) is degenerate.

\end{corollary}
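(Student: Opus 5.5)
The inequality \(\mathrm{ESS}\le n\operatorname{pr}(R=1)\) is equivalent to a lower bound on the denominator of (\ref{eq-ess}). So the plan is to substitute the lower bound \(\Lambda\ge1/\operatorname{pr}(R=1)\) from (\ref{eq-chisq}) into Theorem~\ref{thm-decomposition} and simplify. Write \(p=\operatorname{pr}(R=1)=E(\pi_0)\) and \(q=1-p=\operatorname{pr}(R=0)\). By (\ref{eq-ess}), the claim is equivalent to
\[
  p+2q+q^{2}\Lambda(\gamma)\;\ge\;\frac1p ,
\]
with equality exactly when \(w\) is degenerate. Both \(p\) and \(q\) are strictly positive by Condition~\ref{def-overlap}, so dividing by them is legitimate.

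First I would verify the representation (\ref{eq-chisq}), since that is where the factor \(1/p\) is easy to lose. With \(dQ/dP=\pi_0/p\) and \(dQ_{-w}/dQ=e^{-w}/E_Q(e^{-w})\), we have \(E_Q(e^{-w})=E(\pi_0e^{-w})/p\) and \(E_Q(e^{-2w})=E(\pi_0e^{-2w})/p\). Hence
\[
  1+\chi^2(Q_{-w}\,\|\,Q)=\frac{E_Q(e^{-2w})}{\{E_Q(e^{-w})\}^2}=p\,\Lambda(\gamma).
\]
This quantity is at least \(1\) by Jensen's (or Cauchy--Schwarz's) inequality under \(Q\). Equality holds iff \(e^{-w}\) is \(Q\)-a.s.\ constant. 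Because \(\pi_0>0\) a.s., \(Q\) and \(P\) have the same null sets, so this is the same as \(w\) being \(P\)-a.s.\ constant, i.e.\ degenerate. Condition~\ref{def-mgf} guarantees that all these moments are finite.

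Then substitute \(\Lambda\ge1/p\):
\[
  p+2q+\frac{q^{2}}{p}=\frac{p^2+2pq+q^2}{p}=\frac{(p+q)^2}{p}=\frac1p .
\]
This gives the bound. The equality case follows because the only inequality used is \(\Lambda\ge1/p\), and \(q^2>0\) makes the implication reversible. So equality in the corollary holds iff \(p\Lambda=1\), iff \(w\) is degenerate.

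I expect no real obstacle. The one point needing care is the equality characterization: one must check that \(Q\)-degeneracy of \(w\) coincides with \(P\)-degeneracy, which is exactly where Condition~\ref{def-overlap} enters. It is also worth remarking that the bound is just the Kish inequality \((\sum\omega_i)^2/\sum\omega_i^2\le\#\{i:R_i=1\}\) in population form, which serves as a cross-check.
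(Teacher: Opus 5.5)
Your proposal is correct and follows the paper's proof. Both substitute the minimum \(\Lambda=1/\operatorname{pr}(R=1)\) from (\ref{eq-chisq}) into the denominator of (\ref{eq-ess}) and simplify \(p+2q+q^{2}/p=(p+q)^{2}/p=1/p\). You also check (\ref{eq-chisq}) and derive the equality case from the mutual absolute continuity of \(Q\) and \(P\) under Condition~\ref{def-overlap}, which fills in steps the paper leaves implicit.
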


\begin{proof}
By (\ref{eq-chisq}) the smallest possible value of \(\Lambda\) is
\(1/p\) with \(p=\operatorname{pr}(R=1)\), attained when \(w\) is
degenerate. Substituting into (\ref{eq-ess}) and writing \(q=1-p\), the
denominator becomes \(p+2q+q^{2}/p=(p+q)^{2}/p=1/p\), so
\(\mathrm{ESS}\le np\). Any departure of \(w\) from degeneracy increases
\(\Lambda\) and so decreases the effective sample size.
\end{proof}

The bound is the one an inverse-weighted analysis must satisfy---no
weighting scheme can manufacture information from units whose outcome
was never observed---and it is where the rescaling in (\ref{eq-chisq})
earns its place. Without it the smallest possible value of \(\Lambda\)
would be \(1\) rather than \(1/\operatorname{pr}(R=1)\), and the
resulting bound would exceed \(n\operatorname{pr}(R=1)\) by an amount
that grows as the response rate falls: at a response rate of one in five
it would permit about twice as much.

\subsection{An integral representation}\label{sec-integral}

Write \(K(t)=\log E(e^{tW})\) for the cumulant generating function of a
random variable \(W\), and \(\operatorname{var}_s(W)\) for the variance
of \(W\) under the exponential tilt proportional to \(e^{sW}\), so that
\(K''(s)=\operatorname{var}_s(W)\). The base law is left unnamed here
because the representation is applied below to two different ones.

\begin{theorem}[Integral
representation]\protect\hypertarget{thm-integral}{}\label{thm-integral}

Let \(K\) be finite on an open interval containing \([0,2T]\) and set
\(\Lambda(T)=E(e^{2TW})/\{E(e^{TW})\}^{2}\). Then
\begin{equation}\phantomsection\label{eq-integral}{
  \log\Lambda(T)=\int_0^T (T-s)\left\{4\operatorname{var}_{2s}(W)-2\operatorname{var}_s(W)\right\}ds .
}\end{equation}

\end{theorem}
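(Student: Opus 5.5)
Since \(\log\Lambda(T)=K(2T)-2K(T)\), the plan is to write this difference as an integral of \(K''\) using Taylor's formula with integral remainder. Because \(K\) is finite on an open interval containing \([0,2T]\), it is real-analytic there, and we may differentiate under the expectation. This gives \(K'(s)=E_s(W)\), the mean under the tilt proportional to \(e^{sW}\), and \(K''(s)=\operatorname{var}_s(W)\), as recorded before the statement. We also have \(K(0)=0\).

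The key step is to apply Taylor's theorem at \(0\) to the function \(T\mapsto K(2T)-2K(T)\). Set \(f(T)=K(2T)-2K(T)\). Then \(f(0)=0\) and \(f'(T)=2K'(2T)-2K'(T)\), so \(f'(0)=2K'(0)-2K'(0)=0\): the linear terms cancel. The second derivative is
\[
f''(T)=4K''(2T)-2K''(T)=4\operatorname{var}_{2T}(W)-2\operatorname{var}_T(W).
\]
Taylor's formula with integral remainder, \(f(T)=f(0)+f'(0)T+\int_0^T(T-s)f''(s)\,ds\), then yields exactly the display of the theorem. Equivalently, one can apply \(K(a)=K(0)+aK'(0)+\int_0^a(a-u)K''(u)\,du\) separately at \(a=2T\) and at \(a=T\). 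In the first integral, substitute \(u=2s\); this gives \((2T-2s)\cdot 2\,ds=4(T-s)\,ds\). The \(K'(0)\) terms then cancel in the combination \(K(2T)-2K(T)\).

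The only point needing care, and the main (mild) obstacle, is justifying that \(K\) is \(C^2\) on \([0,2T]\) with \(K''(s)=\operatorname{var}_s(W)\). For this we use the standard fact that a moment generating function is finite on an open interval, and is then analytic in the interior with derivatives obtained by differentiating under \(E\). Dominated convergence applies: for \(s\) in a compact subinterval, \(|W|^k e^{sW}\) is dominated by \(C(e^{aW}+e^{bW})\), where \(a<0\) and \(b>2T\) are endpoints lying inside the open interval. The hypothesis of an open interval containing \([0,2T]\), rather than the closed interval alone, is exactly what supplies this domination at the endpoints \(s=0\) and \(s=2T\).

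If \(T<0\), the same argument runs with the interval \([2T,0]\), and the formula holds with the oriented integral.
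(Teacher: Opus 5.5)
Your proposal is correct and follows the paper's proof: both set \(f(T)=K(2T)-2K(T)\), note \(f(0)=f'(0)=0\) and \(f''(T)=4K''(2T)-2K''(T)\), and apply Taylor's theorem with integral remainder. Your dominated-convergence justification that \(K''(s)=\operatorname{var}_s(W)\) on \([0,2T]\), which is where the open-interval hypothesis is used, is a detail the paper leaves implicit.
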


\begin{proof}
Put \(f(T)=K(2T)-2K(T)=\log\Lambda(T)\). Then \(f(0)=f'(0)=0\) and
\(f''(T)=4K''(2T)-2K''(T)\); apply Taylor's theorem with integral
remainder.
\end{proof}

The representation says that \(\log\Lambda\) accumulates tilted
variances as the tilt is turned up from \(0\) to \(T\), weighted by
\((T-s)\) so that the early stages count most. What the variance does
far out therefore decides the outcome: if it settles at a positive limit
the integral grows like \(T^{2}\), and if the tilt exhausts the mass
available to it the accumulation stops. Section~\ref{sec-dichotomy}
makes that alternative precise.

Representation (\ref{eq-integral}) applies to two distinct objects.
Taking \(W=-w\) under the \(\pi_0\)-weighted law returns
\(\Lambda(\gamma)\) of Theorem~\ref{thm-decomposition}; taking \(W=Y\)
under the law of \(Y\mid R=1,X,Z\) returns the cost of estimating the
tilted conditional mean, which is the subject of
Lemma~\ref{lem-varprop}. The two are different functionals and, as
Section~\ref{sec-reach} shows, they depend on the relevance of \(Z\) in
opposite ways; conflating them is easy and consequential.

\begin{corollary}[Gaussian
case]\protect\hypertarget{cor-gaussian}{}\label{cor-gaussian}

If \(W\) is Gaussian with variance \(v\), then
\(\operatorname{var}_s(W)=v\) for all \(s\) and
\(\log\Lambda(T)=vT^{2}\) exactly.

\end{corollary}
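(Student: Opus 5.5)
The plan is to compute directly from the cumulant generating function, and then to check the answer against Theorem~\ref{thm-integral}. Take $W\sim N(\mu,v)$ under the base law. Then $K(t)=\mu t+vt^{2}/2$ is finite for every real $t$, so the hypothesis of Theorem~\ref{thm-integral} holds on any interval containing $[0,2T]$.

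The first step is to identify the tilted laws. The density of the tilt proportional to $e^{sW}$ is a Gaussian density multiplied by $e^{sw}$. Completing the square in the exponent shows that this is again Gaussian, with mean $\mu+sv$ and unchanged variance $v$. Equivalently, $K''(s)=v$ for all $s$. This gives the first claim, $\operatorname{var}_s(W)=v$.

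The second step substitutes into (\ref{eq-integral}). The integrand becomes $(T-s)(4v-2v)=2v(T-s)$, and
\[
\int_0^T 2v(T-s)\,ds = vT^{2}.
\]
As a cross-check that bypasses the representation, compute
\[
\log\Lambda(T)=K(2T)-2K(T)=2\mu T+2vT^{2}-2\mu T-vT^{2}=vT^{2}.
\]
The mean cancels, which confirms that $\mu$ plays no role. The result is exact and not merely asymptotic, because $K$ is exactly quadratic.

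No step is a real obstacle. The only point needing care is finiteness of $K$ on an open interval around $[0,2T]$, and for a Gaussian this holds for every $T$.
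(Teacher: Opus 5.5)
Your proof is correct and follows the paper's route: the tilted Gaussian keeps variance $v$, so the integrand in (\ref{eq-integral}) is $2v(T-s)$, which integrates to $vT^{2}$. Your direct check $K(2T)-2K(T)=vT^{2}$ is a useful addition, because it confirms exactness without the integral representation and shows that the mean cancels.
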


\begin{corollary}[Sandwich]\protect\hypertarget{cor-sandwich}{}\label{cor-sandwich}

If \(v_-\le\operatorname{var}_s(W)\le v_+\) for all \(s\in[0,2T]\), then
\((2v_--v_+)T^{2}\le\log\Lambda(T)\le(2v_+-v_-)T^{2}\).

\end{corollary}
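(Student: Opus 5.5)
The plan is to feed the assumed variance bounds directly into the integral representation of Theorem~\ref{thm-integral} and integrate the resulting pointwise bounds on the integrand against the nonnegative weight \((T-s)\). The sandwich hypothesis concerns \(\operatorname{var}_u(W)\) for \(u\in[0,2T]\), and the integrand in (\ref{eq-integral}) evaluates the tilted variance at \(s\) and at \(2s\) for \(s\in[0,T]\). Both arguments therefore lie in \([0,2T]\), and this is why the hypothesis is stated on \([0,2T]\) rather than on \([0,T]\). The finiteness of \(K\) on an open interval containing \([0,2T]\) is inherited from Theorem~\ref{thm-integral}, so the representation is available.

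The first step is the pointwise bound on the integrand. For each \(s\in[0,T]\) I take the upper bound on the first term and the lower bound on the second, which gives \(4\operatorname{var}_{2s}(W)-2\operatorname{var}_s(W)\le 4v_+-2v_-\). Reversing the roles gives \(4\operatorname{var}_{2s}(W)-2\operatorname{var}_s(W)\ge 4v_--2v_+\).

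The second step is integration. Since \(T-s\ge0\) on \([0,T]\), multiplying the pointwise bounds by \((T-s)\) preserves the inequalities. Integrating then uses \(\int_0^T(T-s)\,ds=T^2/2\), which yields
\[
(4v_--2v_+)\tfrac{T^2}{2}\le\log\Lambda(T)\le(4v_+-2v_-)\tfrac{T^2}{2}.
\]
This is exactly \((2v_--v_+)T^2\le\log\Lambda(T)\le(2v_+-v_-)T^2\).

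There is no real obstacle in this argument. The only points needing care are that the \(2s\) argument stays inside the range covered by the hypothesis, and that the sign of the weight \((T-s)\) is nonnegative so the inequalities are preserved. As a consistency check, setting \(v_-=v_+=v\) recovers the Gaussian case of Corollary~\ref{cor-gaussian}. The lower bound is informative only when \(2v_->v_+\), but the statement does not require this.
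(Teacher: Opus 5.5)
Your proposal is correct and is essentially the paper's proof: bound the integrand of (\ref{eq-integral}) pointwise between \(4v_--2v_+\) and \(4v_+-2v_-\), then integrate against the nonnegative weight \((T-s)\) using \(\int_0^T(T-s)\,ds=T^{2}/2\). Your remark that both \(s\) and \(2s\) lie in \([0,2T]\) supplies the one detail the paper leaves implicit, namely why the hypothesis is stated on \([0,2T]\).
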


\begin{proof}
Both follow from (\ref{eq-integral}) by bounding its integrand. A
Gaussian law tilted by \(e^{sW}\) is Gaussian with the same variance, so
\(\operatorname{var}_s(W)\equiv v\), the integrand is \(4v-2v=2v\), and
\(\int_0^T(T-s)2v\,ds=vT^{2}\). Under the sandwich the integrand lies
between \(4v_--2v_+\) and \(4v_+-2v_-\), and
\(\int_0^T(T-s)\,ds=T^{2}/2\) gives the two bounds.
\end{proof}

\subsection{A dichotomy}\label{sec-dichotomy}

What decides whether \(\Lambda\) stays finite is not that \(W\) is
bounded but that its law has an atom where the tilt concentrates.

\begin{corollary}[An atom at the essential
supremum]\protect\hypertarget{cor-atom}{}\label{cor-atom}

Let \(W\) have essential supremum \(M_W<\infty\) and suppose it carries
an atom there, \(\operatorname{pr}(W=M_W)>0\). Then
\(\log\Lambda(T)\to-\log\operatorname{pr}(W=M_W)\) as \(T\to\infty\),
and symmetrically
\(\log\Lambda(T)\to-\log\operatorname{pr}(W=\operatorname{ess\,inf}W)\)
as \(T\to-\infty\). If instead \(W\) has no atom at \(M_W\), \(\Lambda\)
diverges: for a density behaving like \(c(M_W-x)^{\alpha}\) near
\(M_W\), \(\log\Lambda(T)=(\alpha+1)\log(T/2)+O(1)\). When \(W\) is
binary with \(\operatorname{pr}(W=1)=p\) the limit is approached from
below at an exponential rate,
\(\log\Lambda(T)=-\log p-2p^{-1}(1-p)e^{-T}\{1+o(1)\}\).

\end{corollary}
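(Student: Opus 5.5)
The plan is to work directly with $\Lambda(T)=E(e^{2TW})/\{E(e^{TW})\}^{2}$ rather than through the integral representation of Theorem~\ref{thm-integral}, after factoring out the supremum. Write $p_M=\operatorname{pr}(W=M_W)$ and $U=M_W-W\ge0$. Then
\[
\Lambda(T)=\frac{E(e^{-2TU})}{\{E(e^{-TU})\}^{2}},
\]
so only the Laplace transform $L(t)=E(e^{-tU})$ of a nonnegative variable matters, evaluated at $t=T$ and $t=2T$.

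\textbf{Atom case.} By dominated convergence $L(t)\to\operatorname{pr}(U=0)=p_M$ as $t\to\infty$. Hence $\Lambda(T)\to p_M/p_M^{2}=1/p_M$, which gives the first claim. The statement as $T\to-\infty$ is the same argument applied to $-W$: replacing $T$ by $-T$ and $W$ by $-W$ leaves $\Lambda$ unchanged, and the supremum of $-W$ is $-\operatorname{ess\,inf}W$.

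\textbf{Binary case.} For $W\in\{0,1\}$, $M_W=1$ and $U$ is $0$ with probability $p$ and $1$ with probability $1-p$, so $L(t)=p+(1-p)e^{-t}$ exactly. Then
\[
\log\Lambda(T)=\log\{p+(1-p)e^{-2T}\}-2\log\{p+(1-p)e^{-T}\}.
\]
Expanding $\log(p+\epsilon)=\log p+\epsilon/p+O(\epsilon^{2})$, the first term contributes only $O(e^{-2T})$, and the second contributes $-2\log p-2p^{-1}(1-p)e^{-T}+O(e^{-2T})$. This gives $\log\Lambda(T)=-\log p-2p^{-1}(1-p)e^{-T}\{1+o(1)\}$, approached from below.

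\textbf{No atom, density case.} Here $L(t)\to0$, and the task is to show $L(2T)/L(T)^{2}\to\infty$ at the stated rate. Suppose $U$ has density $f_U(u)\sim cu^{\alpha}$ as $u\downarrow0$, with $\alpha>-1$. A Watson/Abelian lemma argument gives
\[
L(t)=c\,\Gamma(\alpha+1)\,t^{-(\alpha+1)}\{1+o(1)\}.
\]
The contribution from $u>\delta$ is $O(e^{-t\delta})$. On $[0,\delta]$, substitute $u=v/t$ and apply dominated convergence using $f_U(u)\le C u^{\alpha}$ there. Then
\[
\log\Lambda(T)=\log\{c\Gamma(\alpha+1)\}^{-1}+(\alpha+1)\{2\log T-\log(2T)\}+o(1)=(\alpha+1)\log(T/2)+O(1),
\]
since $2\log T-\log 2T=\log(T/2)$.

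For a general atomless law the same structure gives divergence: $L(t)\to0$, and by Cauchy--Schwarz $L(2T)\ge L(T)^{2}$. Strictness in the limit needs more work. One route is to note that $L(2T)/L(T)^{2}$ is $E_T(e^{-TU})/E(e^{-TU})$, where $E_T$ denotes expectation under the tilted law. I would argue this ratio diverges because the tilted mass near $U=0$ does not concentrate on a point. This is where I would lean on the density hypothesis if the general claim resists.

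\textbf{Main obstacle.} The delicate step is the Laplace asymptotic with only the local behavior "$f$ behaves like $c(M_W-x)^{\alpha}$". The $O(1)$ and $o(1)$ terms need uniformity near $0$, and the precise meaning of "behaves like" must be fixed. I would assume $f_U(u)=cu^{\alpha}\{1+o(1)\}$ as $u\downarrow0$, with $f_U$ integrable, which suffices for the Watson-lemma step above. Finiteness of $K$ on the relevant interval holds automatically because $W$ is bounded above and $T>0$.
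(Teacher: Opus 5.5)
Your proof is correct and is essentially the paper's: the substitution $U=M_W-W$ performs up front the cancellation of $e^{tM_W}$ that the paper carries out inside $K(2T)-2K(T)$. The atom limit, the Watson-lemma step and the exact binary computation (your $L(t)=p+(1-p)e^{-t}$ is the paper's $K(t)=t+\log p+g(t)$) then match it step for step; the paper adds only that $\log\Lambda$ is nondecreasing, since $f'(T)=2\{K'(2T)-K'(T)\}\ge0$, so ``from below'' holds for every $T$ and not just eventually. The general atomless claim you left open is also proved by the paper only under the density hypothesis, but it closes in one line: $\psi=-\log L$ is concave with $\psi'(t)=E_t(U)\downarrow\operatorname{ess\,inf}U=0$, so $\log\Lambda(T)=2\psi(T)-\psi(2T)\ge\psi(T)-T\psi'(T)\ge\psi(S)-S\psi'(T)$ for fixed $S\le T$, whence $\liminf_{T\to\infty}\log\Lambda(T)\ge\psi(S)$ for every $S$, and $\psi(S)\to\infty$ because $L(S)\to\operatorname{pr}(U=0)=0$.
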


The binary case is the special case \(M_W=1\). Boundedness alone is
\emph{not} sufficient: for \(W\sim\mathrm{Un}(0,1)\) one computes
\(\log\Lambda(T)=\log\{(e^{T}+1)T/(2(e^{T}-1))\}\to\log(T/2)\to\infty\),
slowly but without limit. The proofs are in Section~\ref{sec-proofs}.

Corollary~\ref{cor-gaussian} and Corollary~\ref{cor-atom} together give
the dichotomy that organizes the rest of the paper. Two points of care
are needed in applying it. The tilt variable is
\(w=\gamma_X^{\mathrm{T}}X+\gamma_Y Y\), not \(Y\); with \(X\)
continuous, \(w\) has no atom, so Corollary~\ref{cor-atom} does not
apply to it directly. And \(\pi_0\) is held at the truth while
\(\gamma\) moves along the curve, which is what makes the limit below
finite.

\begin{proposition}[The
dichotomy]\protect\hypertarget{prp-dichotomy}{}\label{prp-dichotomy}

Move along the identified curve, so that \(|\gamma_Y|\to\infty\) while
\(\gamma_X(s)\) remains bounded.

\emph{(i)} If \(Y\) is binary with \(\operatorname{pr}(Y=1)\in(0,1)\),
then \begin{equation}\phantomsection\label{eq-binary-limit}{
  \Lambda(\gamma)\longrightarrow\frac{A_2}{A_1^{2}},
  \qquad A_k:=E\{\pi_0 e^{-k\gamma_X^{\mathrm{T}}X}\mathbb 1(Y=1)\} ,
}\end{equation} a finite constant: the tilt cannot explode, however
strong the assumed departure from ignorability.

\emph{(ii)} If \(Y\) has Gaussian-like tails,
\(\Lambda(\gamma)\to\infty\), at a rate quadratic in \(\gamma_Y\);
Corollary~\ref{cor-sandwich} brackets the constant.

\end{proposition}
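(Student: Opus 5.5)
The plan is to handle the two parts separately. Part (i) is a direct stratification of the two moment generating functions on the value of \(Y\). Part (ii) rescales the tilt so that Theorem~\ref{thm-integral} and Corollary~\ref{cor-sandwich} apply with \(T=|\gamma_Y|\).

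For (i), I will split \(E(\pi_0e^{-kw})\), \(k=1,2\), over \(\{Y=0\}\) and \(\{Y=1\}\):
\[
E(\pi_0e^{-kw}) = B_k + e^{-k\gamma_Y}A_k,
\qquad B_k:=E\{\pi_0 e^{-k\gamma_X^{\mathrm{T}}X}\mathbb 1(Y=0)\}.
\]
Here \(A_k\) and \(B_k\) depend on \(s\) only through \(\gamma_X(s)\). Then
\[
\Lambda(\gamma)=\frac{A_2+e^{2\gamma_Y}B_2}{(A_1+e^{\gamma_Y}B_1)^{2}},
\]
so as \(\gamma_Y\to-\infty\) the \(B\)-terms are killed by \(e^{k\gamma_Y}\to0\). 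This is the direction in which \(-w\) tilts toward \(Y=1\); for \(\gamma_Y\to+\infty\) the same argument gives the mirror limit with \(\mathbb 1(Y=0)\) in place of \(\mathbb 1(Y=1)\). Three things must be checked.
\begin{itemize}
\item The \(B_k\) stay bounded. This follows because \(\gamma_X(s)\) ranges over a bounded set, assuming \(E(\pi_0e^{c|X|})<\infty\) for \(c\) large enough, which is the content of Condition~\ref{def-mgf} along the curve.
\item \(A_1\) stays bounded away from zero. This uses \(\operatorname{pr}(Y=1)>0\), Condition~\ref{def-overlap}, and Jensen or Fatou over the compact closure of the \(\gamma_X\) values.
\item \(A_k\) is continuous in \(\gamma_X\), by dominated convergence under the same exponential-moment bound.
\end{itemize}
Then \(\Lambda(\gamma)-A_2/A_1^{2}\to0\), where \(A_k\) is evaluated at the current \(\gamma_X(s)\), or at its limit if \(\gamma_X(s)\) converges. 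The limit is finite, which is the claim.

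For (ii), I will write \(p=\operatorname{pr}(R=1)\) and use (\ref{eq-chisq}) to pass to the normalized weighted law \(Q\):
\[
\Lambda(\gamma)=p^{-1}E_Q(e^{-2w})/\{E_Q(e^{-w})\}^{2}.
\]
Next I set \(T=|\gamma_Y|\) and \(V=-\operatorname{sign}(\gamma_Y)\{Y+\gamma_X^{\mathrm{T}}X/\gamma_Y\}\), so that \(-w=TV\). Then \(\log\Lambda(\gamma)=-\log p+\log\Lambda_V(T)\), and Theorem~\ref{thm-integral} applies to \(V\) under \(Q\), with the caveat that \(V\) depends weakly on \(T\) through \(\gamma_X/\gamma_Y\to0\). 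I read ``Gaussian-like tails'' as follows: under \(Q\), the law of \(Y\) given \(X\) has tails such that its \(e^{sY}\)-tilted variance converges to a constant \(\sigma^{2}\in(0,\infty)\) as \(s\to\pm\infty\). The \(\pi_0\) weighting is logistic in \(Y\) and so changes the log-density only by an asymptotically linear term, which does not affect this. The perturbation \(\gamma_X^{\mathrm{T}}X/\gamma_Y\) is \(O(1/T)\) and shifts \(\operatorname{var}_s(V)\) by \(o(1)\) uniformly once \(X\) has exponential moments.

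Given the limit \(\sigma^{2}\), the integrand \(4\operatorname{var}_{2s}(V)-2\operatorname{var}_s(V)\) tends to \(2\sigma^{2}\) as \(s\to\infty\). Because \(\int_0^T(T-s)\,ds=T^{2}/2\), a Cesàro-type argument then gives \(\log\Lambda(T)=\sigma^{2}T^{2}\{1+o(1)\}\). In this argument the initial segment \([0,s_0]\), where the variance has not yet settled, contributes only \(O(T)\). Where only the bounds \(v_-\le\operatorname{var}_s(V)\le v_+\) are available, Corollary~\ref{cor-sandwich} gives the bracket \((2v_--v_+)T^{2}\le\log\Lambda\le(2v_+-v_-)T^{2}\). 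That bracket is informative when \(2v_->v_+\), and in any case it establishes the quadratic order.

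The main obstacle is making (ii) rigorous: pinning down ``Gaussian-like tails'' so that the tilted variance of \(V\) under \(Q\) has a positive limit, or at least a lower bound, uniformly in the drifting \(\gamma_X(s)\). I would first try to state it as the assumption that the conditional law of \(Y\) given \(X\) is Gaussian, or has a log-concave density with asymptotically quadratic log-density. In the Gaussian case, tilting and the logistic factor can be controlled explicitly, for example by the Laplace method near the tilted mode.
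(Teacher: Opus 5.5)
Part (i) is the paper's argument, using the same split $E(\pi_0e^{-kw})=e^{-k\gamma_Y}A_k+B_k$. You are right that $A_2/A_1^{2}$ is the limit only as $\gamma_Y\to-\infty$; the paper also treats only that side, and $\gamma_Y\to+\infty$ gives $B_2/B_1^{2}$. Condition~\ref{def-mgf} gives only pointwise finiteness along the curve. Your uniform exponential moment of $|X|$ is therefore a mild extra assumption, but it is what the paper's remark that bounded $\gamma_X(s)$ keeps $A_k$ from drifting tacitly needs.

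In (ii), your rescaling $-w=TV$ makes explicit what the paper's one-line appeal to Corollary~\ref{cor-sandwich} for $W=-w$ under $Q$ leaves implicit, and your Ces\`aro step is the argument of Proposition~\ref{prp-quadratic}. The gap is the step ``given the limit $\sigma^{2}$, the integrand tends to $2\sigma^{2}$''. Theorem~\ref{thm-integral} uses the \emph{unconditional} variance of $V$ under the tilted joint law $Q_s\propto e^{sV}dQ$, and
\[
\operatorname{var}_{Q_s}(V)=E_{Q_s}\{\operatorname{var}_{Q_s}(Y\mid X)\}+\operatorname{var}_{Q_s}\{E_{Q_s}(V\mid X)\} .
\]
Your hypothesis on $Y$ given $X$ controls only the first term. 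The second is the spread of the tilted conditional mean under a law of $X$ tilted at a rate growing with $s$, so it is governed by the tails of $X$.

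In the paper's Gaussian design this second term tends to $b_1^{2}\operatorname{var}(X)$. That is why the constant there is $v_\infty=\operatorname{var}(Y)=2.00$ rather than $\sigma^{2}$ (Section~\ref{sec-reach-numerical}). It need not even stay bounded. Take $\pi_0$ constant and $Y=b_1X+\varepsilon$, with $b_1\neq0$, $\varepsilon\sim N(0,\sigma^{2})$ independent of $X$, and $X$ of density proportional to $e^{-|x|^{3/2}}$. Then $X$ has every exponential moment and $Y\mid X$ is exactly Gaussian, yet $K_X(t)\asymp|t|^{3}$, so $\log\Lambda\asymp|\gamma_Y|^{3}$. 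Your conditional reading of ``Gaussian-like tails'', even with exponential moments of $X$, therefore does not yield a quadratic rate.

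The hypothesis must sit on the object the representation uses: the marginal tilted variance of $Y$, equivalently of $W=-w$, under $Q$. That is what the $v_\pm$ of the paper's sandwich and the $v_\infty$ of Proposition~\ref{prp-quadratic} refer to, and with it your argument goes through with $v_\infty$ in place of $\sigma^{2}$. You would still have to show that the $O(1)$ change in the $X$-tilt at $s\asymp T$, coming from $\gamma_X^{\mathrm{T}}X/\gamma_Y$, leaves that limit unchanged. That is an assumption on the joint law, not a consequence of exponential moments.

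Your fallback is also overstated. The lower bound $(2v_--v_+)T^{2}$ of Corollary~\ref{cor-sandwich} is vacuous when $2v_-\le v_+$, so the sandwich alone gives $\log\Lambda=O(T^{2})$ but not divergence. A sharper tool, in the notation of Theorem~\ref{thm-integral}, is the identity
\[
\log\Lambda(T)=\int_T^{2T}(2T-u)K''(u)\,du+\int_0^{T}uK''(u)\,du ,
\]
which you can check by differentiating both sides. Both terms are nonnegative, so $v_-T^{2}\le\log\Lambda(T)\le v_+T^{2}$. Combined with the first term of the decomposition above, and with uniformity in $x$, this turns your conditional hypothesis into a valid quadratic lower bound. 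The matching upper bound then rests on a tail condition on $X$, or on $Y$ marginally.
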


\begin{proof}
For \emph{(i)} with \(\gamma_Y\to-\infty\),
\(e^{-w}=e^{-\gamma_X^{\mathrm{T}}X}e^{|\gamma_Y|Y}\), so
\(E(\pi_0e^{-kw})=e^{k|\gamma_Y|}\{A_k+o(1)\}\), the \(\{Y=0\}\) part
being exponentially negligible; the factors \(e^{2|\gamma_Y|}\) in
numerator and squared denominator cancel. Since
\(E(\pi_0e^{-kw})=e^{k|\gamma_Y|}A_k+B_k\) with
\(B_k=E\{\pi_0e^{-k\gamma_X^{\mathrm{T}}X}\mathbb 1(Y=0)\}\), Condition
\ref{def-mgf} is equivalent to \(A_k,B_k<\infty\), and \(A_1>0\) follows
from \(\operatorname{pr}(Y=1)>0\) with Condition \ref{def-overlap};
boundedness of \(\gamma_X(s)\) is what keeps \(A_k\) from drifting with
\(s\). Part \emph{(ii)} follows from Corollary~\ref{cor-sandwich}
applied to \(W=-w\) under the \(\pi_0\)-weighted law.
\end{proof}

The rate in \emph{(ii)} should not be read as exact.
Corollary~\ref{cor-gaussian} requires \(W\) to be Gaussian under the
base law; here that law is \(Q\), which is a logistic tilt of a normal
and therefore not normal. What Corollary~\ref{cor-gaussian} supplies is
the leading behavior and Corollary~\ref{cor-sandwich} the bracket, both
for a tilt that is a pure scaling of a single \(W\). Along the curve
\(\gamma_X\) is held fixed rather than scaled with \(\gamma_Y\), so the
measured ratio \(\log\Lambda/\gamma_Y^{2}\) in Section~\ref{sec-reach}
carries a further \(O(|\gamma_Y|^{-1})\) term computed there; it settles
inside the bracket rather than on a single value, and falls outside it
only at the smallest \(|\gamma_Y|\), where that term is largest.

The two regimes fail in different ways, and neither failure is visible
in a coefficient table. Section~\ref{sec-reach} treats the unbounded
case, where the inferential procedure terminates for arithmetical rather
than statistical reasons; Section~\ref{sec-saturation} treats the atomic
case, where the identifying signal saturates instead.

\subsection{Propagation to the conditional law}\label{sec-varprop}

The second use of Theorem~\ref{thm-integral} requires the variance of
\(Y\) after conditioning on response, which is bounded by the
conditional variance before it.

\begin{lemma}[Variance propagation under the response
tilt]\protect\hypertarget{lem-varprop}{}\label{lem-varprop}

Let \(Y\mid X,Z\sim N(\mu(X,Z),\varsigma^{2})\) and let
\(\pi(y)=\operatorname{expit}(\gamma_0+\gamma_X^{\mathrm{T}}x+\gamma_Y y)\).
Then \begin{equation}\phantomsection\label{eq-varprop}{
  \frac{\varsigma^{2}}{1+\gamma_Y^{2}\varsigma^{2}/4}\;\le\;\operatorname{var}(Y\mid R=1,X,Z)\;\le\;\varsigma^{2},
}\end{equation} with the upper bound strict whenever \(\gamma_Y\neq0\).

\end{lemma}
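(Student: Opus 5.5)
Fix $(X,Z)$ and write the conditional law of $Y$ given response as a tilt of the Gaussian base: its density is proportional to $\pi(y)\phi_\varsigma(y-\mu)$. Set $a=\gamma_0+\gamma_X^{\mathrm T}x$ and $\ell(y)=\log\pi(y)=-\log(1+e^{-(a+\gamma_Y y)})$, so the conditional density is $e^{-V(y)}$ up to normalization, with
\[
V(y)=\frac{(y-\mu)^{2}}{2\varsigma^{2}}-\ell(y).
\]
The plan is to get both bounds from log-concavity arguments on $V$. First compute $\ell''(y)=-\gamma_Y^{2}\pi(y)\{1-\pi(y)\}$. Since $\pi(1-\pi)\in(0,1/4]$, this gives
\[
\frac{1}{\varsigma^{2}}\;\le\;V''(y)\;\le\;\frac{1}{\varsigma^{2}}+\frac{\gamma_Y^{2}}{4}.
\]

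\emph{Upper bound.} Because $V''\ge 1/\varsigma^{2}$, the conditional law is strongly log-concave with parameter $1/\varsigma^{2}$. The Brascamp--Lieb inequality, applied to the identity function, then gives $\operatorname{var}\le E\{1/V''(Y)\}\le\varsigma^{2}$. For strictness when $\gamma_Y\neq0$: $V''(y)>1/\varsigma^{2}$ for every $y$, because $\pi(1-\pi)>0$ everywhere. Hence $E\{1/V''\}<\varsigma^{2}$ strictly, and the upper inequality is strict.

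An alternative route avoids Brascamp--Lieb. Write the variance as $K''$ of a tilted cumulant generating function in the spirit of Theorem~\ref{thm-integral}, and use the Cramér--Rao/Stein-type identity $\operatorname{var}(Y)\ge 1/E\{V''\}$. I would fall back on this only if needed.

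\emph{Lower bound.} This is the step I expect to be the main obstacle, since it needs the reverse inequality: an upper bound on $V''$ must force a lower bound on the variance. The tool I plan to use is the Cramér--Rao inequality for the location family generated by the conditional density $p=e^{-V}/c$. The score of that family is $V'(Y)$, so
\[
\operatorname{var}(Y)\;\ge\;\frac{1}{E\{V'(Y)^{2}\}} .
\]
Integration by parts gives $E\{V'(Y)^{2}\}=E\{V''(Y)\}$; the boundary terms vanish thanks to the Gaussian tails. Therefore
\[
\operatorname{var}(Y)\;\ge\;\frac{1}{E\{V''\}}\;\ge\;\frac{1}{1/\varsigma^{2}+\gamma_Y^{2}/4}=\frac{\varsigma^{2}}{1+\gamma_Y^{2}\varsigma^{2}/4}.
\]

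Before relying on this I will check two things. The covariance version of Cramér--Rao, $\operatorname{cov}(Y,V'(Y))^{2}\le\operatorname{var}(Y)\,E\{V'^{2}\}$, needs $\operatorname{cov}(Y,V'(Y))=1$, which follows from $E\{Y V'(Y)\}=1$ and $E\{V'(Y)\}=0$ by integration by parts. The integrability needed there holds because $|\ell'|\le|\gamma_Y|$. Finally, all bounds hold conditionally on $(X,Z)$ pointwise and are free of $\mu$, $a$ and the particular $(x,z)$, so they give the displayed statement (\ref{eq-varprop}) directly.
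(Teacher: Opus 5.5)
Your proposal is correct and follows the paper's proof essentially step for step: the same potential $V$ with $\varsigma^{-2}\le V''\le\varsigma^{-2}+\gamma_Y^{2}/4$, Brascamp--Lieb for the upper bound, and Cram\'er--Rao for the location family with $E\{(V')^{2}\}=E(V'')$ by integration by parts for the lower bound. Your use of the full form $\operatorname{var}(Y)\le E\{1/V''(Y)\}$ actually proves the strictness that the paper only asserts. Drop the ``alternative route'' for the upper bound, though, since $\operatorname{var}(Y)\ge 1/E(V'')$ bounds the variance from below, not from above.
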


The proof, by the Brascamp--Lieb and Cramér--Rao inequalities, is in
Section~\ref{sec-proofs}.

In the shadow-variable construction of Section~\ref{sec-curve},
\(\varsigma^{2}=\sigma^{2}(1-\rho^{2})\) with \(\rho\) the relevance of
\(Z\), so (\ref{eq-varprop}) states that the variance governing the cost
of estimating the nuisance is at most \(\sigma^{2}(1-\rho^{2})\),
uniformly in the tilt, and that the bound tightens to an equality as
\(\rho\to1\). A more relevant shadow variable therefore improves
identification and stability together.

\section{The reach of a scanned confidence set}\label{sec-reach}

\subsection{What is already known, and what is not}\label{sec-known}

Inference about \(\gamma_Y\) concerns a parameter that is unidentified
when \(\rho=0\), so the relevant impossibility result applies directly.
\citet{Dufour1997} shows that if a parameter with unbounded range is not
identified on part of the parameter space, then any confidence set of
level \(1-\alpha\) must be unbounded with probability at least
\(1-\alpha\) near the point of nonidentification, and that any set
failing this characterization---in particular any Wald set built from an
asymptotic standard error---has coverage zero. Reporting a finite
interval for \(\gamma_Y\) has not been defensible since 1997. The
prescribed remedy, followed here and throughout the literature that
result has influenced, is to obtain the set by inverting a test, so that
it comes out unbounded whenever the data do not restrict the parameter.

Carrying out the prescription is itself a source of error, as
\citet{SchlemperMoreira2026} demonstrate for linear
instrumental-variable models: inverting a test on a grid can miss
disconnected components and, more seriously, report a bounded interval
when the set is unbounded, which the standard implementation does in
more than forty per cent of the specifications they survey (their
Section 3). Their remedy replaces the grid with algebra: the statistics
they invert are rational in the parameter, so the set solves a
polynomial inequality recoverable by root finding, and more general
tests are approximated by polynomials after compactifying the parameter
space by a smooth bijection, which turns the two infinite ends into
points at which the statistic is evaluated (their Sections 4 and 5.2).
The extension beyond linear instrumental variables preserves that
algebraic requirement (their Section 6.1).

Here the weights are exponential in \(\gamma_Y\), so no component of
(\ref{eq-moments}) is polynomial or rational in it and the algebraic
route is not available---not inadequate, but unavailable. Nor is the
compactification a way round it, although of their devices it is the one
closest to the present concern: it converts the infinite ends into two
points at which the statistic must be evaluated, and that evaluation is
exactly what the arithmetic prevents. A point at which the statistic
cannot be evaluated is no more useful for being finite. What remains is
a scan, and for a scan the question is how far it can be carried.
Theorem~\ref{thm-reach} answers it, and the answer is indifferent to
which statistic is inverted and to how finely the curve is traced:
beyond a point located exactly below, the functional the weights depend
on exceeds the largest representable number. Where the structure permits
their methods the grid ceases to be a source of error; where it does
not, the binding limit is arithmetical rather than one of
implementation.

\subsection{The reach}\label{sec-thereach}

A set obtained by test inversion is computed by scanning the identified
curve of Section~\ref{sec-curve} and retaining those \(\gamma_Y\) at
which the statistic does not exceed its critical value. Which statistic
is inverted is a separate question, and an important one: under weak
identification an Anderson--Rubin statistic remains valid while
likelihood-ratio, Lagrange-multiplier and Wald statistics in general do
not \citep{StockWright2000, GuggenbergerSmith2005}. Nothing in this
section depends on that choice. What follows concerns the
\emph{arithmetic} of carrying out the scan, and applies to any statistic
whose evaluation requires the inverse weights of
Section~\ref{sec-decomposition}, and hence the functional
\(\Lambda(\gamma)\).

\begin{proposition}[\(\Lambda\) is free of the
relevance]\protect\hypertarget{prp-rho-free}{}\label{prp-rho-free}

\(\Lambda(\gamma)\) is a functional of the law of \((X,Y)\) alone. In
particular it does not depend on the relevance \(\rho\) of the shadow
variable.

\end{proposition}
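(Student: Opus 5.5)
The plan is to read the claim directly off the definition of \(\Lambda\) in Theorem~\ref{thm-decomposition}, checking one ingredient at a time that nothing in it touches \(Z\). By (\ref{eq-decomposition}),
\[
\Lambda(\gamma)=E(\pi_0e^{-2w})/\{E(\pi_0e^{-w})\}^{2},
\]
where \(w=\gamma_X^{\mathrm{T}}X+\gamma_YY\) is defined in (\ref{eq-w}) and \(\pi_0=\pi_0(Y,X)\) is the true propensity. Each integrand, \(\pi_0e^{-kw}\) for \(k=1,2\), is therefore a fixed measurable function of \((X,Y)\) once \(\gamma\) is given. Its expectation is an integral against the marginal law of \((X,Y)\). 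No other part of the joint law of \(O\) enters.

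The first step is to confirm that \(\pi_0\) involves neither \(Z\) nor \(\rho\). This is the exclusion restriction (\ref{eq-exclusion}) together with the model statement \(\operatorname{pr}(R=1\mid Y,X,Z)=\pi_0(Y,X)\). The true coefficients are fixed features of the response mechanism, and (\ref{eq-relevance}) does not alter them.

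The second step is to confirm that the law of \((X,Y)\) does not depend on \(\rho\). This was already recorded in the proof of Lemma~\ref{lem-signal}: (\ref{eq-relevance}) specifies only the conditional law of \(Z\) given \((X,Y,R,S)\), through \(\rho\) and the independent noise \(\nu\). The joint law of \((X,Y,R,S)\), and hence its \((X,Y)\) marginal, is the same for every \(\rho\in[0,1]\). Consequently, for each fixed \(\gamma\), both moments in \(\Lambda\), and their ratio, are the same for all \(\rho\). Condition~\ref{def-mgf} makes them finite, and Condition~\ref{def-overlap} makes the denominator positive, so the ratio is well defined uniformly in \(\rho\).

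The only point needing care is the phrase ``at every \(\gamma\) on the curve''. A reader might worry that \(\Lambda\) is evaluated at \(\theta(s)\), and that the curve itself could move with \(\rho\). I will handle this in two ways. First, the statement is about \(\Lambda\) as a function of \(\gamma\), and that function is \(\rho\)-free by the argument above. Second, the curve of Proposition~\ref{prp-curve} is also \(\rho\)-free: it is defined by the non-\(Z\) moments \(F\), which involve only \((R,Y,X)\), exactly as noted in Lemma~\ref{lem-signal}. Hence \(\Lambda(\gamma(s))\) is \(\rho\)-free as a function of \(s\) as well. The intercept normalization of Lemma~\ref{lem-intercept} also uses only \(E(\pi_0e^{-w})\) and \(\operatorname{pr}(R=0)\), so it introduces no hidden dependence. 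I expect no real obstacle. The proof is bookkeeping, and its content is exactly the contrast with Lemma~\ref{lem-signal}: \(\rho\) scales the signal \(\varphi\), whereas the cost \(\Lambda\) never sees \(Z\).
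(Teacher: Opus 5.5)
Your proposal is correct and follows the paper's own argument: via Theorem~\ref{thm-decomposition}, \(\Lambda\) is a ratio of expectations of \(\pi_0e^{-kw}\), which are functions of \((X,Y)\) alone, and (\ref{eq-relevance}) leaves the law of \((X,Y)\) unchanged. Your further remark that the curve itself is \(\rho\)-free, so that \(\Lambda\{\gamma(s)\}\) is \(\rho\)-free as a function of \(s\), does not appear in the paper's proof of this proposition, but it agrees with the proof of Lemma~\ref{lem-signal} and is a harmless addition.
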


\begin{proof}
By Theorem~\ref{thm-decomposition},
\(\Lambda(\gamma)=E(\pi_0e^{-2w})/\{E(\pi_0e^{-w})\}^{2}\) with
\(w=\gamma_X^{\mathrm{T}}X+\gamma_Y Y\); neither \(\pi_0\) nor \(w\)
involves \(Z\), so \(\Lambda\) is a functional of the law of \((X,Y)\),
and that law is free of \(\rho\) because (\ref{eq-relevance}) fixes the
conditional law of \(Z\) given \((X,Y)\) and leaves the marginal alone.
The shadow variable enters only through the moment condition that
identifies \(\gamma\), never through the weights.
\end{proof}

Before the representation can be used, its parameterization must be
matched to the curve. Theorem~\ref{thm-integral} is stated in the tilt
magnitude \(T\), Theorem~\ref{thm-decomposition} in the response
parameter \(\gamma\). Along the curve \(\gamma_X(s)\) stays bounded
while \(|\gamma_Y|\) grows, so writing \(T=|\gamma_Y|\) identifies the
two up to the contribution of \(\gamma_X\), which is of order \(T^{-1}\)
and is quantified in Section~\ref{sec-reach-numerical}. The statement
below is asymptotic for that reason.

\begin{proposition}[Quadratic
growth]\protect\hypertarget{prp-quadratic}{}\label{prp-quadratic}

Let \(Q\) denote the \(\pi_0\)-weighted law and \(Q_s\) its tilt by
\(e^{sW}\), \(W=-w\), and let \(T=|\gamma_Y|\). If the tilted variance
converges, \(\operatorname{var}_{Q_s}(W)\to v_\infty\in(0,\infty)\) as
\(s\to\infty\), then \(\log\Lambda(T)/T^{2}\to v_\infty\).

\end{proposition}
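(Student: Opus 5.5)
The plan is to apply Theorem~\ref{thm-integral} to \(W=-w\) under the base law \(Q\). That representation is stated for a random variable \(W\) and a probability base law. Since \(dQ/dP=\pi_0/E(\pi_0)\), we have
\[
\Lambda(\gamma)=\frac{E_Q(e^{2TW'})}{\{E_Q(e^{TW'})\}^{2}}\cdot\frac{1}{\operatorname{pr}(R=1)},
\]
where \(W'\) is the unit-tilt variable. So \(\log\Lambda\) differs from the quantity in (\ref{eq-integral}) by the constant \(-\log\operatorname{pr}(R=1)\), which is \(o(T^{2})\). I will work in the pure-scaling reading the statement adopts, \(\Lambda(T)=E_Q(e^{2TW})/\{E_Q(e^{TW})\}^{2}\) up to that constant. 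The \(O(T^{-1})\) drift from holding \(\gamma_X\) fixed is deferred to Section~\ref{sec-reach-numerical}, as the text preceding the proposition says. Finiteness of \(K\) on an open interval containing \([0,2T]\) for every \(T\) in the scanned range comes from Condition~\ref{def-mgf}.

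The core step is an Abel/Cesàro-type averaging argument on (\ref{eq-integral}). Divide by \(T^{2}\) and substitute \(s=Tu\):
\[
\frac{\log\Lambda(T)}{T^{2}}=\int_0^1(1-u)\bigl\{4\operatorname{var}_{2Tu}(W)-2\operatorname{var}_{Tu}(W)\bigr\}\,du .
\]
For each fixed \(u\in(0,1]\), the integrand tends to \(4v_\infty-2v_\infty=2v_\infty\) as \(T\to\infty\), by the hypothesis \(\operatorname{var}_{Q_s}(W)\to v_\infty\). Since \(\int_0^1(1-u)\,du=1/2\), the limit is \(v_\infty\), provided the passage to the limit can be justified.

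Justifying that interchange is the main obstacle, and I would handle it by dominated convergence. The function \(s\mapsto\operatorname{var}_s(W)=K''(s)\) is continuous on \([0,\infty)\), because \(K\) is analytic in the interior of its domain. It has a finite limit at infinity, so it is bounded, say by \(V\), on \([0,\infty)\). The integrand is then dominated by \(6V(1-u)\), which is integrable on \([0,1]\). This covers the point \(u=0\), which is a single point and so irrelevant. An \(\varepsilon\)-splitting works equally well: choose \(s_0\) with \(|\operatorname{var}_s-v_\infty|<\varepsilon\) for \(s>s_0\), bound the contribution of \([0,s_0/T]\) by \(6V s_0/T\to0\), and bound the remainder within \(3\varepsilon\) of \(v_\infty\).

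The step I expect to need care is confirming that \(K\) is finite on all of \([0,\infty)\). Without this, \(\operatorname{var}_s\) is not defined for large \(s\) and the hypothesis would be vacuous. So I will read the convergence hypothesis as including finiteness of \(K\) on the whole half-line. Finally, I will note that the quantities are positive, consistent with \(v_\infty>0\), and that the additive constant and the \(\gamma_X\) correction vanish after division by \(T^{2}\).
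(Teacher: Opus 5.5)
Your proposal is correct and is essentially the paper's proof. Substitute \(s=Tu\) in (\ref{eq-integral}), note that the integrand tends to \(2v_\infty\) pointwise, get boundedness from the continuity and convergence of the tilted variance, and pass to the limit under the integral using \(\int_0^1(1-u)\,du=1/2\); your explicit tracking of the additive constant \(-\log\operatorname{pr}(R=1)\) coming from the rescaling in (\ref{eq-chisq}), which the paper's proof passes over silently, is a correct and harmless refinement, since it vanishes after division by \(T^{2}\).
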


\begin{proof}
By (\ref{eq-integral}), \(\log\Lambda(T)=\int_0^T(T-s)g(s)\,ds\) with
\(g(s)=4\operatorname{var}_{Q_{2s}}(W)-2\operatorname{var}_{Q_s}(W)\to2v_\infty\).
The kernel \((T-s)/T^{2}\) integrates to \(1/2\) over \([0,T]\) and puts
vanishing mass on any fixed initial segment. The substitution \(s=Tu\)
turns the left side into \(\int_0^1(1-u)g(Tu)\,du\), and \(g\) is
bounded---\(\operatorname{var}_{Q_s}(W)\) is continuous in \(s\) and
convergent---so the limit may be taken under the integral:
\(T^{-2}\int_0^T(T-s)g(s)\,ds\to2v_\infty\cdot\tfrac12=v_\infty\).
\end{proof}

Corollary~\ref{cor-gaussian} is the special case
\(\operatorname{var}_{Q_s}(W)\equiv v\), and
Corollary~\ref{cor-sandwich} brackets \(v_\infty\) when the variance is
only known to lie between \(v_-\) and \(v_+\). The hypothesis is neither
vacuous nor automatic: it fails exactly when the tilt keeps finding new
mass, which is the atomless case of Corollary~\ref{cor-atom}. Since
\(\operatorname{var}_{Q_s}(W)\) involves only the law of \((X,Y)\),
\(v_\infty\) inherits the \(\rho\)-freeness of
Proposition~\ref{prp-rho-free}.

Combining Corollary~\ref{cor-ess} with Proposition~\ref{prp-quadratic}
bounds how far the scan can be carried before the effective sample size
falls below any fixed threshold \(m\).

\begin{theorem}[The reach of the
scan]\protect\hypertarget{thm-reach}{}\label{thm-reach}

Let \(m>0\) be the smallest effective sample size at which the statistic
can be evaluated. Then the scan reaches
\begin{equation}\phantomsection\label{eq-reach}{
  |\gamma_Y|_{\max}
  =\left[\frac{\log(n/m)-2\log\{\operatorname{pr}(R=0)\}}{v_\infty}\right]^{1/2}\{1+o(1)\} ,
}\end{equation} which grows like \((\log n)^{1/2}\) and does not depend
on \(\rho\). Moreover \(\Lambda\) itself exceeds the largest
representable number once
\(\log\Lambda>\log(\texttt{DBL\_MAX})\doteq709.78\), so that
\(|\gamma_Y|\le\{709.78/v_\infty\}^{1/2}\) irrespective of \(n\).

\end{theorem}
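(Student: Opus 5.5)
The plan is to invert the effective-sample-size identity (\ref{eq-ess}) of Corollary~\ref{cor-ess} for \(\Lambda\), and then invert the quadratic growth of Proposition~\ref{prp-quadratic} for \(|\gamma_Y|\). Write \(p=\operatorname{pr}(R=1)\) and \(q=\operatorname{pr}(R=0)\). The scan can be carried to \(\gamma\) exactly as long as \(\mathrm{ESS}(\gamma)\ge m\). By (\ref{eq-ess}) this is the condition
\[
  p+2q+q^{2}\Lambda(\gamma)\le n/m ,
\]
that is,
\[
  \Lambda(\gamma)\le q^{-2}\{n/m-p-2q\}.
\]
Taking logarithms, the reach is determined by
\[
  \log\Lambda(\gamma)\le\log(n/m)-2\log q+\log\{1-(p+2q)m/n\}.
\]
The last term is \(O(m/n)\), so it is negligible relative to \(\log(n/m)\) as \(n/m\to\infty\). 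This is the regime in which (\ref{eq-reach}) is asymptotic.

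Next I will use Proposition~\ref{prp-quadratic}, which gives \(\log\Lambda=v_\infty T^{2}\{1+o(1)\}\) with \(T=|\gamma_Y|\). The discrepancy between scaling \(w\) as a whole and moving along the curve with \(\gamma_X(s)\) bounded is an \(O(T^{-1})\) relative term, as discussed before the proposition, and it is absorbed into the \(1+o(1)\). Monotonicity of \(\log\Lambda\) in \(T\) for large \(T\) matters here, because it makes the set of reachable \(T\) an initial segment whose endpoint solves the equality. This monotonicity follows from the integral representation, since \(f''(T)\to2v_\infty>0\) eventually and so \(f\) is eventually increasing. Solving \(v_\infty T^{2}\{1+o(1)\}=\log(n/m)-2\log q+o(1)\) then gives (\ref{eq-reach}), because the square root of \(1+o(1)\) is again \(1+o(1)\). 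The growth \((\log n)^{1/2}\) is then immediate.

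Independence from \(\rho\) follows from Proposition~\ref{prp-rho-free}. Both \(\Lambda\) and \(v_\infty\) are functionals of the law of \((X,Y)\) alone, and \(p\) and \(q\) are too. Hence every ingredient of (\ref{eq-reach}) is free of \(\rho\).

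For the arithmetic bound, I will observe that any evaluation of the weights that forms \(E(\pi_0e^{-2w})\), or its sample analogue, at double precision overflows once \(\log\Lambda>\log(\texttt{DBL\_MAX})\). This holds with the normalization of Lemma~\ref{lem-intercept}, which makes the sample second moment of the weights scale with \(\Lambda\). Applying the same quadratic asymptotics then gives \(v_\infty T^{2}\lesssim709.78\), a bound that involves no \(n\).

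I expect the main obstacle to be making rigorous that the curve-induced deviation is \(o(T^{2})\) in \(\log\Lambda\). The difficulty is that Proposition~\ref{prp-quadratic} is stated for a pure tilt of a single \(W\), while \(w(s)=\gamma_X(s)^{\mathrm T}X+sY\) changes direction with \(s\). My first attempt will be to write \(w=s\{Y+\gamma_X(s)^{\mathrm T}X/s\}\). I will then bound the cumulant generating function difference by Hölder's inequality, using the boundedness of \(\gamma_X\) and finiteness of the moments in Condition~\ref{def-mgf}. The aim is to show that the perturbation changes \(K(2T)-2K(T)\) by \(O(T)\), which is \(o(T^{2})\).
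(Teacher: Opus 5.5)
Your proposal is the paper's argument: impose \(\mathrm{ESS}\ge m\) in (\ref{eq-ess}), take logarithms, substitute Proposition~\ref{prp-quadratic}, read the ceiling off the same asymptotics, and get \(\rho\)-freeness from Proposition~\ref{prp-rho-free}. The paper simply discards \(p+2q\) (your \(O(m/n)\) term is the exact version) and never proves the \(\gamma_X\)-perturbation bound you flag, asserting an \(O(T^{-1})\) relative correction it computes only in the constant-\(\pi_0\) Gaussian case; your monotonicity step is unnecessary, since inverting \(\log\Lambda=v_\infty T^{2}\{1+o(1)\}\) against a threshold tending to infinity fixes the reach regardless, and if you do the Hölder bound, note that it needs the moment generating function of \(\gamma_X^{\mathrm T}X\) under \(Q\) finite at arbitrarily large arguments in both directions (of order \(T\) for your \(O(T)\) rate), which Condition~\ref{def-mgf} does not supply but bounded or Gaussian \(X\) does.
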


\begin{proof}
Requiring \(\mathrm{ESS}\ge m\) in (\ref{eq-ess}) gives
\(\operatorname{pr}(R=0)^{2}\Lambda\le n/m\), hence
\(\log\Lambda\le\log(n/m)-2\log\operatorname{pr}(R=0)\); substitute
Proposition~\ref{prp-quadratic}. The ceiling is immediate.
\end{proof}

\subsection{Why the remedy conceals what it was meant to
reveal}\label{sec-spurious}

The two halves of the problem move in opposite directions. By
Lemma~\ref{lem-signal} the shadow variable contributes
\(E\{g_Z(\theta(s))\}=\rho\varphi(s)\), so as \(\rho\) falls the
\(Z\)-moment flattens along the curve and the set of \(\gamma_Y\) not
rejected widens without limit. By Proposition~\ref{prp-rho-free} and
Theorem~\ref{thm-reach}, the range that can be scanned is free of
\(\rho\) and grows only logarithmically in \(n\).

\begin{quote}
The range one must scan grows as relevance falls; the range one can scan
does not.
\end{quote}

The scan therefore terminates before the statistic does, at a point that
barely moves when the sample is enlarged. The output is a bounded
interval whose width is almost constant in \(n\), and nothing
distinguishes it from a genuine finite confidence set: the procedure
gives no indication whether it stopped because the statistic rose above
its critical value or because the arithmetic gave out. Satisfying
Dufour's characterization in exact arithmetic does not imply satisfying
it in floating point, and the failure is silent.

\subsection{The atomic case: saturation instead of
overflow}\label{sec-saturation}

When the outcome is binary---so that the tilt concentrates on an atom,
in the sense of Corollary~\ref{cor-atom}---part \emph{(i)} of
Proposition~\ref{prp-dichotomy} caps \(\Lambda\) at the finite limit
(\ref{eq-binary-limit}), and neither the explosion of
Section~\ref{sec-spurious} nor the ceiling of Theorem~\ref{thm-reach}
can occur. The failure inverts. Because \(\Lambda\) is bounded, so is
the second moment of the weight, and the effective sample size of
Corollary~\ref{cor-ess} stays a fixed fraction of \(n\) however far the
scan is carried; the arithmetic never gives out. What gives out instead
is the signal. By Corollary~\ref{cor-atom}, \(\log\Lambda\) approaches
its limit at an exponential rate, so beyond a moderate \(|\gamma_Y|\)
the moment condition ceases to discriminate between values of the
response coefficient.

The consequence is a threshold rather than a truncation. Writing the
identifying signal at position \(s\) as \(\rho\varphi(s)\) by
Lemma~\ref{lem-signal}, and its sampling error as \(O_p(n^{-1/2})\), the
scan can separate \(s\) from the saturated region only if
\(\rho\,|\varphi(s)|\) exceeds that error, that is only if
\begin{equation}\phantomsection\label{eq-rhostar}{
  \rho \;\gtrsim\; n^{-1/2} .
}\end{equation} Below (\ref{eq-rhostar}) the scanned set does not close
at all: it is honestly unbounded, which is the behavior
\citet{Dufour1997} requires, and the analyst is correctly informed that
the data are uninformative. An atomic outcome is therefore the benign
case, not because identification is better---it is not---but because the
failure is visible. It is also the case that arises in
Section~\ref{sec-application}, where the outcome is default.

\subsection{Numerical illustration}\label{sec-reach-numerical}

Table~\ref{tbl-reach} reports \(\log\Lambda\) by Gauss--Hermite
quadrature on the logarithmic scale, with the true propensity held fixed
while \(\gamma\) moves along the curve. Monte Carlo fails here, and so
eventually does quadrature: beyond \(|\gamma_Y|\approx6\) the tilted law
concentrates on the extreme nodes of any finite grid and the computed
value collapses. That the verification of Theorem~\ref{thm-reach} runs
into the very limit the theorem describes is not a coincidence; it is
the theorem.

\begin{longtable}[]{@{}
  >{\raggedright\arraybackslash}p{(\linewidth - 12\tabcolsep) * \real{0.3731}}
  >{\raggedleft\arraybackslash}p{(\linewidth - 12\tabcolsep) * \real{0.1045}}
  >{\raggedleft\arraybackslash}p{(\linewidth - 12\tabcolsep) * \real{0.1045}}
  >{\raggedleft\arraybackslash}p{(\linewidth - 12\tabcolsep) * \real{0.1045}}
  >{\raggedleft\arraybackslash}p{(\linewidth - 12\tabcolsep) * \real{0.1045}}
  >{\raggedleft\arraybackslash}p{(\linewidth - 12\tabcolsep) * \real{0.1045}}
  >{\raggedleft\arraybackslash}p{(\linewidth - 12\tabcolsep) * \real{0.1045}}@{}}

\caption{\label{tbl-reach}Growth of the tilt functional in the outcome
coefficient, by Gauss--Hermite quadrature with the true propensity held
fixed. The ratio rises towards the limiting tilted variance.}

\tabularnewline

\toprule\noalign{}
\endhead
\bottomrule\noalign{}
\endlastfoot
\(\gamma_Y\) & -1.000 & -2.000 & -3.000 & -4.000 & -5.000 & -6.000 \\
\(\log\Lambda\) & 2.094 & 6.956 & 16.121 & 29.460 & 46.848 & 68.192 \\
\(\log\Lambda/\gamma_Y^2\) & 2.094 & 1.739 & 1.791 & 1.841 & 1.874 &
1.894 \\

\end{longtable}

The limiting tilted variance is \(v_\infty=2.00\), which here equals
\(\operatorname{var}(Y)=b_1^{2}\operatorname{var}(X)+\sigma^{2}\): once
the exponential tilt dominates the logistic weight, the tilted law is
the outcome law itself. The ratio in Table~\ref{tbl-reach} approaches it
as Proposition~\ref{prp-quadratic} requires, but not monotonically from
the first entry, because the approach carries an \(O(|\gamma_Y|^{-1})\)
correction. When \(\pi_0\) is constant and \((X,Y)\) is Gaussian,
(\ref{eq-decomposition}) gives \(\log\Lambda=\operatorname{var}(w)\)
exactly, so \[
  \frac{\log\Lambda}{\gamma_Y^{2}}
  =\operatorname{var}(Y)-\frac{2\gamma_X\operatorname{cov}(X,Y)}{|\gamma_Y|}
   +\frac{\gamma_X^{2}\operatorname{var}(X)}{\gamma_Y^{2}} ,
\] and the cross-term between the covariate and outcome tilts holds the
ratio below \(v_\infty\) at moderate \(|\gamma_Y|\). Here that
correction is \(2\gamma_X\operatorname{cov}(X,Y)=0.6\) divided by
\(|\gamma_Y|\); adding it back to the last four entries of
Table~\ref{tbl-reach} returns 1.99, 1.99, 1.99 and 1.99, against
\(v_\infty=2.00\). With that value, \(\operatorname{pr}(R=0)=0.6\) and
\(m=30\), (\ref{eq-reach}) gives a reach of 1.5 at \(n=10^{3}\), 2.4 at
\(n=10^{6}\) and 3.6 at \(n=10^{12}\): a thousandfold increase in sample
size buys about one unit of \(\gamma_Y\). The ceiling is
\(|\gamma_Y|\le18.8\) at any \(n\).

\section{The representer route and its dependence on
relevance}\label{sec-representer}

\subsection{The construction}\label{sec-construction}

The explosion of Section~\ref{sec-dichotomy} is a property of the
inverse weight, not of the estimand, and it can be avoided.
\citet{LiMiaoTchetgen2022} show that for a functional admitting a
representer the inverse weight is unnecessary: if
\(\delta_0(X,Z;\beta)\) solves
\begin{equation}\phantomsection\label{eq-rep}{
  E\{\delta_0(X,Z;\beta)\mid R=1,X,Y\}=\psi(Y,X;\beta),
}\end{equation} then
\(E\{R\psi(\beta)+(1-R)\delta_0(X,Z;\beta)\}=E\{\psi(\beta)\}\), so that
the complete-data moment is recovered without ever forming \(1/\pi\).
Their estimator solves the empirical analogue with \(\delta_0\)
approximated on a sieve and the resulting ill-posed problem regularized.
Everything in this subsection is theirs; what follows asks how the
construction behaves as the relevance of \(Z\) vanishes, which their
analysis does not address because it carries no relevance parameter.

Fix a sieve \(b(X,Z)\in\mathbb R^{d}\) and test functions
\(a(X,Y)\in\mathbb R^{d}\), and write \[
  M=E(Rab^{\mathrm{T}}),\quad r(\beta)=E\{Ra\psi(\beta)^{\mathrm{T}}\},\quad
  v=E\{(1-R)b\},\quad \Theta(\beta)=M^{-1}r(\beta),
\] so that \(\delta_0=\Theta^{\mathrm{T}}b\). The regularized estimator
\(\hat\beta_\lambda\) solves \(\hat G_\lambda(\beta)=0\) with
\(\hat G_\lambda(\beta)=n^{-1}\sum_i\{R_i\psi_i(\beta)+(1-R_i)\hat\Theta_\lambda(\beta)^{\mathrm{T}}b_i\}\)
and \(\hat\Theta_\lambda=(\hat M+\lambda I)^{-1}\hat r\), in which
\(\hat M\) and \(\hat r\) are the sample analogues of \(M\) and \(r\)
and \(\lambda\ge0\) is a ridge penalty that keeps the inversion stable;
Section~\ref{sec-tuning} takes up its choice.

\subsection{A cancellation}\label{sec-cancellation}

Under the construction of Section~\ref{sec-curve}, in which
\(Z=\rho S+(1-\rho^{2})^{1/2}\nu\) with \(\nu\) independent of
\((X,Y,R)\) and the sieve is linear in \(Z\), the relevance enters \(M\)
in a particularly simple way.

\begin{lemma}[Factorization of the design
matrix]\protect\hypertarget{lem-factorisation}{}\label{lem-factorisation}

\(M=\tilde MD\) with \(D=\operatorname{diag}(I,\rho I)\) and
\(\tilde M\) free of \(\rho\), the second block corresponding to the
coordinates of \(b\) involving \(Z\). Consequently
\(\Theta=D^{-1}\tilde M^{-1}r\) carries a factor \(\rho^{-1}\) on those
coordinates, while \(v\) carries a factor \(\rho\) on the same
coordinates, and \begin{equation}\phantomsection\label{eq-cancellation}{
  q^{\mathrm{T}}:=v^{\mathrm{T}}M^{-1}=\tilde v^{\mathrm{T}}\tilde M^{-1}
  \quad\text{is free of }\rho .
}\end{equation}

\end{lemma}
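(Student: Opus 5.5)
The plan is to write the sieve in block form and track where $\rho$ enters. Partition $b(X,Z)=\{b_1(X)^{\mathrm{T}},b_2(X,Z)^{\mathrm{T}}\}^{\mathrm{T}}$, with $b_1$ the coordinates not involving $Z$. Since the sieve is linear in $Z$, I take $b_2=Z\,c(X)$ for a vector $c(X)$ of functions of $X$ alone; the case $b_2=Z$ is $c\equiv1$. Substituting $Z=\rho S+(1-\rho^{2})^{1/2}\nu$ gives $b_2=\rho Sc(X)+(1-\rho^{2})^{1/2}\nu c(X)$.

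First step: the $\nu$-part drops out of every expectation that enters. The test functions $a(X,Y)$, the indicator $R$, and $c(X)$ are all functions of $(X,Y,R)$, and $\nu$ is independent of $(X,Y,R)$ with mean zero. Hence
\[
E\{Ra\,b_2^{\mathrm{T}}\}=\rho\,E\{Ra\,Sc(X)^{\mathrm{T}}\}
\quad\text{and}\quad
E\{(1-R)b_2\}=\rho\,E\{(1-R)Sc(X)\}.
\]
Define $\tilde b=\{b_1^{\mathrm{T}},(Sc(X))^{\mathrm{T}}\}^{\mathrm{T}}$, $\tilde M=E(Ra\tilde b^{\mathrm{T}})$ and $\tilde v=E\{(1-R)\tilde b\}$. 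Column by column, $M=\tilde M D$ and $v=D\tilde v$ with $D=\operatorname{diag}(I,\rho I)$. That $\tilde M$ and $\tilde v$ are free of $\rho$ follows exactly as in the proof of Lemma~\ref{lem-signal}: they are built from $(X,Y,R,S)$, whose joint law (\ref{eq-relevance}) leaves untouched. Here I need $S$ to be a function of, or at least jointly distributed with, $(X,Y)$ in a $\rho$-free way, which is its definition as the outcome-noise component.

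Second step, the algebra, for $\rho>0$ and $\tilde M$ nonsingular. Nonsingularity of $\tilde M$ is equivalent to that of $M$ when $\rho>0$, and is assumed since $\Theta=M^{-1}r$ is used. Then $M^{-1}=D^{-1}\tilde M^{-1}$, so $\Theta=D^{-1}\tilde M^{-1}r$. The matrix $r=E\{Ra\psi^{\mathrm{T}}\}$ does not involve $Z$ at all, so the factor $\rho^{-1}$ on the $Z$-coordinates is the only $\rho$-dependence of $\Theta$. Finally,
\[
v^{\mathrm{T}}M^{-1}=\tilde v^{\mathrm{T}}D\,D^{-1}\tilde M^{-1}=\tilde v^{\mathrm{T}}\tilde M^{-1},
\]
using that $D$ is diagonal and hence symmetric. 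This gives (\ref{eq-cancellation}).

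The main obstacle is not the algebra. It is pinning down what ``linear in $Z$'' must mean for the factorization to be exact. If $b_2$ contained $Z^{2}$ or $Z$ interacted with $Y$-dependent terms, the $\nu$-part would not vanish. For example, $E(\nu^{2})=1$ would leave a term $(1-\rho^{2})$, and $M$ would no longer factor through a diagonal $D$. So I would state the precise hypothesis explicitly: $b_2(X,Z)=Z\,c(X)$, with $a$ depending only on $(X,Y)$. I would also note that $\tilde M$ is then the design matrix one would obtain with $S$ in place of $Z$, which makes the $\rho$-freeness transparent.
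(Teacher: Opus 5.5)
Your proposal is correct and follows the paper's proof essentially step for step. It uses the same partition $b=(b^{0}(X),Zk(X))$, the same observation that the $\nu$-parts of the $Z$-columns of $M$ and of the $Z$-block of $v$ vanish because $\nu$ is independent of $(X,Y,R)$ with mean zero, and the same algebra $v^{\mathrm T}M^{-1}=\tilde v^{\mathrm T}DD^{-1}\tilde M^{-1}=\tilde v^{\mathrm T}\tilde M^{-1}$. Your explicit hypotheses, $b_2=Zc(X)$ and $\rho>0$, and your appeal to the $\rho$-free joint law of $(X,Y,R,S)$, state precisely what the paper leaves implicit in ``the sieve is linear in $Z$''.
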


The proof is in Section~\ref{sec-proofs}. The factor \(\rho\) in \(v\)
is where the construction of Section~\ref{sec-curve} enters: the
\(Z\)-entries of \(b\) reach \(v=E\{(1-R)b\}\) only through
\(E\{(1-R)Z\}=\rho\,E\{(1-R)S\}\), the \(\nu\) part averaging to zero
because it is independent of \((X,Y,R)\). The same \(\rho\) that scales
the signal therefore scales \(v\), and cancels against the \(\rho^{-1}\)
that inverting \(M\) produces.

The symbol \(q\) is reserved for this weight; \(w\) continues to mean
the linear predictor of (\ref{eq-w}). The cancellation
(\ref{eq-cancellation}) is the reason the orders below are one power of
\(\rho\) smaller than norm bounds suggest, and it is easy to miss:
bounding \(\|\Theta^{\mathrm{T}}q_\lambda\|\) by
\(\|\Theta\|\,\|(M+\lambda)^{-1}\|\,\|v\|\) destroys the cancellation
and inflates every rate by \(\rho^{-1}\).

\subsection{Orders in the relevance}\label{sec-ordersub}

\begin{theorem}[Orders in the
relevance]\protect\hypertarget{thm-orders}{}\label{thm-orders}

Under Lemma~\ref{lem-factorisation} and the regularity conditions of
Section~\ref{sec-conditions},
\begin{equation}\phantomsection\label{eq-orders}{
  \|G_\lambda(\beta_0)\|\asymp\lambda/\rho,
  \qquad
  \operatorname{sd}(\hat\beta_\lambda)\asymp n^{-1/2}/\rho,
  \qquad
  \operatorname{cond}(M)\asymp\rho^{-1} .
}\end{equation}

\end{theorem}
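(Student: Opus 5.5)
The plan is to derive all three orders from the factorization $M=\tilde M D$ of Lemma~\ref{lem-factorisation}, and to use the cancellation (\ref{eq-cancellation}) wherever a norm bound would lose a power of $\rho$.

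\textbf{Condition number.} Start with $\operatorname{cond}(M)$, since the other two orders rest on it. Because $\tilde M$ is free of $\rho$ and nonsingular under the regularity conditions of Section~\ref{sec-conditions}, we have
\[
\sigma_{\max}(M)\le\|\tilde M\|\,\|D\|=O(1),\qquad
\sigma_{\min}(M)\ge\sigma_{\min}(\tilde M)\,\rho .
\]
Both bounds are attained to leading order. For the lower bound, test $M$ on a unit vector supported on the $Z$-block. For the upper bound, test on the non-$Z$ block, whose image is a nonzero $\rho$-free column block of $\tilde M$. This gives $\operatorname{cond}(M)\asymp\rho^{-1}$, mirroring the argument already given for $\sigma_{\min}(G_{\gamma\gamma})$ after Lemma~\ref{lem-signal}.

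\textbf{Regularization bias.} Next treat $G_\lambda(\beta_0)$, the population version of $\hat G_\lambda$. Since $G_0(\beta_0)=0$, the bias is $v^{\mathrm T}\{(M+\lambda I)^{-1}-M^{-1}\}r$. The resolvent identity rewrites this as
\[
-\lambda\, v^{\mathrm T}(M+\lambda I)^{-1}M^{-1}r=-\lambda\,q_\lambda^{\mathrm T}\Theta ,
\]
where $q_\lambda^{\mathrm T}=v^{\mathrm T}(M+\lambda I)^{-1}$. For $\lambda\ll\rho$ we have $q_\lambda=q+O(\lambda/\rho)$, and by (\ref{eq-cancellation}) $q$ is $\rho$-free. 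Meanwhile $\Theta=D^{-1}\tilde M^{-1}r$ has $Z$-coordinates of exact order $\rho^{-1}$. The product $q^{\mathrm T}\Theta$ is therefore $\rho^{-1}\,q_Z^{\mathrm T}(\tilde M^{-1}r)_Z+O(1)$, which gives $\|G_\lambda(\beta_0)\|\asymp\lambda/\rho$.

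\textbf{Main obstacle.} The lower bound in the bias step is where the argument can fail. It needs $q_Z^{\mathrm T}(\tilde M^{-1}r)_Z\neq0$, a $\rho$-free constant I would add as a nondegeneracy condition in Section~\ref{sec-conditions}. It also needs a careful expansion of $q_\lambda$ that avoids the crude bound $\|\Theta\|\,\|(M+\lambda)^{-1}\|\,\|v\|$, which would give $\lambda/\rho^{2}$. To make the expansion exact, I would conjugate by $D$ and write
\[
(M+\lambda I)^{-1}=D^{-1}(\tilde M+\lambda D^{-1})^{-1},
\]
so that the perturbation $\lambda D^{-1}$ has size $\lambda/\rho$. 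This is the same threshold as the conditioning requirement, and it exhibits the coherence claimed in Section~\ref{sec-cancellation}.

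\textbf{Standard deviation.} For $\operatorname{sd}(\hat\beta_\lambda)$, linearize $\hat G_\lambda$ at $\beta_0$ by the delta method:
\[
\hat\beta_\lambda-\beta_\lambda\approx-\{\partial_\beta G_\lambda\}^{-1}(\hat G_\lambda-G_\lambda)(\beta_0).
\]
The Jacobian $\partial_\beta G_\lambda=E(R\partial\psi)+q_\lambda^{\mathrm T}\partial_\beta r$ is $O(1)$ and $\rho$-free up to $o(1)$ by the cancellation. The fluctuation of $\hat G_\lambda$ contains three kinds of term:
\begin{enumerate}
\item the term $n^{-1}\sum_i(1-R_i)b_i^{\mathrm T}\Theta$, whose $Z$-part is $\Theta_Z^{\mathrm T}$ times an average of $(1-R_i)Z_i$;
\item the term $q_\lambda^{\mathrm T}(\hat r-r)$;
\item the term $-q_\lambda^{\mathrm T}(\hat M-M)\Theta$.
\end{enumerate}
In term 1 the average of $(1-R_i)Z_i$ has $O(1)$ standard deviation, from the $\nu$ part, which does not vanish with $\rho$. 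Term 2 is $O_p(n^{-1/2})$. Term 3 is $O_p(n^{-1/2}\rho^{-1})$ through $\Theta_Z$. Hence the variance is of order $n^{-1}\rho^{-2}$. Its lower bound again needs a nondegenerate constant: the variance of $(1-R)\nu$ times $(\tilde M^{-1}r)_Z$, corrected by the $\hat M$ term, must be nonzero. I would state this as part of the regularity conditions, so the upper and lower rates match and $\operatorname{sd}\asymp n^{-1/2}/\rho$.
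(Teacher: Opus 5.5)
Your proposal is correct and follows the paper's proof: the exact bias identity $-\lambda\,q_\lambda^{\mathrm T}\Theta$ with the cancellation (\ref{eq-cancellation}) keeping $q_\lambda$ free of $\rho$, the condition number read off $M=\tilde MD$, and the standard deviation from the linearization, whose fluctuation terms are the components of the influence function (\ref{eq-if}), with $a^{\mathrm T}q$ free of $\rho$ and $\delta_0$ carrying $\rho^{-1}$. Two things you add make precise what the paper's terser proof leaves implicit: the conjugation $(M+\lambda I)^{-1}=D^{-1}(\tilde M+\lambda D^{-1})^{-1}$, which gives $q_\lambda=q+O(\lambda/\rho)$, and the explicit nondegeneracy conditions needed for the lower bounds in the two $\asymp$ statements.
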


\begin{theorem}[Influence function and asymptotic
normality]\protect\hypertarget{thm-if}{}\label{thm-if}

With \(\lambda_n=o(n^{-1/2})\),
\(n^{1/2}(\hat\beta_{\lambda_n}-\beta_0)\to N\{0,H^{-1}\operatorname{var}(\mathrm{IF})H^{-\mathrm{T}}\}\),
where \begin{equation}\phantomsection\label{eq-if}{
  \mathrm{IF}(O)=R\psi(\beta_0)+(1-R)\delta_0(X,Z)+R\{\psi(\beta_0)-\delta_0(X,Z)\}a(X,Y)^{\mathrm{T}}q
}\end{equation} and, by (\ref{eq-rep}),
\(H=E\{\partial\psi(\beta_0)/\partial\beta^{\mathrm{T}}\}\) is the
complete-data Jacobian.

\end{theorem}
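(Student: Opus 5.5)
The proof is a standard Z-estimator argument for the estimating equation \(\hat G_{\lambda_n}(\beta)=0\). What is specific to the representer construction is a linearization of \(\hat\Theta_\lambda\) in \((\hat M,\hat r)\). The first step removes the regularization bias. At the population level, \(G_\lambda(\beta_0)=E\{R\psi(\beta_0)\}+v^{\mathrm T}(M+\lambda I)^{-1}r(\beta_0)\) differs from \(G_0(\beta_0)=E\{\psi(\beta_0)\}=0\) by \(-\lambda\,v^{\mathrm T}(M+\lambda I)^{-1}M^{-1}r\). Its norm is \(O(\lambda/\rho)\) by Theorem~\ref{thm-orders}. For fixed \(\rho\) this is \(O(\lambda_n)=o(n^{-1/2})\), so the bias does not contribute at the \(n^{1/2}\) scale. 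The whole argument is for fixed \(\rho>0\), so the \(\rho\)-dependence is absorbed into constants.

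The second step expands the sample equation at \(\beta_0\). Write \(\hat\Theta_\lambda-\Theta_\lambda=(\hat M+\lambda I)^{-1}\{(\hat r-r)-(\hat M-M)\Theta_\lambda\}\) and \(\hat v=n^{-1}\sum_i(1-R_i)b_i\). Then
\[
  \hat G_\lambda(\beta_0)=n^{-1}\sum_i\{R_i\psi_i+(1-R_i)\Theta^{\mathrm T}b_i\}
  +q^{\mathrm T}\{(\hat r-r)-(\hat M-M)\Theta\}+o_p(n^{-1/2})+O(\lambda_n).
\]
The cross terms \((\hat v-v)^{\mathrm T}(\hat\Theta-\Theta)\) are \(O_p(n^{-1})\). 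Replacing \((\hat M+\lambda_nI)^{-1}\) and \(\Theta_{\lambda_n}\) by \(M^{-1}\) and \(\Theta\) costs \(O_p(n^{-1/2}\lambda_n)\) because \(d\) is fixed and \(M\) is nonsingular.

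Next write \(q^{\mathrm T}(\hat r-r)-q^{\mathrm T}(\hat M-M)\Theta\) as \(n^{-1}\sum_i R_i\{\psi_i-\Theta^{\mathrm T}b_i\}^{\mathrm T}\)-weighted terms. Each centred term is \(R\{\psi(\beta_0)-\delta_0(X,Z)\}a(X,Y)^{\mathrm T}q\), using \(\delta_0=\Theta^{\mathrm T}b\). Its expectation \(q^{\mathrm T}(r-M\Theta)\) vanishes by the definition of \(\Theta\), which is (\ref{eq-rep}) projected on \(a\). Collecting the pieces gives \(\hat G_{\lambda_n}(\beta_0)=n^{-1}\sum_i\mathrm{IF}(O_i)+o_p(n^{-1/2})\), with \(E\{\mathrm{IF}\}=0\) by (\ref{eq-rep}). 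The Lindeberg–Lévy CLT then applies, given finite second moments from Section~\ref{sec-conditions}.

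The third step handles the Jacobian and consistency. Uniform convergence of \(\hat G_{\lambda_n}\) to \(G_0\) on a compact neighbourhood, together with a unique zero, gives consistency. A mean-value expansion gives \(\partial\hat G/\partial\beta^{\mathrm T}\to_p E\{R\,\partial\psi/\partial\beta^{\mathrm T}+(1-R)\,\partial\Theta^{\mathrm T}b/\partial\beta^{\mathrm T}\}\). By (\ref{eq-rep}) applied to \(\partial\psi/\partial\beta\), which the representer also reproduces since \(r\) is linear in \(\psi\), this equals \(H\).

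The main obstacle is making the linearization of \((\hat M+\lambda_nI)^{-1}\) rigorous. This needs \(\lambda_n\) to be small relative to \(\sigma_{\min}(M)\asymp\rho\), which holds for fixed \(\rho\). It also needs the remainder to be uniform in \(\beta\) near \(\beta_0\) so that the Jacobian step is legitimate. I would handle this with the Neumann series bound \(\|(\hat M+\lambda I)^{-1}-M^{-1}\|\le 2\|M^{-1}\|^2(\|\hat M-M\|+\lambda)\) on the event \(\|\hat M-M\|\le\sigma_{\min}(M)/2\), whose probability tends to one.
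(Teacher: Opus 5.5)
Your proposal is correct and is essentially the paper's proof. You drop the regularization bias $\lambda_n\|\Theta^{\mathrm T}q_{\lambda_n}\|=O(\lambda_n/\rho)$, expand $\hat\Theta_\lambda$ to first order in $(\hat M-M,\hat r-r)$ to obtain the three-term influence function, whose correction $R\{\psi(\beta_0)-\delta_0\}a^{\mathrm T}q$ is centred because $r=M\Theta$, and identify the Jacobian with $H$ through the representer for $\partial\psi/\partial\beta^{\mathrm T}$. That last step is the paper's Condition~\ref{def-rep2}, and your appeal to linearity of $r$ in $\psi$ delivers it only if the representer equation holds exactly at every $\beta$ near $\beta_0$. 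The remaining differences are bookkeeping: you expand directly about $\Theta$ and $q$ rather than about $\Theta_\lambda$ and $q_\lambda$ followed by $\lambda_n\to0$, and you sketch consistency where the paper simply assumes it as Condition~\ref{def-cons}.
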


The proof is in Section~\ref{sec-proofs}, in three steps: the
regularized representer coefficients are expanded about their population
value, the Jacobian of the estimating function is identified with \(H\),
and the two are combined; \(\lambda_n=o(n^{-1/2})\) is what makes the
regularization bias negligible, by (\ref{eq-threshold}) below.

\begin{proof}
The three orders of Theorem~\ref{thm-orders} have different sources. The
bias is
\(\|G_\lambda(\beta_0)\|=\lambda\|\Theta^{\mathrm{T}}q_\lambda\|\)
exactly, and by Lemma~\ref{lem-factorisation}
\(\|\Theta\|\asymp\rho^{-1}\) while \(q_\lambda\) is free of \(\rho\)
for \(\lambda\ll\sigma_{\min}(M)\), which gives \(\lambda/\rho\). The
condition number follows from \(M=\tilde MD\) with
\(D=\operatorname{diag}(I,\rho I)\), whose second block carries the
factor \(\rho\). The standard deviation is a consequence of
Theorem~\ref{thm-if} rather than a separate claim: in (\ref{eq-if}) the
factor \(a^{\mathrm{T}}q\) is free of \(\rho\) by
(\ref{eq-cancellation}), while the residual \(\psi(\beta_0)-\delta_0\)
carries \(\rho^{-1}\), because \(\delta_0=\Theta^{\mathrm{T}}b\) does;
hence \(\operatorname{var}(\mathrm{IF})\asymp\rho^{-2}\) and
\(\operatorname{sd}(\hat\beta_\lambda)\asymp n^{-1/2}\rho^{-1}\).
\end{proof}

Two remarks on the proof. A norm bound that splits the product
\(\|\Theta^{\mathrm{T}}q_\lambda\|\) instead gives \(\lambda/\rho^{2}\),
and the cancellation is what the sharper rate rests on. And the orders
hold for \(\lambda\ll\sigma_{\min}(M)\); at \(\lambda=\lambda^{*}\) the
\(\rho^{-1}\) amplification is damped by the factor of
(\ref{eq-sdlambda}).

Dividing the two orders in (\ref{eq-orders}) gives a threshold in which
the relevance cancels:
\begin{equation}\phantomsection\label{eq-threshold}{
  \frac{\text{bias}}{\text{standard deviation}}=\frac{\lambda/\rho}{n^{-1/2}/\rho}=\lambda n^{1/2},
}\end{equation} so the regularization bias is asymptotically negligible
if and only if \(\lambda_n=o(n^{-1/2})\), with no condition on \(\rho\)
whatever. Under the finite-sample rule of Section~\ref{sec-tuning} this
requires \(\rho_n n^{1/2}\to\infty\). We note that a cruder bias bound,
obtained by the norm splitting warned against above, yields
\(\lambda/\rho^{2}\) and hence the threshold \(\rho_n\gg n^{-1/4}\);
that is not merely conservative but incoherent, since it fails to
coincide with the independent threshold governing
\(\sigma_{\min}(\hat M)\ge c\rho-O_p(n^{-1/2})\), whereas
(\ref{eq-threshold}) does.

\subsection{What happens when the relevance drifts}\label{sec-drifting}

Theorem~\ref{thm-orders} and Theorem~\ref{thm-if} are statements about
\(n\to\infty\) with \(\rho\) held fixed. They do not by themselves
license any conclusion about a sequence along which \(\rho\) moves with
\(n\), and the lesson of the weak-instrument literature is that the
normal approximation fails along such sequences, non-uniformly
\citep{StockWright2000, AndrewsCheng2012}. Since non-uniformity is the
subject of this paper, the drifting regime has to be treated on its own
terms.

\begin{theorem}[The local-to-zero limit is
non-degenerate]\protect\hypertarget{thm-drifting}{}\label{thm-drifting}

Let \(\rho_n=c/n^{1/2}\) with \(c>0\) fixed, partition the sieve as
\(b=(b^{0}(X),Zk(X))\), and assume that \(A:=E(Rab^{0\mathrm{T}})\) has
full column rank, that the fourth moments of Theorem~\ref{thm-orders}
hold, and that
\(\Omega:=\operatorname{var}\{\operatorname{vec}(Ra\nu k^{\mathrm{T}})\}\)
is nonsingular on the orthogonal complement of the column space of
\(A\). Then \begin{equation}\phantomsection\label{eq-drifting}{
  \hat q:=\hat M^{-\mathrm{T}}\hat v\;\Rightarrow\;[A,\Psi]^{-\mathrm{T}}(v_0,\Phi),
  \qquad \Psi=c\tilde m+\xi,\quad \Phi=c\tilde v+\zeta,
}\end{equation} with \((\xi,\zeta)\) jointly Gaussian. The limit is
random, not constant, and depends on \(c\); consequently
\(\hat\beta-\beta_0=O_p(1)\) and is not \(o_p(1)\) at any rate.

\end{theorem}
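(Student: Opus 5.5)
The plan is to write $\hat M$ and $\hat v$ block by block, following the partition $b=(b^{0}(X),Zk(X))$. The $b^{0}$ block needs no special treatment: $n^{-1}\sum_i R_ia_ib_i^{0\mathrm{T}}\to_p A$ and $n^{-1}\sum_i(1-R_i)b_i^{0}\to_p v_0$ by the law of large numbers. All the work is in the $Z$ block. Substitute $Z_i=\rho_nS_i+(1-\rho_n^{2})^{1/2}\nu_i$ with $\rho_n=c/n^{1/2}$ into $\hat M_{Z}=n^{-1}\sum_iR_ia_iZ_ik_i^{\mathrm{T}}$. The $S$ part equals $\rho_n\{\tilde m+O_p(n^{-1/2})\}$, where $\tilde m:=E(RaSk^{\mathrm{T}})$ is the $\rho$-free block of Lemma~\ref{lem-factorisation}. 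The $\nu$ part has mean zero because $\nu$ is independent of $(X,Y,R)$, so it is $O_p(n^{-1/2})$.

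The key step is to rescale the $Z$ columns by $n^{1/2}$. This is right-multiplication by $\operatorname{diag}(I,n^{1/2}I)$, which does not change the solution $\hat q$ apart from a compensating rescaling of the $Z$ entries of $\hat v$. After rescaling, the $Z$ block becomes $c\tilde m+n^{-1/2}\sum_iR_ia_i\nu_ik_i^{\mathrm{T}}+o_p(1)$. The multivariate central limit theorem applies to the vector $\operatorname{vec}(Ra\nu k^{\mathrm{T}})$ together with its companion $(1-R)\nu k$ from $\hat v$. It is valid under the assumed fourth moments and gives jointly Gaussian limits $(\xi,\zeta)$ with covariance built from $\Omega$. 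Hence $\hat M\,\operatorname{diag}(I,n^{1/2}I)\Rightarrow[A,\Psi]$ with $\Psi=c\tilde m+\xi$. In the same way the rescaled $Z$ entries of $\hat v$ converge to $\Phi=c\tilde v+\zeta$, where $\tilde v=E\{(1-R)Sk\}$.

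I expect the main obstacle to be the invertibility of the limit $[A,\Psi]$, which is what the continuous mapping theorem requires. The rank of $A$ alone is not enough: $\Psi$ must add directions outside the column space of $A$ almost surely. I would project $\Psi$ onto the orthogonal complement $A^{\perp}$. The projected part is Gaussian with mean $cP_{A^{\perp}}\tilde m$, and by the nonsingularity of $\Omega$ on $A^{\perp}$ its covariance is nondegenerate. A nondegenerate Gaussian matrix lies on the algebraic variety $\{\det=0\}$ with probability zero, so $[A,\Psi]$ is nonsingular almost surely. The continuous mapping theorem then gives the limit in (\ref{eq-drifting}), and the rescaling is undone consistently on the $q$ side. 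The limit is nonconstant because it is a nonlinear function of the nondegenerate $(\xi,\zeta)$. It depends on $c$ through the means $c\tilde m$ and $c\tilde v$.

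For the consequence about $\hat\beta$, note that $\hat G_\lambda(\beta)$ equals the complete-case term plus $\hat q^{\mathrm{T}}\hat r(\beta)$, up to the regularization. Since $\hat q$ converges to a random limit, $\hat G$ converges uniformly to a random criterion $G_\infty(\beta)=E\{R\psi(\beta)\}+q_\infty^{\mathrm{T}}r(\beta)$. The zero $\beta_0$ holds only at the population $q$, and $q_\infty$ differs from it with probability one. An argmin/zero continuity argument, using the tightness that follows from the limits being tight, therefore gives $\hat\beta\Rightarrow\beta_\infty$ with $\beta_\infty\neq\beta_0$ almost surely. So $\hat\beta-\beta_0=O_p(1)$ but is not $o_p(1)$.
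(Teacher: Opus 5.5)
Your derivation of the limit of $\hat q$ is the paper's proof. Rescaling the $Z$-columns of $\hat M$ and the $Z$-entries of $\hat v$ by the same $n^{1/2}$ is exactly its observation that the $n^{-1/2}$ inherited from $\rho_n$ cancels from both sides of the second block of $\hat M^{\mathrm T}\hat q=\hat v$, and the joint central limit theorem, Slutsky and continuous-mapping steps are the same. Since $\hat q=(\hat MD_n)^{-\mathrm T}D_n\hat v$ identically, with $D_n=\operatorname{diag}(I,n^{1/2}I)$, there is nothing to ``undo on the $q$ side''. Your projection-and-determinant argument for almost-sure invertibility of $[A,\Psi]$ is correct and supplies what the paper only asserts.

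The gap is in your last paragraph. That paragraph goes beyond the paper's proof, which stops at the limit of $\hat q$ and treats the claim about $\hat\beta$ as immediate. The step ``the zero $\beta_0$ holds only at the population $q$'' is false. Write $q=\tilde M^{-\mathrm T}\tilde v$ for the population weight of (\ref{eq-cancellation}). Then $G_\infty(\beta_0)=r(\beta_0)^{\mathrm T}(q_\infty-q)$, which vanishes whenever $q_\infty-q$ is orthogonal to the $\dim\beta$ columns of $r(\beta_0)$, so $q_\infty\neq q$ almost surely proves nothing. What is needed is the structure of the limit.

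Both $q$ and $q_\infty$ satisfy $A^{\mathrm T}q=v_0$. Solving the second block, using $\tilde m^{\mathrm T}q=\tilde v_Z$, gives $q_\infty-q=U_\perp(U_\perp^{\mathrm T}\Psi)^{-\mathrm T}(\zeta-\xi^{\mathrm T}q)$, where the columns of $U_\perp$ are an orthonormal basis of $\operatorname{col}(A)^{\perp}$. With $\tilde\Theta=\tilde M^{-1}r=(\tilde\Theta_0,\tilde\Theta_Z)$, so that $r=A\tilde\Theta_0+\tilde m\tilde\Theta_Z$,
\[
G_\infty(\beta_0)=\tilde\Theta_Z^{\mathrm T}\tilde m^{\mathrm T}U_\perp\,(U_\perp^{\mathrm T}\Psi)^{-\mathrm T}(\zeta-\xi^{\mathrm T}q).
\]
This is identically zero if the representer does not load on the $Zk$ coordinates; the random limit of $\hat q$ is then harmless and $\hat\beta$ stays consistent. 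So you need $\tilde\Theta_Z(\beta_0)\neq0$, and then an argument that the display is not almost surely zero. One works:
\begin{itemize}
\item Multiplied by $\det(U_\perp^{\mathrm T}\Psi)$, the display is a polynomial in the centred Gaussian $(\xi,\zeta)$.
\item Its linear part is $c^{d_1-1}\det(U_\perp^{\mathrm T}\tilde m)\,\tilde\Theta_Z^{\mathrm T}(\zeta-\xi^{\mathrm T}q)$, where $d_1$ is the number of $Zk$ coordinates and $\det(U_\perp^{\mathrm T}\tilde m)\neq0$ because $\tilde M$ is nonsingular.
\item For every fixed $\bar q$, $\operatorname{var}(\zeta-\xi^{\mathrm T}\bar q)\succeq E\{(1-\pi_0)kk^{\mathrm T}\}$, which is positive definite under overlap and the assumption on $\Omega$.
\end{itemize}
So the linear part, and hence the polynomial, is not almost surely zero.

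The same bound, rather than ``a nonlinear function of a nondegenerate Gaussian'', is what makes $q_\infty$ nonconstant: a constant limit $\bar q$ would force $\zeta=\xi^{\mathrm T}\bar q$ almost surely.

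Finally, your zero-continuity step presupposes a unique, well-separated zero of the random criterion $G_\infty$. Nothing in the hypotheses provides this, and without it tightness of $\hat\beta$ does not follow from tightness of $\hat q$. The inconsistency claim does not need it. If $\hat\beta\to_p\beta_0$, the envelope $D$ of Condition~\ref{def-diff} and $\hat q=O_p(1)$ give $\hat G(\beta_0)=\hat G(\beta_0)-\hat G(\hat\beta)=o_p(1)$. That contradicts $\hat G(\beta_0)\Rightarrow G_\infty(\beta_0)\neq0$.
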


\begin{proof}
The \(Z\)-columns of \(\hat M\) are
\(\rho_n n^{-1}\sum R_ia_iS_ik_i^{\mathrm{T}}+(1-\rho_n^{2})^{1/2}n^{-1}\sum R_ia_i\nu_ik_i^{\mathrm{T}}\).
The first average converges to \(\tilde m=E(RaSk^{\mathrm{T}})\); the
second is centered, because \(\nu\) is independent of \((X,Y,R)\) with
mean zero, so it equals \(n^{-1/2}\xi_n\) with
\(\operatorname{vec}(\xi_n)\Rightarrow N(0,\Omega)\). With
\(\rho_n=c/n^{1/2}\) the whole block is \(n^{-1/2}\Psi_n\),
\(\Psi_n\Rightarrow\Psi\); the same computation on \(\hat v\) gives
\(n^{-1/2}\Phi_n\). Now \(\hat q\) solves
\(\hat M^{\mathrm{T}}\hat q=\hat v\). Blocking that system by the two
groups of columns of \(\hat M\) gives
\(A_n^{\mathrm{T}}\hat q=\hat v_0\) and
\(n^{-1/2}\Psi_n^{\mathrm{T}}\hat q=n^{-1/2}\Phi_n\): \textbf{the factor
\(n^{-1/2}\) appears on both sides of the second block and cancels
identically}, because both blocks inherit it from the single source
\(\rho_n\). The continuous mapping theorem applied to
\([A_n,\Psi_n]^{\mathrm{T}}\hat q=(\hat v_0,\Phi_n)\) gives
(\ref{eq-drifting}), the limit matrix being almost surely invertible by
the assumption on \(\Omega\). The convergence it needs is joint, and it
is: the summands behind \(\xi_n\) and \(\zeta_n\) do not involve
\(n\)---\(\rho_n\) only multiplies them---so a multivariate central
limit theorem applies to the two together, while \(A_n\) and
\(\hat v_0\) converge in probability to constants, which is all
Slutsky's theorem requires of them.
\end{proof}

Theorem~\ref{thm-drifting} is stated at \(\lambda=0\), and the sequence
it uses is exactly the one at which the regularization reaches the
boundary of (\ref{eq-threshold}): by (\ref{eq-tuning-derived}),
\(\lambda^{*}=C/(\rho_nn)=C/(cn^{1/2})\), which is \(O(n^{-1/2})\) and
so fails the condition \(\lambda_n=o(n^{-1/2})\) of Theorem~\ref{thm-if}
by exactly one order. The two results meet at the same sequence from
opposite sides: above it Theorem~\ref{thm-if} delivers asymptotic
normality, at it Theorem~\ref{thm-drifting} shows that nothing of the
kind can hold, and the ridge the finite-sample rule prescribes is
precisely the largest one Theorem~\ref{thm-if} would tolerate.

The cancellation in the proof is the mechanism: the scaling that would
otherwise drive \(\hat q\) to infinity is exactly matched in \(\hat v\).
It is also what separates the two regimes cleanly, and the separation is
visible in simulation.

\begin{longtable}[]{@{}rrrr@{}}

\caption{\label{tbl-drifting}Interquartile range of the \(Z\)-component
of \(\hat q\) under \(\rho_n=c/\sqrt n\) and at fixed \(\rho\), 120
replications per cell.}

\tabularnewline

\toprule\noalign{}
\(n\) & \(\rho_n=1/\sqrt n\) & \(\rho_n=3/\sqrt n\) & \(\rho=0.4\)
fixed \\
\midrule\noalign{}
\endhead
\bottomrule\noalign{}
\endlastfoot
500 & 4.284 & 3.347 & 0.9646 \\
2000 & 5.010 & 3.451 & 0.5155 \\
8000 & 4.051 & 3.014 & 0.2583 \\

\end{longtable}

In Table~\ref{tbl-drifting} the spread of \(\hat q\) under the drifting
sequence does not shrink over a sixteenfold increase in \(n\), as
Theorem~\ref{thm-drifting} requires; at fixed \(\rho\) the same quantity
concentrates at the ordinary rate, with log-log slope -0.48 against the
\(-1/2\) of \(n^{1/2}\)-consistency.

Two things follow for how the rest of the paper should be read. First,
the threshold \(\rho_n n^{1/2}\to\infty\) reached from
(\ref{eq-threshold}) is supported from both sides rather than derived
from the fixed-\(\rho\) analysis alone: above it Theorem~\ref{thm-if}
applies, and at \(\rho_n=c/n^{1/2}\) exactly, Theorem~\ref{thm-drifting}
shows that it cannot. Second, the limit in (\ref{eq-drifting}) is the
classical weak-instrument one---a ratio of correlated Gaussians indexed
by a concentration parameter---and \citet{NakanoHoshino2026} already
work in this regime for the scalar mean. We record it because the
representer route inherits it, not because it is new.

\subsection{Misspecification of the sieve}\label{sec-sieve-bias}

The divergence of \(\delta_0\) like \(\rho^{-1}\) recorded in
Lemma~\ref{lem-factorisation} is an instance of irregular identification
in the sense of \citet{KhanTamer2010}: the object being estimated is a
functional whose representation requires weights that grow without bound
as identification weakens, and the consequence there, as here, is a
departure from the regular parametric rate. What is specific to the
present construction is that the divergence is indexed by a single
scalar and cancels in \(v^{\mathrm{T}}M^{-1}\).

Condition (\ref{eq-rep}) requires the representer to lie in the span of
\(b\). When a coordinate \(Zk_j(X)\) is omitted, the error inherits the
factor \(\rho^{-1}\) of Lemma~\ref{lem-factorisation}, yet the induced
bias does not.

\begin{proposition}[The sieve bias is free of the
relevance]\protect\hypertarget{prp-sievebias}{}\label{prp-sievebias}

If the sieve omits \(Zk_j(X)\), the approximation error is
\(e=(c_j/\rho)\{Zk_j-P_{\mathrm{ret}}(Zk_j)\}\), so
\(\|e\|\asymp\rho^{-1}\); nevertheless \(E\{(1-R)e\}=O(1)\) uniformly in
\(\rho\).

\end{proposition}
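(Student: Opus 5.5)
The plan is to make the omitted coordinate's contribution explicit through the factorization of Lemma~\ref{lem-factorisation}, and then to reuse the mechanism of that lemma: the same $\rho$ appears in the coefficient and in the averaging, and the two cancel.

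\textbf{Step 1: the representer is rescaled.} Write the true representer in the full sieve as $\delta_0=\Theta^{\mathrm{T}}b$. By Lemma~\ref{lem-factorisation}, $\Theta=D^{-1}\tilde M^{-1}r$, so the coefficient on each $Z$-coordinate has the form $c_j/\rho$, with $c_j$ the corresponding entry of $\tilde M^{-1}r$. That entry is free of $\rho$.

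\textbf{Step 2: the approximation error.} Remove $Zk_j(X)$ from the sieve and let $P_{\mathrm{ret}}$ be the projection onto the retained span, in whatever inner product defines the sieve fit. The best approximation in the retained span is $\delta_0$ minus the part of $(c_j/\rho)Zk_j$ that the retained functions cannot absorb. Hence
\[
e=(c_j/\rho)\{Zk_j-P_{\mathrm{ret}}(Zk_j)\}.
\]
This is exact if the other coefficients are re-fitted by projection. It also holds to leading order if the retained coefficients are simply adjusted by the projection coefficients.

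\textbf{Step 3: the norm.} We have $\|Zk_j-P_{\mathrm{ret}}(Zk_j)\|\ge\|(1-\rho^2)^{1/2}\nu k_j-P_{\mathrm{ret}}(\cdot)\|$ up to cross terms, because $\nu$ is independent of $(X,Y,R)$. The $\nu$ part is not spanned by the retained functions, apart from other $Z$-coordinates, which are handled block-wise. So the residual norm is bounded away from zero and above uniformly in $\rho$, and $\|e\|\asymp\rho^{-1}$.

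\textbf{Step 4: the bias.} Compute
\[
E\{(1-R)e\}=(c_j/\rho)\bigl[E\{(1-R)Zk_j\}-E\{(1-R)P_{\mathrm{ret}}(Zk_j)\}\bigr].
\]
The first term equals $\rho E\{(1-R)Sk_j\}$, since the $\nu$ part averages to zero by independence. This is exactly the computation that gave $v$ its factor $\rho$.

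For the second term, write $P_{\mathrm{ret}}(Zk_j)=\sum_\ell\alpha_\ell b_\ell$. Each projection coefficient $\alpha_\ell$ is a ratio of inner products $\langle Zk_j,b_\ell\rangle$ against the Gram matrix of the retained functions.
\begin{itemize}
\item For $b_\ell=b^0_\ell(X)$, the numerator is $\rho E(Sk_jb^0_\ell)$, because $\nu$ is mean zero and independent.
\item For a retained $Z$-coordinate $Zk_\ell$, the numerator is $\rho^2E(S^2k_jk_\ell)+(1-\rho^2)E(\nu^2k_jk_\ell)$, which is $O(1)$. The expectation $E\{(1-R)Zk_\ell\}$ that multiplies it, however, carries a factor $\rho$.
\end{itemize}
In either case every product $\alpha_\ell E\{(1-R)b_\ell\}$ is $O(\rho)$. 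This uses the fact that the Gram matrix of the retained functions is nonsingular uniformly in $\rho$, since $\operatorname{var}(Z)=1$ for every $\rho$; that should follow from the regularity conditions of Section~\ref{sec-conditions}. The bracket is therefore $O(\rho)$, and multiplying by $c_j/\rho$ gives $E\{(1-R)e\}=O(1)$ uniformly in $\rho$.

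\textbf{Main obstacle.} The hard part is the second term in Step 4. One must show that the projection does not reintroduce a $\nu$-driven $O(1)$ contribution that survives the $(1-R)$ averaging. The cross-coordinate case is the most delicate, because the projection coefficient is $O(1)$ there rather than $O(\rho)$. The plan is to handle it by noting that every retained $Z$-coordinate, like $v$ itself, averages against $(1-R)$ only through $\rho S$. Uniform invertibility of the retained Gram matrix should then be checked explicitly.
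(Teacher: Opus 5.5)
Your decomposition is the same as the paper's. The bracket splits into three contributions: the omitted coordinate itself, its projection on the $b^{0}$-block, and its projection on the retained $Z$-coordinates. You handle the first and third exactly as the paper does, using the factor $\rho$ supplied by $E\{(1-R)Zk\}=\rho E\{(1-\pi_0)Sk\}$.

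The gap is in the middle term. Write $\alpha$ for the $b^{0}$-block of the projection coefficients and $\beta$ for the block on the retained $Z$-coordinates. You argue that $\alpha$ is $O(\rho)$ because its ``numerators'' $E(Zk_jb^{0}_\ell)=\rho E(Sk_jb^{0}_\ell)$ are $O(\rho)$, and you cite only uniform invertibility of the retained Gram matrix $G$. But a projection coefficient is not an entrywise ratio unless $G$ is diagonal. The coefficient vector is $G^{-1}$ applied to the \emph{whole} vector of inner products, and that vector contains the $O(1)$ entries $E(Z^{2}k_jk_\ell)$ for the retained $Z$-coordinates. Uniform invertibility alone therefore gives only $\alpha=O(1)$. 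Since $E\{(1-R)b^{0}\}$ carries no factor $\rho$, the facts you cite leave the bracket $O(1)$, and the bound they deliver is $E\{(1-R)e\}=O(\rho^{-1})$, not $O(1)$.

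The missing fact is that the cross-block of $G$ is itself $O(\rho)$: by the same independence of $\nu$, $E(b^{0}Zk^{\mathrm T})=\rho E(b^{0}Sk^{\mathrm T})$. Write $k_{-j}$ for the retained $k_\ell$. The $b^{0}$-rows of the normal equations then give
\[
\alpha=E(b^{0}b^{0\mathrm T})^{-1}\bigl\{E(b^{0}Zk_j)-E(b^{0}Zk_{-j}^{\mathrm T})\beta\bigr\}=O(\rho),
\]
once $\beta=O(1)$. That is where your uniform invertibility is actually needed. This is the step the paper's proof states explicitly: the cross-blocks and the right-hand side of the normal equations are both $O(\rho)$.

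The same computation shows what your phrase ``whatever inner product defines the sieve fit'' must mean. The inner product has to be one whose weight does not involve $\nu$, so that the $\nu$-cross terms vanish; the paper uses $E(\cdot)$.

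Step~3 is also not a valid inequality as written, though it is not the point at issue. What keeps $\|Zk_j-P_{\mathrm{ret}}(Zk_j)\|$ bounded away from zero as $\rho\to0$ is this: at $\rho=0$ the residual equals $\nu$ times the $L^{2}$-residual of $k_j$ on the retained $k_\ell$. That residual is nonzero provided $k_j$ is not in their span, an assumption you should state.
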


The proof turns on the fact that the projection of \(Zk_j\) on the
non-\(Z\) part of the retained sieve has coefficients of order \(\rho\),
which cancels the \(\rho^{-1}\); when \(Z\) is independent of \(X\)
those coefficients vanish identically. Misspecification of the sieve is
thus neither amplified by weak relevance nor mitigated by strong
relevance---a convenient robustness, and one that the \(\rho^{-1}\) in
\(\|e\|\) would not lead one to expect.

\subsection{Choosing the regularization}\label{sec-tuning}

Rate-optimal regularization is a developed subject, and our situation is
not a special case of it: \citet{CarrascoFlorensRenault2007} and
\citet{DarollesEtAl2011} choose \(\lambda\) from the noise level and the
smoothness of the solution, for operators whose singular values decay at
a given rate, whereas the degeneracy here is a uniform scaling of a
finite-dimensional block (Lemma~\ref{lem-factorisation}) governed by a
parameter, the relevance, that does not appear in their rates. The bias
of Theorem~\ref{thm-orders} grows in \(\lambda\) while the
ill-conditioning it repairs worsens as \(\rho\) falls, so the optimal
ridge should rise as relevance falls. Locating the minimizer of mean
squared error by an adaptive bracket under common random numbers, over
\(\rho\in\{0.05,0.10,0.20,0.35\}\) and \(n\in\{500,\dots,8000\}\) with
\(400\) replications per cell and an interior optimum in every one of
the 20 cells, gives \begin{equation}\phantomsection\label{eq-tuning}{
  \log\lambda^{*}=1.41
  -1.02\log\rho
  -0.95\log n,
  \qquad R^{2}=0.96 ,
}\end{equation} with \(95\%\) intervals \([-1.19,-0.86]\) and
\([-1.07,-0.83]\) for the two exponents. The restriction
\(\lambda^{*}=C/(\rho n)\) is not rejected against the free fit
(\(F\)-test, \(p=0.65\)), with \(C=6.3\).

\subsection{Where the rule comes from}\label{sec-tuning-derivation}

The exponents are not an accident of the design. The bias is
\(-B\lambda/\rho\) by Theorem~\ref{thm-orders}, with \(B\) a scalar
constant of the design, not to be confused with the test functions
\(a(X,Y)\). The standard deviation requires an argument, and it is not
the obvious one, because \(M\) is not symmetric.

\begin{lemma}[How the ridge advances the small
direction]\protect\hypertarget{lem-resolvent}{}\label{lem-resolvent}

Let \(M=\sum_k\sigma_ku_kv_k^{\mathrm{T}}\) with \(s:=\sigma_{\min}(M)\)
attained at \((u_1,v_1)\), let \(s_2\) be the next singular value, and
put \(\cos\theta:=u_1^{\mathrm{T}}v_1\), signed, and well defined
because the sign ambiguity of the singular vectors flips \(u_1\) and
\(v_1\) together. For \(\lambda<s\),
\begin{equation}\phantomsection\label{eq-resolvent}{
  (M+\lambda I)^{-1}=\frac{v_1u_1^{\mathrm{T}}}{s+\lambda\cos\theta}+E_\lambda,
  \qquad \|E_\lambda\|=O\{(s_2-\lambda)^{-1}\} .
}\end{equation}

\end{lemma}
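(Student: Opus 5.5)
The plan is to diagonalize the perturbation in the singular bases rather than in an eigenbasis. Since \(M\) need not be symmetric, this is the only factorization available. Write \(M=U\Sigma V^{\mathrm{T}}\) with \(U,V\) orthogonal, and set \(C:=U^{\mathrm{T}}V\), which is orthogonal with \(C_{11}=u_1^{\mathrm{T}}v_1=\cos\theta\). Then
\[
M+\lambda I=U(\Sigma+\lambda C)V^{\mathrm{T}},
\qquad\text{so}\qquad
(M+\lambda I)^{-1}=V(\Sigma+\lambda C)^{-1}U^{\mathrm{T}} .
\]
Everything therefore reduces to inverting the \(d\times d\) matrix \(N:=\Sigma+\lambda C\). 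The leading term in (\ref{eq-resolvent}) should be \(V e_1\,(N^{-1})_{11}\,e_1^{\mathrm{T}}U^{\mathrm{T}}\), which equals \(v_1u_1^{\mathrm{T}}\) times the \((1,1)\) entry of \(N^{-1}\). The sign remark in the statement is consistent with this: flipping \((u_1,v_1)\) together flips both \(C_{11}\) and the outer product, so nothing depends on the choice of sign.

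I would partition \(N\) conformally with the index \(1\) and its complement, writing \(\Sigma_2=\operatorname{diag}(\sigma_2,\dots)\), \(c_{12}\), \(c_{21}\) and \(C_{22}\) for the blocks of \(C\). Then
\[
N=\begin{pmatrix} s+\lambda\cos\theta & \lambda c_{12}\\ \lambda c_{21} & \Sigma_2+\lambda C_{22}\end{pmatrix}.
\]
The first step is the lower block. Since \(\|C_{22}\|\le1\), Weyl's inequality gives \(\sigma_{\min}(\Sigma_2+\lambda C_{22})\ge s_2-\lambda\). Hence that block is invertible with inverse of norm at most \((s_2-\lambda)^{-1}\). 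This is where the \(O\{(s_2-\lambda)^{-1}\}\) in the lemma comes from.

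The second step is the block inverse through the Schur complement
\[
\kappa=s+\lambda\cos\theta-\lambda^{2}c_{12}(\Sigma_2+\lambda C_{22})^{-1}c_{21}.
\]
Here \(|c_{12}|,|c_{21}|\le\sin\theta\) by orthonormality of the rows and columns of \(C\). So \(\kappa=s+\lambda\cos\theta+O\{\lambda^{2}/(s_2-\lambda)\}\), and \(s+\lambda\cos\theta\ge s-\lambda>0\) for \(\lambda<s\). Expanding \(1/\kappa\) about \(1/(s+\lambda\cos\theta)\) produces the leading term. The remainder from the Schur correction, the two off-diagonal blocks \(-\kappa^{-1}\lambda c_{12}(\Sigma_2+\lambda C_{22})^{-1}\) and its transpose partner, and the lower-right block \((\Sigma_2+\lambda C_{22})^{-1}+O(\cdot)\) are all collected into \(E_\lambda\). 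Conjugating by the orthogonal \(U,V\) leaves their norms unchanged.

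The main obstacle is the off-diagonal terms. They are of order \(\lambda/\{(s+\lambda\cos\theta)(s_2-\lambda)\}\), and the prefactor \(\lambda/(s+\lambda\cos\theta)\le\lambda/(s-\lambda)\) is bounded only when \(\lambda\) stays a fixed fraction below \(s\). The Schur correction to \(1/\kappa\) has a similar structure. I would first try to show that these pieces combine into something with no \(1/(s-\lambda)\) factor, but I do not expect that to succeed in general. Failing that, I would prove the lemma with the \(O\{\cdot\}\) constant depending on \(\lambda/s\) kept bounded away from one, which covers the regime \(\lambda\ll\sigma_{\min}(M)\) used in Theorem~\ref{thm-orders} and at the rule \(\lambda^{*}\) of Section~\ref{sec-tuning}. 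I would state that dependence explicitly rather than hide it in the order symbol.
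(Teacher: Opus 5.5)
Your proof is sound, and it takes a genuinely different route from the paper's. The paper factors $(M+\lambda I)^{-1}=(I+\lambda M^{-1})^{-1}M^{-1}$ and treats $v_1$ as an approximate eigenvector of $M^{-1}$ with eigenvalue $\cos\theta/s$. It then disposes of the rest by Weyl's inequality $\sigma_2(M+\lambda I)\ge s_2-\lambda$. Your exact reduction $(M+\lambda I)^{-1}=V(\Sigma+\lambda C)^{-1}U^{\mathrm T}$, with $C=U^{\mathrm T}V$ orthogonal, is cleaner. It shows that the coefficient of $v_1u_1^{\mathrm T}$ is exactly $(N^{-1})_{11}=1/\kappa$, and it reduces the lemma to four explicit block bounds. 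The paper's factorization can be made rigorous in two lines with the identity $(I+\lambda M^{-1})^{-1}=I-\lambda(M+\lambda I)^{-1}$, which has norm at most $s/(s-\lambda)$. Carried through, this gives the slightly sharper $\|E_\lambda\|\le\{s/(s-\lambda)\}^{2}s_2^{-1}$. So the repaired paper route is shorter, while yours shows exactly where the approximation breaks.

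The restriction you end up with is not a weakness of your method: the statement cannot hold uniformly over $\lambda<s$. Here is a counterexample.
\begin{itemize}
\item Take $d=2$, $U=I$, $\Sigma=\operatorname{diag}(1,2)$, and $V=C$ the rotation with $\cos\theta=-(1-\delta^{2})^{1/2}$ and $\sin\theta=\delta$. Then $M=\Sigma C^{\mathrm T}$, $s=1$, $s_2=2$ and $u_1^{\mathrm T}v_1=\cos\theta$. Put $\lambda=1-\delta^{2}<s$.
\item As $\delta\to0$, $s+\lambda\cos\theta=1-(1-\delta^{2})^{3/2}\sim\tfrac32\delta^{2}$.
\item Also $\det N=(s+\lambda\cos\theta)(2+\lambda\cos\theta)+\lambda^{2}\delta^{2}\sim\tfrac52\delta^{2}$, so $(N^{-1})_{11}\sim\tfrac25\delta^{-2}$, against $(s+\lambda\cos\theta)^{-1}\sim\tfrac23\delta^{-2}$.
\item Hence $\|E_\lambda\|\ge\tfrac4{15}\delta^{-2}\{1+o(1)\}$, while $(s_2-\lambda)^{-1}<1$.
\end{itemize}
Even the leading coefficient is off, by a factor $3/5$. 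The paper's sketch misses this because it never bounds $\|(I+\lambda M^{-1})^{-1}\|$; that is where your factor in $\lambda/(s-\lambda)$ hides. So state the lemma for $0\le\lambda\le\vartheta s$ with $\vartheta<1$, as you propose. That covers the regime in which the paper uses it: $\cos\theta>0$, and $\lambda^{*}\cos\theta/s\le0.17$ above the transition.

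Two repairs in your details.
\begin{itemize}
\item The simultaneous flip $(u_1,v_1)\mapsto(-u_1,-v_1)$ does not flip $C_{11}$ and $v_1u_1^{\mathrm T}$; it leaves both unchanged. It conjugates $C$ and $N^{-1}$ by $\operatorname{diag}(-1,1,\dots,1)$, so $(N^{-1})_{11}$ is invariant too. Your conclusion that nothing depends on the sign is right, but the reason you gave is not.
\item Do not expand $1/\kappa$ as a series, since $\lambda^{2}/(s_2-\lambda)$ need not be small against $s+\lambda\cos\theta$. Instead, with $B=\Sigma_2+\lambda C_{22}$, use the exact identity $\kappa^{-1}-(s+\lambda\cos\theta)^{-1}=\lambda^{2}c_{12}B^{-1}c_{21}/\{\kappa(s+\lambda\cos\theta)\}$. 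Combine it with $|\kappa|^{-1}=|(N^{-1})_{11}|\le\|N^{-1}\|\le(s-\lambda)^{-1}$, which is Weyl applied to $M+\lambda I$. Put $r=\lambda/(s-\lambda)$. The four blocks of $E_\lambda$ are then bounded by $r^{2}$, $r$, $r$ and $1+r^{2}$ times $(s_2-\lambda)^{-1}$; the last uses $s_2-\lambda\ge s-\lambda$. Hence $\|E_\lambda\|\le(1+2r+2r^{2})(s_2-\lambda)^{-1}$, which is the explicit dependence you said you would state.
\end{itemize}
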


The proof is in Section~\ref{sec-proofs}.

The cosine is the whole of the difference between a symmetric problem
and this one. If \(M\) were symmetric, \(u_1=v_1\) and the ridge would
advance the small direction by \(\lambda\); here it advances it by
\(\lambda\cos\theta\), and \(\cos\theta\) is +0.63 in this design. Using
\(\lambda\) in place of \(\lambda\cos\theta\) leaves a residual of 8\%
in what follows; the cosine reduces it to 2.1\%.

That the cosine is signed is not a formality. Nothing makes
\(u_1^{\mathrm{T}}v_1\) positive for a general nonsymmetric \(M\), and
where it is negative the denominator in (\ref{eq-resolvent}) is
\(s-\lambda|\cos\theta|\): the ridge \emph{retards} the small direction
instead of advancing it, and a rule calibrated on the opposite sign
would regularize in the wrong direction. Here it is positive at every
relevance considered, and stable (+0.63, +0.62, +0.60, +0.59 at
\(\rho=\) 0.10, 0.20, 0.35, 0.50), which is what makes the ridge useful
at all in this design. It is worth reporting alongside \(s\) and \(s_2\)
rather than taken on trust.

\begin{proposition}[The standard deviation as a function of the
ridge]\protect\hypertarget{prp-sdlambda}{}\label{prp-sdlambda}

Under Theorem~\ref{thm-if} and Lemma~\ref{lem-resolvent}, if
\(\|\hat M-M\|=o_p(s+\lambda\cos\theta)\) then
\begin{equation}\phantomsection\label{eq-sdlambda}{
  \operatorname{sd}(\hat\beta_\lambda)
  =\sigma_0\,\frac{s}{s+\lambda\cos\theta}\,\{1+O(\lambda/s_2)\},
  \qquad
  \sigma_0^{2}=\frac{\kappa}{n\rho^{2}} ,
}\end{equation} with \(\kappa\) free of \(\lambda\), of \(\rho\) and of
\(n\).

\end{proposition}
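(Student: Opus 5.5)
The plan is to start from the influence-function representation of Theorem~\ref{thm-if} with \(\lambda\) left free, and to find where \(\lambda\) enters. For fixed \(\lambda\), \(\hat\beta_\lambda\) has linearization \(-H^{-1}n^{-1}\sum_i\mathrm{IF}_\lambda(O_i)\). Here \(\mathrm{IF}_\lambda\) is (\ref{eq-if}) with \(\delta_0\) replaced by \(\delta_\lambda=\Theta_\lambda^{\mathrm{T}}b\), \(\Theta_\lambda=(M+\lambda I)^{-1}r\), and with \(q\) replaced by \(q_\lambda=(M+\lambda I)^{-\mathrm{T}}v\).

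The first step is to isolate the dominant part of the variance. By Theorem~\ref{thm-orders} and its proof, the only term of order \(\rho^{-1}\) is \(R\{\psi(\beta_0)-\delta_\lambda\}a^{\mathrm{T}}q_\lambda\) together with \((1-R)\delta_\lambda\), both through \(\delta_\lambda\). That \(\rho^{-1}\) comes from the component of \(\Theta_\lambda\) along the small singular direction of \(M\). All other contributions are \(O(1)\) and are absorbed into the relative error. So write \(\mathrm{var}(\mathrm{IF}_\lambda)=\rho^{-2}\kappa_\lambda\{1+O(\rho)\}\), and reduce the proposition to showing that \(\kappa_\lambda=\kappa\{s/(s+\lambda\cos\theta)\}^{2}\{1+O(\lambda/s_2)\}\).

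The second step applies Lemma~\ref{lem-resolvent}. Substituting (\ref{eq-resolvent}) gives
\[
\Theta_\lambda=\frac{v_1u_1^{\mathrm{T}}r}{s+\lambda\cos\theta}+E_\lambda r .
\]
At \(\lambda=0\) the first term equals \(v_1u_1^{\mathrm{T}}r/s\), so it is multiplied exactly by \(s/(s+\lambda\cos\theta)\). The remainder satisfies
\[
E_\lambda r-E_0r=O(\lambda/s_2^{2})\|r\|,
\]
which is \(O(\lambda/s_2)\) relative to the \(O(1)\) part. Since \(s\asymp\rho\) by the condition-number order, the leading term is the \(\rho^{-1}\) one. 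The leading random quantity in \(\mathrm{IF}_\lambda\) is therefore \(s/(s+\lambda\cos\theta)\) times its \(\lambda=0\) value, up to the stated relative error. Then define \(\kappa=\rho^{2}\,\mathrm{var}(\cdot)\) of the \(\lambda=0\) leading term, mapped through \(H^{-1}\).

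The third step shows that \(\kappa\) is free of \(\rho\), \(n\) and \(\lambda\). Freedom from \(\lambda\) holds by construction. Freedom from \(n\) holds because it is a population variance. For \(\rho\), use Lemma~\ref{lem-factorisation}: with \(M=\tilde MD\), the \(\rho^{-1}\) multiplies a vector \(\tilde M^{-1}r\) restricted to the \(Z\)-block, which is free of \(\rho\). By (\ref{eq-cancellation}), \(a^{\mathrm{T}}q\) is free of \(\rho\). The variance of \(Z\)-terms is also free of \(\rho\), because \(\mathrm{var}(Z)=1\) and \(\nu\) is independent of \((X,Y,R)\); any residual \(\rho\)-dependence enters only at lower order. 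Finally, the hypothesis \(\|\hat M-M\|=o_p(s+\lambda\cos\theta)\) is what allows replacing \(\hat M\) by \(M\) inside the resolvent, by a perturbation (Neumann) expansion of \((\hat M+\lambda I)^{-1}\) around \((M+\lambda I)^{-1}\) whose first correction is relatively \(o_p(1)\).

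The main obstacle is the second step. It requires showing that the \(\lambda\)-dependence of the \(O(\rho^{-1})\) part passes entirely through the scalar \(s/(s+\lambda\cos\theta)\), and in particular that \(q_\lambda\) does not pick up a compensating factor. For this I would argue that \(q_\lambda\) is \(\rho\)-free and \(O(1)\) for \(\lambda\ll s\), as noted after Theorem~\ref{thm-orders}. I would also argue that its small-direction component, \(u_1v_1^{\mathrm{T}}v/(s+\lambda\cos\theta)\), is bounded, because \(v\) carries the matching factor \(\rho\) on the \(Z\)-block. Its \(\lambda\)-variation therefore enters only at relative order \(O(\lambda/s_2)\) or multiplies an \(O(1)\) term.
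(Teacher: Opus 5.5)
Your outline follows the paper's own route. The hypothesis on $\|\hat M-M\|$ licenses the Neumann expansion that puts $\Theta_\lambda$ and $q_\lambda=(M+\lambda I)^{-\mathrm T}v$ into the influence function. The cancellation (\ref{eq-cancellation}) is meant to dispose of $q_\lambda$, and the resolvent law (\ref{eq-resolvent}) supplies the factor $(s+\lambda\cos\theta)^{-1}$. One then divides by the value at $\lambda=0$, with Theorem~\ref{thm-orders} giving $\sigma_0^{2}\asymp1/(n\rho^{2})$.

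The gap is at the step you yourself call the main obstacle, and the argument you give there fails.
\begin{itemize}
\item \emph{Bounded is not free of $\lambda$.} The small-direction part $u_1(v_1^{\mathrm T}v)/(s+\lambda\cos\theta)$ of $q_\lambda$ is indeed bounded, since $v_1^{\mathrm T}v/s=u_1^{\mathrm T}q=O(1)$. But it is exactly $s/(s+\lambda\cos\theta)$ times its value at $\lambda=0$. More directly, $q_\lambda-q=-\lambda(M+\lambda I)^{-\mathrm T}q$ holds exactly, and by (\ref{eq-resolvent}) this equals $-\lambda u_1(v_1^{\mathrm T}q)/(s+\lambda\cos\theta)+O(\lambda/s_2)$. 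So whenever $v_1^{\mathrm T}q\neq0$, $q_\lambda$ moves at relative order $\lambda/s$, which is the same order as the factor you are isolating.
\item \emph{It does not multiply an $O(1)$ term.} In (\ref{eq-if}), $a^{\mathrm T}q_\lambda$ multiplies $R\{\psi(\beta_0)-\delta_\lambda\}$, which contains $\delta_\lambda=O(\rho^{-1})$.
\end{itemize}
Hence, up to relative $O(\lambda/s_2)$, the $O(\rho^{-1})$ part of $\mathrm{IF}_\lambda$ is $s/(s+\lambda\cos\theta)$ times the $O(\rho^{-1})$ part of $\delta_0$, multiplied by $(1-R)-Ra^{\mathrm T}q_\lambda$. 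That last factor differs from its value at $\lambda=0$ by $\{\lambda/(s+\lambda\cos\theta)\}(v_1^{\mathrm T}q)\,Ra^{\mathrm T}u_1+O(\lambda/s_2)$. Carried through, your computation gives the variance ratio $\{s/(s+\lambda\cos\theta)\}^{2}\{1+O(\lambda/s)\}$, not $\{s/(s+\lambda\cos\theta)\}^{2}\{1+O(\lambda/s_2)\}$.

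The paper's proof passes over the same point in one line: $q_\lambda$ is bounded and free of $\rho$, ``so all of the $\lambda$-dependence sits in $\Theta_\lambda$''. You are therefore not behind it, but neither version establishes the step. There are two ways to close it.
\begin{itemize}
\item Make $s_2\asymp s$ an explicit hypothesis. This holds in the paper's design, where $s_2/s=2.53$ is stable in $\rho$, and the extra variation is then absorbed into the $O(\lambda/s_2)$ remainder. The price is that this remainder is then of the same order as $1-s/(s+\lambda\cos\theta)$ itself.
\item Otherwise you need $v_1^{\mathrm T}q=O(s/s_2)$, or some structural reason for the induced change in the variance to vanish. Neither follows from Lemma~\ref{lem-factorisation}.
\end{itemize}
Without one of these the step does not close. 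For instance, when the $Z$-block of the sieve is a single column, $s\asymp\rho$ while $s_2$ stays of order one, and the stated remainder cannot absorb a relative change of order $\lambda/\rho$.

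A smaller slip: because $M$ is not symmetric, $E_\lambda r-E_0r$ is $O\{\lambda/(ss_2)\}$ rather than $O(\lambda/s_2^{2})$. The extra size comes from the terms of $M^{-2}$ in $v_1u_2^{\mathrm T}$ and $v_2u_1^{\mathrm T}$. Relative to the leading $O(s^{-1})$ part of $\Theta_\lambda$ this is still $O(\lambda/s_2)$, so that piece of your argument survives.
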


The proof is in Section~\ref{sec-proofs}.

Both statements are confirmed numerically to the accuracy the error
terms allow. The resolvent law (\ref{eq-resolvent}) holds to 0.70\% over
\(\lambda\) up to \(s_2/4\), and (\ref{eq-sdlambda}) holds to 2.9\%
across the well-conditioned cells---against 8\% for the same check with
\(\cos\theta\) omitted. In this design \(s/\rho=0.145\) and
\(s_2/s=2.53\), both stable in \(\rho\), so
Lemma~\ref{lem-factorisation}'s \(s\asymp\rho\) holds with a constant
and the error term \(O(\lambda/s_2)\) is uniform.

Minimizing
\(B^{2}\lambda^{2}/\rho^{2}+\sigma_0^{2}s^{2}/(s+\lambda\cos\theta)^{2}\)
gives
\(\lambda(s+\lambda\cos\theta)^{3}=\sigma_0^{2}s^{2}\rho^{2}\cos\theta/B^{2}\),
and for \(\lambda\cos\theta\ll s\),
\begin{equation}\phantomsection\label{eq-tuning-derived}{
  \lambda^{*}=\frac{\sigma_0^{2}\rho^{2}\cos\theta}{B^{2}s}
  =\frac{\kappa\cos\theta}{B^{2}\,n\,\sigma_{\min}(M)}
  =\frac{C}{\rho n},
  \qquad C=\frac{\kappa\cos\theta}{B^{2}(s/\rho)} .
}\end{equation} The middle form is the one to use, because
\(\sigma_{\min}(\hat M)\) and \(\cos\hat\theta\) are observable and
\(\rho\) is not. With \(\kappa\), \(B\), \(\cos\theta\) and \(s/\rho\)
estimated from the bias and spread curves and from
\(M\)---\textbf{nothing fitted to \(\lambda^{*}\)}---this gives
\(C=6.34\) against \(6.33\) from the twenty-cell fit (\ref{eq-tuning}),
and reproduces the measured optimum cell by cell to within a factor of
3.17.

The agreement in the constant should not be read to two figures, and the
reason is the transition again. Only 5 of the 20 cells behind
(\ref{eq-tuning}) lie above it, and below it
Proposition~\ref{prp-sdlambda}---from which \(C\) is derived---has no
content. Restricted to those 5, the fit gives \(C=5.77\) rather than
\(6.33\). The prediction lies inside the \(95\%\) interval of both fits,
\([5.65,7.10]\) over all cells and \([4.69,7.11]\) over the restricted
set, and the exponent on \(\rho\) is the same either way (\(-1.02\)
against \(-1.03\)). What the comparison establishes is therefore the
form of the rule and the order of its constant, not the second digit of
the constant.

Two conditions bound that derivation, and the second is what defeated
two earlier attempts of our own. It requires \(\lambda\cos\theta\ll s\),
which is worth checking at the solution rather than assuming: across the
cells above the transition \(\lambda^{*}\cos\theta/s\) is at most 0.17,
while below it reaches 22---so the approximation holds exactly where the
derivation is claimed to have content and fails where it is not. The
same boundary appears a second way: below the transition \(\lambda^{*}\)
exceeds \(s_2\), so Lemma~\ref{lem-resolvent} does not apply there
either. And it requires the second moment to exist, which it need not.
The distance of \(\sigma_{\min}(\hat M)\) from the origin in its own
standard deviations is of order \(\rho\sqrt n\), and when that is small
the matrix is near-singular often enough that the error behaves like the
reciprocal of a quantity with a density at zero: the tail index is one
and the variance is infinite, so there is nothing for a bias--variance
argument to balance. Measured tail indices run from 1.00 to 4.79 across
the cells, with the transition at about four standard deviations, and
the prediction that is accurate above it is out by \(2\times10^{5}\)
below it. Section~\ref{sec-tuning-evidence} gives the evidence and
reports the rule transported, unchanged, to a design that played no part
in calibrating it, where it costs a median 6\% over the best value
obtainable there by search and beats complete-case estimation in every
cell.

What (\ref{eq-tuning-derived}) does not explain is why the fitted rule
(\ref{eq-tuning}) should continue to hold below the transition, where
Proposition~\ref{prp-sdlambda} has no content because the variance it
describes is infinite. That the same constant governs both regimes is an
empirical fact here, and we have no account of it.

\section{Diagnostics}\label{sec-diagnostics}

That the empirical likelihood problem can be infeasible, and that normal
and \(\chi^{2}\) approximations degrade under strong departures from
ignorability, are both known: \citet{LiuFan2023} call the convex-hull
failure well documented and record Wald undercoverage, and
\citet{Zhao2019} shows that the Wilks limit fails because the nuisance
propensity must be estimated. What has not been available is a way to
say \emph{which} of the two features of the problem---the strength of
identification or the overlap---is responsible, because that requires a
relevance parameter. This section separates them.

The failures described so far are invisible in a table of coefficients
and standard errors. This section collects the quantities that do reveal
them, all computable before any estimate is reported.

\subsection{Effective sample size is the master
quantity}\label{effective-sample-size-is-the-master-quantity}

Corollary~\ref{cor-ess} makes the effective sample size an exact
function of \(\Lambda(\gamma)\), so the tilt of Section~\ref{sec-tilt}
and the overlap of the design are the same quantity seen twice. It is
therefore worth asking whether the effective sample size also governs
the practical failures of empirical likelihood, or whether those are
driven instead by the strength of identification. Empirical likelihood
weights the observed units so that the sample moments vanish, which is
possible only if the origin lies in the convex hull of the estimated
moment vectors; when it does not, the problem is infeasible and no
estimate is returned at all.

\begin{quote}
They are governed by the effective sample size, and are flat in the
identification strength.
\end{quote}

Write \(\kappa=\sigma_{\min}(G_{\gamma\gamma})\), which by the
consequence of (\ref{eq-factorisation}) drawn in
Section~\ref{sec-relevance} vanishes linearly in \(\rho\). Crossing the
strength of the missingness mechanism, which moves the effective sample
size, with \(\rho\), which moves \(\kappa\), over 40 cells and 12,000
replications, and regressing feasibility of the empirical likelihood
problem on both: \begin{equation}\phantomsection\label{eq-feas}{
  \operatorname{logit}\operatorname{pr}(\text{feasible})
  =-3.61
  +1.92\log(\mathrm{ESS})
  +0.01\log\kappa ,
}\end{equation} with standard errors 0.046 and 0.049: the effective
sample size enters with \(z=41.9\), the identification strength with
\(z=0.21\). The coefficients say how much each matters: doubling the
effective sample size multiplies the odds of feasibility by 3.8, while
multiplying \(\kappa\) by ten changes them by a factor 1.02---that is,
not at all. Across a tenfold range of \(\kappa\) at comparable effective
sample size, feasibility is 0.885, 0.881, 0.906, 0.893, 0.886 by
quintile; across the effective sample size it is 0.564, 0.922, 0.974,
0.992, 1.000. A convex-hull failure is thus a statement about overlap,
not about identification, and the two must be diagnosed separately.

\subsection{A checklist}\label{a-checklist}

Before an analysis based on a shadow variable is reported, we recommend
computing:

\begin{enumerate}
\def\labelenumi{(\roman{enumi})}
\item
  \emph{Relevance.} The incremental contribution of \(Z\) to the outcome
  model among respondents, as a \(\chi^{2}\) per degree of freedom, read
  against the conventional first-stage standard of \(10\). This is the
  check that Lemma~\ref{lem-signal} makes meaningful, and in our
  experience the one that is omitted.
\item
  \emph{Exclusion.} The incremental contribution of \(Z\) to the
  response model. In large samples this must be read as an effect size
  rather than a \(p\)-value; see Section~\ref{sec-screens}.
\item
  \emph{Overlap.} The effective sample size \(\mathrm{ESS}/n\) and the
  largest inverse weight. By (\ref{eq-feas}) these predict whether the
  empirical likelihood problem will be feasible at all.
\item
  \emph{Reach.} The value of the sample tilt functional \(\hat\Lambda\),
  obtained by replacing the expectations in
  Theorem~\ref{thm-decomposition} by sample averages, at the ends of the
  scan. If it approaches the floating-point range, the reported
  endpoints are arithmetical rather than statistical, and
  Theorem~\ref{thm-reach} gives the value of \(|\gamma_Y|\) at which
  this occurs.
\item
  \emph{Shape.} Whether the profile of the statistic flattens to a
  finite asymptote, and whether the scanned set comes out in one piece.
  A set that closes at one nominal level and not at another, as in
  Section~\ref{sec-application}, is a set whose bound is an artefact of
  the level; a set in two separated pieces indicates a second zero of
  \(\varphi\), which leaves the parameter set-identified however
  relevant the shadow variable is. Both features are easy to miss on a
  coarse grid \citep{SchlemperMoreira2026}.
\end{enumerate}

Items (i) and (v) are the ones that change conclusions most often.

\section{Simulation}\label{sec-simulation}

We compare four procedures across
\(\rho\in\{0.05,0.10,0.20,0.35,0.50\}\) and \(n\in\{500,2000,8000\}\)
with 500 replications per cell: the infeasible estimator using the
complete data; complete-case estimation, which ignores the missingness;
the propensity route of (\ref{eq-moments}) solved by generalized method
of moments with ten random starts; and the representer route of
Section~\ref{sec-representer}, unregularized and at the rule
(\ref{eq-tuning}). Table~\ref{tbl-mc} reports median absolute deviation
from \(\beta_0\), which is the relevant summary because the
distributions are heavy-tailed at small \(\rho\).

\begin{longtable}[]{@{}
  >{\raggedright\arraybackslash}p{(\linewidth - 12\tabcolsep) * \real{0.0729}}
  >{\raggedleft\arraybackslash}p{(\linewidth - 12\tabcolsep) * \real{0.0521}}
  >{\raggedleft\arraybackslash}p{(\linewidth - 12\tabcolsep) * \real{0.1458}}
  >{\raggedleft\arraybackslash}p{(\linewidth - 12\tabcolsep) * \real{0.1458}}
  >{\raggedleft\arraybackslash}p{(\linewidth - 12\tabcolsep) * \real{0.0625}}
  >{\raggedleft\arraybackslash}p{(\linewidth - 12\tabcolsep) * \real{0.2500}}
  >{\raggedleft\arraybackslash}p{(\linewidth - 12\tabcolsep) * \real{0.2708}}@{}}

\caption{\label{tbl-mc}Median absolute deviation of \(\hat\beta_0\),
computed from the stored Monte Carlo output. The first column is the
infeasible estimator that observes every outcome, and is the noise
floor.}

\tabularnewline

\toprule\noalign{}
\begin{minipage}[b]{\linewidth}\raggedright
\(\rho\)
\end{minipage} & \begin{minipage}[b]{\linewidth}\raggedleft
\(n\)
\end{minipage} & \begin{minipage}[b]{\linewidth}\raggedleft
complete data
\end{minipage} & \begin{minipage}[b]{\linewidth}\raggedleft
complete case
\end{minipage} & \begin{minipage}[b]{\linewidth}\raggedleft
GMM
\end{minipage} & \begin{minipage}[b]{\linewidth}\raggedleft
representer \(\lambda=0\)
\end{minipage} & \begin{minipage}[b]{\linewidth}\raggedleft
representer \(\lambda^{*}\)
\end{minipage} \\
\midrule\noalign{}
\endhead
\bottomrule\noalign{}
\endlastfoot
0.05 & 500 & 0.030 & 0.497 & 0.423 & 0.733 & 0.464 \\
& 2000 & 0.015 & 0.498 & 0.362 & 0.635 & 0.440 \\
& 8000 & 0.008 & 0.500 & 0.276 & 0.317 & 0.271 \\
0.10 & 500 & 0.029 & 0.499 & 0.339 & 0.665 & 0.406 \\
& 2000 & 0.014 & 0.499 & 0.240 & 0.312 & 0.261 \\
& 8000 & 0.007 & 0.500 & 0.117 & 0.123 & 0.125 \\
0.20 & 500 & 0.027 & 0.504 & 0.231 & 0.295 & 0.257 \\
& 2000 & 0.016 & 0.499 & 0.132 & 0.132 & 0.124 \\
& 8000 & 0.007 & 0.500 & 0.066 & 0.064 & 0.063 \\
0.35 & 500 & 0.027 & 0.505 & 0.146 & 0.145 & 0.127 \\
& 2000 & 0.016 & 0.499 & 0.072 & 0.065 & 0.065 \\
& 8000 & 0.008 & 0.500 & 0.035 & 0.035 & 0.036 \\
0.50 & 500 & 0.030 & 0.501 & 0.088 & 0.098 & 0.094 \\
& 2000 & 0.014 & 0.500 & 0.053 & 0.049 & 0.049 \\
& 8000 & 0.008 & 0.499 & 0.026 & 0.025 & 0.025 \\

\end{longtable}

Three features deserve comment.

First, at \(\rho\le0.10\) the \emph{unregularized} representer estimator
is worse than doing nothing: its median absolute deviation exceeds that
of complete-case estimation, which is simply the bias \(0.5\) of the
design. Regularization at (\ref{eq-tuning}) reverses this. The advantage
of the representer route over inverse weighting, which is an advantage
of order \(\rho^{-2}\) against \(\exp(v_\infty\gamma_Y^{2})\) in
population terms, does not survive estimation unless the conditioning is
repaired; by Theorem~\ref{thm-orders}
\(\operatorname{cond}(M)\asymp\rho^{-1}\), and that is what the ridge
addresses.

Second, at \(\rho\ge0.20\) all three feasible procedures are
indistinguishable. The choice between routes matters only in the regime
where neither works well, and there it matters for \emph{how} they fail:
the propensity route fails by non-convergence and exploding weights, the
representer route by silent heavy tails. The first is visible, the
second is not.

Third, the failure rate of the propensity route falls monotonically in
both arguments, from 0.248 at \((\rho,n)=(0.05,500)\) to 0.000 at
\(\rho\ge0.35\), \(n=8000\). The rate at the weakest cell --- roughly
one replication in four --- is informative and should be reported; the
corresponding replications are not missing at random and discarding them
silently biases any summary computed from the remainder.

Together with (\ref{eq-feas}), the picture is that weak relevance and
poor overlap degrade different things. Relevance controls the width of
the identified set and the conditioning of the sieve; overlap controls
feasibility and the tails. Neither substitutes for the other, and a
diagnostic addressed to one is silent about the other.

\subsection{Coverage}\label{sec-coverage}

Point estimates are not the object of the paper, so we also measure
coverage directly. At each candidate value \(s\) of the outcome
coefficient the four non-\(Z\) sample moments determine the remaining
parameters---the sample analogue of the curve of
Proposition~\ref{prp-curve}---and the remaining \(Z\)-moment is then a
scalar with population value \(\rho\varphi(s)\) by
Lemma~\ref{lem-signal}. Referring it to its estimated variance gives an
Anderson--Rubin statistic for \(H_0:\gamma_Y=s\), whose inversion is
identification-robust; we compare it with the Wald interval from the
just-identified generalized method of moments fit.
Table~\ref{tbl-coverage} reports 400 replications per cell, at fixed
\(\rho\) and along \(\rho_n=n^{-1/2}\).

\begin{longtable}[]{@{}lrllll@{}}

\caption{\label{tbl-coverage}Coverage of a nominal 95 per cent interval
for the outcome coefficient of the response model, the proportion of
scanned sets that come out bounded, and the proportion of grid points at
which the slice cannot be solved---a positivity or solver failure, not
an overflow of the tilt functional.}

\tabularnewline

\toprule\noalign{}
\(\rho\) & \(n\) & AR set & Wald interval & bounded & slice failure \\
\midrule\noalign{}
\endhead
\bottomrule\noalign{}
\endlastfoot
0.5 & 500 & 0.948 & 0.976 & 0.527 & 0.063 \\
0.5 & 2000 & 0.950 & 0.953 & 0.728 & 0.114 \\
0.5 & 8000 & 0.955 & 0.932 & 0.885 & 0.152 \\
0.2 & 500 & 0.963 & 0.930 & 0.155 & 0.063 \\
0.2 & 2000 & 0.955 & 0.920 & 0.338 & 0.114 \\
0.2 & 8000 & 0.953 & 0.946 & 0.585 & 0.152 \\
0.05 & 500 & 0.960 & 0.921 & 0.018 & 0.063 \\
0.05 & 2000 & 0.958 & 0.896 & 0.028 & 0.114 \\
0.05 & 8000 & 0.953 & 0.905 & 0.068 & 0.152 \\
\(n^{-1/2}\) & 500 & 0.960 & 0.921 & 0.018 & 0.063 \\
\(n^{-1/2}\) & 2000 & 0.958 & 0.890 & 0.010 & 0.114 \\
\(n^{-1/2}\) & 8000 & 0.950 & 0.909 & 0.020 & 0.152 \\

\end{longtable}

Three readings.

The Anderson--Rubin set holds its nominal level in every cell, from
\(\rho=0.5\) down to a sequence drifting to zero. That is what
identification-robust means and it is reassuring rather than surprising;
it is reported because the paper elsewhere declines to claim it for the
profiled likelihood of Section~\ref{sec-appl-set}, and the contrast
should be visible.

The Wald interval falls below its nominal level as relevance falls, and
under the drifting sequence it stays there: 0.921, 0.890, 0.909 at
\(n=500, 2000, 8000\). With 400 replications the standard error of each
entry is about 0.015, so we do not read a trend into those three
numbers; what matters is that a sixteenfold increase in sample size
produces no movement towards \(0.95\), while the Anderson--Rubin set is
at nominal throughout. That is the signature \citet{Dufour1997}
describes, and the reason a bounded interval cannot be valid in this
problem.

Reading the Wald column across rows rather than down them shows why that
failure is a statement about a \emph{sequence} and not about small
\(\rho\). At \(\rho=0.5\) it is within sampling error of nominal at
every sample size (0.976, 0.953, 0.932). At \(\rho=0.2\) it recovers:
0.930, 0.920, 0.946, back at nominal by \(n=8000\), which is what
fixed-\(\rho\) asymptotics require; Theorem~\ref{thm-if} is the paper's
own statement of that limit, for the representer route. At \(\rho=0.05\)
the same recovery has not happened by \(n=8000\) (0.921, 0.896, 0.905);
it is not denied, only postponed, because the sample size at which it
arrives grows as the relevance falls. Along the drifting sequence it
never arrives, because the relevance falls as fast as the sample size
grows. That is the non-uniformity Theorem~\ref{thm-drifting} establishes
for the representer route, exhibited here by the propensity route in one
column: every fixed relevance is eventually enough, and no sequence
approaching zero ever is. A reader who notices the 0.946 at \(\rho=0.2\)
and takes it for a counterexample has read the table down instead of
across.

The third column is the one this paper is about, and the contrast in it
is sharp. At every fixed relevance the proportion of scanned sets that
come out bounded rises steeply with \(n\)---from 0.527 to 0.885 at
\(\rho=0.5\), and from 0.018 to 0.068 even at \(\rho=0.05\)---because
identification improves. Under the drifting sequence it does not: 0.018,
0.010, 0.020 over the same sixteenfold range. The scanned set is
therefore honestly unbounded exactly where the data remain uninformative
however large the sample, which is the behavior one gets when the
statistic being inverted is the right one.

The fourth column requires care, and an earlier version of this paper
did not take it. It records the proportion of grid points at which the
slice cannot be solved: the positivity floor is breached, or the inner
solver fails. That proportion rises with \(n\), from 6.3\% at \(n=500\)
to 15.2\% at \(n=8000\), which is worth reporting in its own right---a
larger sample draws more extreme outcomes, so at a fixed \(\gamma_Y\)
the fitted response probabilities breach the floor more often. But it is
\textbf{not} evidence for Theorem~\ref{thm-reach}. The mechanism there
is the growth of \(\Lambda\) and the collapse of the effective sample
size, neither of which this column tests; and the direction is wrong,
since (\ref{eq-reach}) has the reach \emph{growing} in \(n\) on a grid
that is held fixed here. Two further remarks. The shadow variable enters
only the \(Z\)-moment and not the system that determines the remaining
parameters, so this rate is identical across \(\rho\) by construction
rather than by accident; it confirms the structure of
Proposition~\ref{prp-rho-free} without being independent evidence for
it. And the grid used here, \([-6,4]\), comes nowhere near the ceiling
\(|\gamma_Y|\le18.8\) of Theorem~\ref{thm-reach}, so this design could
not have exhibited the ceiling in any case.

\subsection{The reach, measured directly}\label{sec-reach-sim}

Theorem~\ref{thm-reach} is tested separately, and on the quantity it is
about. At scan point \(s\) the working propensity is computable for
every respondent, because \(Y\) is observed there and
Lemma~\ref{lem-intercept} pins the intercept in closed form; the
effective sample size follows, and the reach is the largest \(|s|\) at
which it still exceeds \(m=30\).

\begin{longtable}[]{@{}rrcr@{}}

\caption{\label{tbl-reachsim}The measured reach of the scan against
(\ref{eq-reach}), over a sixty-four-fold range of sample size; 200
replications per row, median and interquartile range.}

\tabularnewline

\toprule\noalign{}
\(n\) & measured reach & interquartile range & (\ref{eq-reach}) \\
\midrule\noalign{}
\endhead
\bottomrule\noalign{}
\endlastfoot
2,000 & 2.46 & {[}2.10, 3.02{]} & 1.62 \\
8,000 & 2.84 & {[}2.42, 3.48{]} & 1.82 \\
32,000 & 3.14 & {[}2.72, 3.70{]} & 2.00 \\
128,000 & 3.48 & {[}3.00, 4.11{]} & 2.17 \\

\end{longtable}

Table~\ref{tbl-reachsim} shows the functional form
Theorem~\ref{thm-reach} claims. Over the sixty-four-fold range the reach
grows by a factor of 1.41, against 1.34 for the square-root-of-logarithm
law of (\ref{eq-reach}) and 1.80 for a reach growing like the logarithm
itself. Four points cannot do more than discriminate between candidates
of that separation, and they do: a thousandfold increase in sample size
buys about one unit of \(\gamma_Y\), which is the substantive content of
the theorem and what makes the scan unable to keep up with a widening
set.

What is not recovered is the slope, and the discrepancy is instructive.
The fitted \[
  |\gamma_Y|_{\max}^{2}
  =-1.53+1.44
   \left[\log(n/m)-2\log\operatorname{pr}(R=0)\right]
\] has slope 1.44 against the predicted \(1/v_\infty=0.50\), so the
measured reach exceeds (\ref{eq-reach}) throughout and by a widening
margin: the ratio runs 1.52, 1.56, 1.57, 1.61 over the four sample
sizes. A margin that grows rules out a pure level effect. The reason is
the one met in Section~\ref{sec-decomposition}, sharpened: \(\Lambda\)
is an expectation over a tail that no finite sample contains, so
\(\log\hat\Lambda\) is biased downwards---and the bias grows with the
depth of the scan, because the further the tilt is carried the smaller
the fraction of its mass the sample represents. A downward bias in
\(\log\hat\Lambda\) that increases in \(|\gamma_Y|\) is exactly a slope
deficit in the fitted reach. The practical reading is the uncomfortable
one: the empirical effective sample size \emph{overstates} how far the
scan can be carried, and overstates it more the further one goes. An
analyst monitoring it would place the arithmetic boundary beyond where
it lies, and would be misled in the direction that matters.

\section{Application: reject inference in consumer
credit}\label{sec-application}

\subsection{The data, and a reproduction}\label{sec-data}

A German bank recorded 9,780 consumer-credit applications, of which
4,000 were accepted; repayment is observed only for those, and 169
defaulted. The acceptance decision was made on a scorecard built from 13
characteristics, and the missingness is nonignorable by construction:
the bank declined the applicants it judged most likely to default. These
data were analyzed by \citet{BueckervanKampenKraemer2013} using the
semiparametric empirical likelihood of \citet{QinLeungShao2002}; the
present author is one of its authors. Doubts about whether reject
inference delivers what it promises are not new, and have recently been
given an empirical form by \citet{ScaroneBaezaYates2026}, who find that
in a retraining cycle such methods can raise accuracy while recall
collapses. The diagnosis below is of a different kind---not that the
correction fails to help, but that the data do not identify what it
corrects for.

We first reproduce it. The preparation conventions were recovered from
the original analysis scripts, three of which could not have been
inferred from the published text and are documented in
Section~\ref{sec-data-prep}. On that basis all forty-one coefficients of
the published acceptance model agree to the three printed decimal
places, as do their standard errors and \(p\)-values, and forty of the
forty-one coefficients of the published semiparametric estimate; the
exception differs by \(0.001\). The reproduction also recovers two
errors in the original data preparation, one in a recoding of age and
one in a duration variable whose factor levels differ between the
accepted and rejected files; neither alters the conclusions below, and
both are reported in Section~\ref{sec-data-prep} so that the published
tables can be reconciled.

The purpose of the reproduction is not archival. Only with the
specification pinned down exactly can the diagnostics of
Section~\ref{sec-diagnostics} be applied to the analysis as it was
actually performed.

\subsection{The shadow variable, and a screen}\label{sec-screens}

The order condition requires one characteristic to be excluded from the
response model, and the published analysis states which: months at the
current address. The choice was defensible on the evidence then
considered, that variable being the least significant predictor of
acceptance, with \(p=0.939\). Our reproduction returns exactly that
value, which is a useful check that the model being diagnosed is the
model that was fitted.

Choosing the excluded characteristic by its insignificance in the
response model is not an idiosyncrasy of this analysis; it is what a
formal selection procedure also does. \citet{ChenShaoFang2021} and
\citet{ChenShao2025} select an instrument by a criterion that detects
departures from the exclusion restriction, and \citet{ZhaoWangShao2021}
search for one among linear combinations of the covariates. All three
would have ranked this characteristic highly, because on the exclusion
axis it is the best candidate available. The screen below adds the
second axis.

Its relevance was not examined. It is negligible: the characteristic
contributes \(\chi^{2}=0.33\) on 1 degree of freedom (\(p=0.57\)) to the
outcome model among respondents, a first-stage analogue of 0.33 against
a standard of \(10\).

The effective sample size tells the same story, provided it is evaluated
at the estimate rather than at ignorability. Weighting by the fitted
\emph{ignorable} propensity---the model without the outcome---gives
\(\mathrm{ESS}=342\), or 0.035 of the 9,780 applications. But the
published analysis estimates \(\gamma_Y=6.845\), and at that value the
working propensity is computable for every respondent, giving
\(\mathrm{ESS}=154\), or 0.016. The nonignorable model the analysis
actually fits leaves 0.45 of the information the ignorable one would
suggest: it more than halves it. Throughout this section effective
sample sizes are quoted as a fraction of applications, for which
Corollary~\ref{cor-ess-bound} gives the ceiling
\(\operatorname{pr}(R=1)=0.409\).

Screening all 13 characteristics shows this was not an unlucky choice.
The four for which exclusion is not rejected have relevance below \(3\)
per degree of freedom; the one strongly relevant characteristic, the
credit bureau score, is also the one most heavily used in the acceptance
decision. Because the sample here is small enough for tests to be
meaningful, and large samples require effect sizes instead, we measure
both axes by incremental McFadden \(R^{2}\): relevance for \(Y\) among
the accepted against the same quantity for \(R\) on the pooled sample.
Figure~\ref{fig-frontier} shows the result.

The same screen on a second and much larger data set gives the same
answer more sharply. Lending Club publishes its declined applications
alongside its funded ones; for the 2014 vintage, 1,839,397 applications
of which 223,102 were funded and 41,161 defaulted, with the credit score
observed on both sides. No characteristic has relevance materially
exceeding its exclusion violation, and the most relevant one---the
credit score---has a ratio of 0.95: it is used in the decision in almost
exactly the measure in which it predicts the outcome.
Section~\ref{sec-selection} explains why this had to happen.

\begin{figure}

\centering{

\includegraphics[width=1\linewidth,height=\textheight,keepaspectratio]{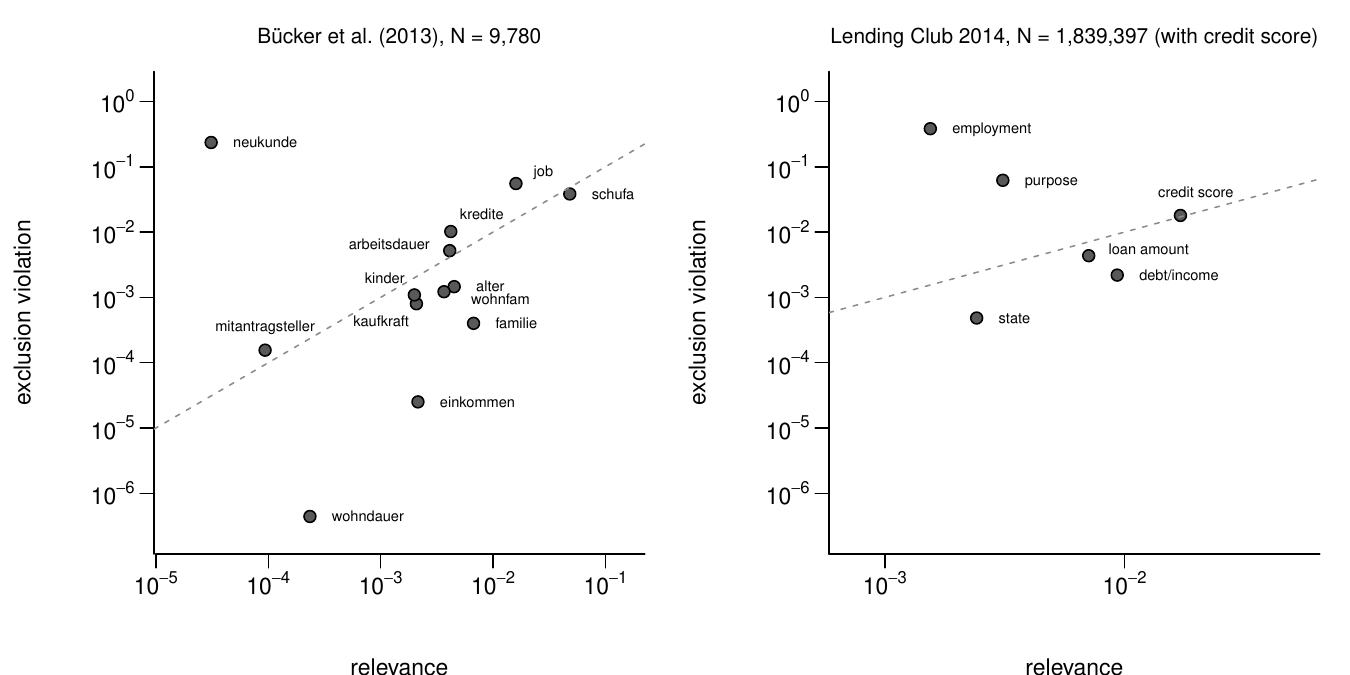}

}

\caption{\label{fig-frontier}Relevance against exclusion violation for
every characteristic, in the two data sets. Both axes are incremental
McFadden \(R^{2}\): for the outcome among respondents on the horizontal
axis, for the response indicator on the pooled sample on the vertical.
Both are logarithmic and share a common vertical scale across panels, so
the broken line marks equality. It is not at \(45^{\circ}\) on the page,
because the two axes span different numbers of decades. A usable shadow
variable would lie far to the right of that line and well below it; none
does. In the second panel the credit score, by far the most relevant
characteristic, lies almost exactly on it.}

\end{figure}%

\subsection{What the profiled likelihood shows}\label{sec-appl-set}

The published estimate of the outcome coefficient in the response model
is \(6.845\) with standard error \(2.987\), an interval of
\([0.99,12.70]\).

Before reporting an alternative we must be explicit about what it is. We
profile the semiparametric likelihood over the outcome coefficient,
holding it fixed and maximizing over the remaining forty coefficients
and the response rate, and calibrate with \(\chi^{2}_{1}\) quantiles.
\textbf{This is not an identification-robust procedure.} Under weak
identification the profile likelihood ratio for a subvector is in
general not asymptotically \(\chi^{2}_{1}\); that is precisely the
separation between Anderson--Rubin-type statistics, which remain valid,
and likelihood-ratio, Lagrange multiplier and Wald statistics, which do
not \citep{StockWright2000, GuggenbergerSmith2005}.
\citet{NakanoHoshino2026} invert an Anderson--Rubin statistic for this
reason. We have not established the coverage of the sets below, and we
do not claim it.

What survives the objection is the \emph{shape} of the profile rather
than its level, and it is the shape that carries the argument of this
section: a profile that converges to a finite asymptote bounds the
evidence against extreme values no matter how it is calibrated, and
whether the reported set closes then depends on the nominal level rather
than on the data. With that caveat, the profile gives the nominal
\(90\%\) set \([1.82,12.10]\), the \(95\%\) set \([0.46,13.35]\), and at
\(99\%\) a half-line bounded above by \(16.46\) only.

Near the maximum the profile is quadratic and symmetric with precisely
the curvature belonging to the reported standard error: at distances
\(0.5,1.0,1.5,2.0\) the likelihood ratio is \(0.027,0.110,0.247,0.440\)
against the quadratic values \(0.028,0.112,0.252,0.448\), and the two
sides of the maximum differ by at most 0.004. The standard error is not
miscomputed, and at the conventional levels the substantive conclusion
survives: acceptance depends on creditworthiness after conditioning on
the score characteristics.

The profile is not, however, globally quadratic. Descending, it
converges monotonically to a finite asymptote at a likelihood ratio of
6.2554, flat from about \(-11\) down to \(-53\); the outcome is binary,
so driving the coefficient to \(-\infty\) with a compensating intercept
approaches the boundary model in which no good risk is ever accepted,
which has finite likelihood. A search over forty-three starting points
at two points in that tail, including a systematic exploitation of the
compensation structure, found nothing better. Because 6.2554 lies below
the \(99\%\) critical value of 6.635, \emph{the \(99\%\) set is a
half-line rather than an interval}: the lower bound of the reported set
is a property of the nominal level, not of the data.
Figure~\ref{fig-profile} shows it.

\begin{figure}

\centering{

\includegraphics[width=0.75\linewidth,height=\textheight,keepaspectratio]{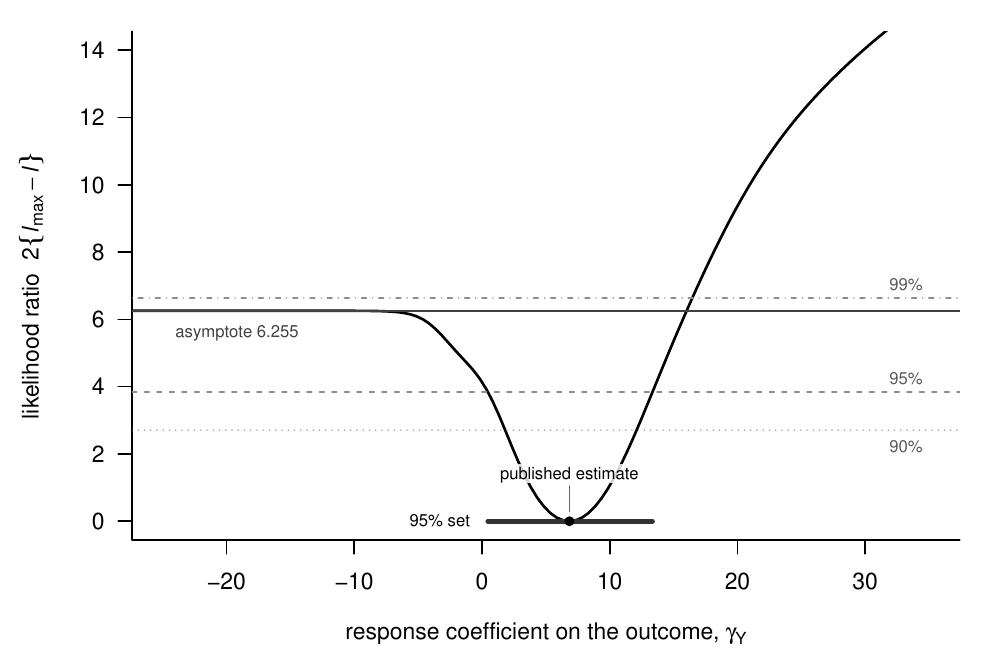}

}

\caption{\label{fig-profile}Profile likelihood ratio for the outcome
coefficient of the response model, over 237 points. Near the maximum the
profile is quadratic with the curvature implied by the reported standard
error. Descending, it converges to a finite asymptote, which lies below
the \(99\%\) critical value: the \(99\%\) set is a half-line. The bar
marks the \(95\%\) set.}

\end{figure}%

On the odds scale the point estimate says the odds of acceptance are
\(e^{6.845}\approx939\) times larger for a good risk. The \(95\%\)
robust set runs from \(e^{0.46}\approx1.6\) to
\(e^{13.35}\approx6.3\times10^{5}\), from barely selective to
effectively deterministic. That range, and not the standard error, is
what these data determine about the response mechanism, and it is what
one should expect from a shadow variable contributing
\(\chi^{2}/\mathrm{df}=0.33\).

\subsection{A positivity failure that precedes all of
this}\label{a-positivity-failure-that-precedes-all-of-this}

One diagnostic should be applied before any of the above. New customers
are accepted at 6.5\% against 63.4\% for existing customers. On the
pooled sample the smallest fitted acceptance probability among the
accepted is 0.0033, the largest inverse weight is 307, and the effective
sample size is 0.035 of applications against the ceiling 0.409 of
Corollary~\ref{cor-ess-bound}---a shortfall of a factor of 12.
Restricting to existing customers raises it to 0.330 of 5,908 against a
ceiling of 0.634, and the largest weight falls from 307 to 50.

The restriction is not a remedy. It discards the 3,872 new customers,
who are the applicants about whom reject inference is asked in the first
place; what it buys in overlap it spends on the estimand. The
reproduction above therefore follows the original in using the pooled
sample, and the diagnostic is reported as a property of that analysis
rather than as a correction applied to it. The failure is invisible in a
coefficient table and fatal to any inverse-weighted method; it is item
(iii) of Section~\ref{sec-diagnostics}.

Two failures have now been recorded in these data: the weighting cannot
be repaired within them, and the shadow variable that would have made
the nonignorable model estimable is not among the characteristics
recorded. The first is a property of this lender's acceptance rate. The
second is not, and Section~\ref{sec-selection} shows why.

\section{Relevance and exclusion under an optimizing
selector}\label{sec-selection}

The results of this section concern selection mechanisms in general
rather than credit data. They are placed after
Section~\ref{sec-application} because those data are what prompted them,
and because they account for what the screens there found.

\subsection{Why no shadow variable was available}\label{sec-noshadow}

The screens of Section~\ref{sec-screens} found no usable shadow variable
in either data set: in the first, the four characteristics for which the
exclusion restriction is not rejected have negligible relevance; in the
second, it is rejected for every characteristic, and the single most
relevant one---the credit score---violates it in almost exactly equal
measure. That coincidence is not accidental.

\begin{proposition}[Selection by an optimizing
scorer]\protect\hypertarget{prp-selection}{}\label{prp-selection}

Suppose the response is generated by an optimizing selector:
\(R=\eta\{f^{*}(V),U\}\) with characteristics \(V=(V_1,\dots,V_p)\),
\(U\) independent of \((V,Y)\), \(\eta\) strictly monotone in its first
argument, and \(f^{*}\in\arg\min_{f\in\mathcal F}E[L\{Y,f(V)\}]\) for a
strictly proper loss \(L\) and a class \(\mathcal F\). Write
\(\mathcal F_{-j}\subset\mathcal F\) for the elements not depending on
\(v_j\), and assume \(\mathcal F\) closed under almost-sure equality, so
that any \(f\in\mathcal F\) agreeing \(P_V\)-almost surely with a
function of \(v_{-j}\) alone has that function in \(\mathcal F_{-j}\).
If \(V_j\) satisfies the exclusion restriction
\(V_j\perp\!\!\!\perp R\mid(Y,V_{-j})\), then \(f^{*}\) agrees
\(P_V\)-almost surely with an element of \(\mathcal F_{-j}\), and
consequently \[
  \min_{f\in\mathcal F}E[L\{Y,f(V)\}]=\min_{f\in\mathcal F_{-j}}E[L\{Y,f(V_{-j})\}] :
\] the characteristic buys no reduction in risk within \(\mathcal F\).

\end{proposition}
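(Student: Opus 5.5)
The plan is to turn the exclusion restriction into a statement about the score $f^{*}(V)$ itself, through the response probability. Because $U$ is independent of $(V,Y)$ and $R=\eta\{f^{*}(V),U\}$, conditioning gives
\[
\operatorname{pr}(R=1\mid Y,V)=H\{f^{*}(V)\},\qquad H(t):=\operatorname{pr}\{\eta(t,U)=1\}.
\]
Strict monotonicity of $\eta$ in its first argument is what should make $H$ strictly monotone, and hence invertible on the range of $f^{*}$. I will state this explicitly, reading $\eta$ as a threshold rule of the form $\mathbb 1\{f^{*}(V)\ge U\}$ with $U$ having a strictly increasing distribution function on the relevant range. This is the only point where the form of the selector enters.

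The exclusion restriction $V_j\perp\!\!\!\perp R\mid(Y,V_{-j})$ says that $\operatorname{pr}(R=1\mid Y,V)$ coincides almost surely with $\operatorname{pr}(R=1\mid Y,V_{-j})=:g(Y,V_{-j})$. Hence
\[
H\{f^{*}(V)\}=g(Y,V_{-j})\quad\text{a.s.}
\]
The left side is $\sigma(V)$-measurable and the right side is $\sigma(Y,V_{-j})$-measurable. The key step is to conclude that the common value is almost surely a function $\phi(V_{-j})$ alone. I expect this to be the main obstacle, because the intersection of $\sigma(V)$ and $\sigma(Y,V_{-j})$ need not reduce to $\sigma(V_{-j})$ in general. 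The failure case is $Y$ and $V_j$ being mutually determining given $V_{-j}$: if $Y$ were a deterministic function of $V$, $R$ could depend on $V_j$ only through $Y$.

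What I would try first is a variance argument. Put $X=H\{f^{*}(V)\}$, which is bounded. Then $E\{(X-E(X\mid V_{-j}))^{2}\}$ splits into the part explained by $Y$ given $V_{-j}$ and the part explained by $V_j$ given $V_{-j}$. Both parts are attributed to the same random variable, and this forces it to be degenerate provided that, conditionally on $V_{-j}$, no nonconstant function of $V_j$ is almost surely equal to a function of $Y$. That is a mild non-redundancy requirement, which holds as soon as $Y\mid V$ has a nondegenerate law with support not moving with $V_j$. I would make this requirement explicit as the regularity the statement implicitly uses, rather than hide it. If the stated hypotheses really do not contain it, I would record the degenerate case as a counterexample and add the condition.

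Given $X=\phi(V_{-j})$ almost surely, invertibility of $H$ yields $f^{*}(V)=H^{-1}\{\phi(V_{-j})\}$ $P_V$-almost surely. The closure of $\mathcal F$ under almost-sure equality then places $\tilde f:=H^{-1}\circ\phi$ in $\mathcal F_{-j}$. The equality of minima is immediate from two inequalities. The inclusion $\mathcal F_{-j}\subset\mathcal F$ gives $\min_{\mathcal F}\le\min_{\mathcal F_{-j}}$. For the reverse, $\tilde f$ has the same risk as $f^{*}$, because $L\{Y,f^{*}(V)\}=L\{Y,\tilde f(V_{-j})\}$ almost surely, and this risk attains $\min_{\mathcal F}$. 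Strict propriety of $L$ is not needed for this last equality. It is needed only if one wants to add that every minimizer over $\mathcal F$ is almost surely free of $v_j$, which is the reading Section~\ref{sec-noshadow} uses. For that one argues that a minimizer depending on $v_j$ would induce, through the same $H$, a response probability depending on $V_j$, contradicting the step above.
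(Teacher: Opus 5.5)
Your argument follows the paper's route: the response probability is a strictly monotone function $H$ of the score, exclusion makes it free of $V_j$, $H$ is inverted, the closure assumption places the result in $\mathcal F_{-j}$, and two inequalities give the equality of minima. You depart from it only at the step you flag as the main obstacle, and your concern is justified; the paper's proof does not address it. That proof goes from ``exclusion states that this conditional probability does not depend on $V_j$ given $V_{-j}$'' directly to $f^{*}$ being constant in $V_j$. In doing so it treats the almost-sure identity $H\{f^{*}(V)\}=\operatorname{pr}(R=1\mid Y,V_{-j})$ as if it made the left side a function of $V_{-j}$.

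Without your non-redundancy condition the proposition is false. Take $p=1$, so that $\mathcal F_{-j}$ consists of the constants. Let $Y\in\{0,1\}$ with $\operatorname{pr}(Y=1)=1/2$, $V_1=Y$, $L(y,f)=(y-f)^{2}$, and $\mathcal F$ all functions $\{0,1\}\to[0,1]$. Let $R=\mathbb 1\{U\le f^{*}(V)\}$ with $U$ standard logistic and independent of $(V,Y)$. Then:
\begin{enumerate}
\item[(a)] $f^{*}(v)=v$ is the unique minimizer, with risk $0$;
\item[(b)] $H=\operatorname{expit}$ is strictly increasing;
\item[(c)] $V_1\perp\!\!\!\perp R\mid Y$ holds trivially, since $V_1$ is degenerate given $Y$.
\end{enumerate}
Yet $f^{*}$ is not almost surely constant, and the best constant has risk $1/4$. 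More generally, any $V_j$ that is a function of $(Y,V_{-j})$ satisfies exclusion vacuously and will typically be used by $f^{*}$.

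So the condition you add is a missing hypothesis of the statement, not an artefact of your proof. In unconditional form it reads $\sigma(V)\cap\sigma(Y,V_{-j})=\sigma(V_{-j})$ up to null sets. It is also not the degeneracy the paper's subsequent remark discusses. In the example $V_1$ varies freely given $V_{-j}$; what goes wrong is that it can be recovered from $Y$.

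Several details should be tightened.
\begin{itemize}
\item The variance decomposition does no work. Once the condition is stated as you state it, it applies directly to $X=H\{f^{*}(V)\}$, which is $\sigma(V)$-measurable and almost surely equal to a $\sigma(Y,V_{-j})$-measurable variable.
\item The sufficient condition you offer is not the right one. Mutually singular laws can share a support, and then the indicator of a set carrying one law and null for the other recovers $V_j$ from $Y$; nondegeneracy plays no role either way. What suffices is that, for almost every $v_{-j}$, the conditional laws of $Y$ given $V=(v_j,v_{-j})$ are mutually absolutely continuous as $v_j$ varies. The conditional joint law of $(V_j,Y)$ is then equivalent to the product of its marginals. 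An almost-sure identity $a(V_j)=b(Y)$ therefore transfers to the product measure, and Fubini's theorem makes $a$ almost surely constant.
\item Do not specialize $\eta$ to a threshold rule, which narrows the statement; assume directly that $H$ is strictly monotone. That is all you use, and it is how the paper's proof reads the hypothesis (its $g$).
\item Drop the closing aside about every minimizer. Only $f^{*}$ generates $R$, so a second minimizer induces no response probability that could produce a contradiction. The claim is trivial when the minimizer is unique and can fail when it is not.
\end{itemize}
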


\begin{proof}
Because \(U\) is independent of \((V,Y)\) and \(R\) depends on \(V\)
only through \(f^{*}(V)\),
\(\operatorname{pr}(R=1\mid Y,V)=g\{f^{*}(V)\}\) for
\(g(t)=E_U\{\eta(t,U)\}\), which is strictly monotone. Exclusion states
that this conditional probability does not depend on \(V_j\) given
\(V_{-j}\); strict monotonicity of \(g\) then forces \(f^{*}(V)\) to be
constant in \(V_j\) for \(P_{V_{-j}}\)-almost every \(V_{-j}\), so
\(f^{*}\) agrees \(P_V\)-almost surely with a function of \(V_{-j}\)
alone, which lies in \(\mathcal F_{-j}\) by the closure assumption. The
displayed identity follows, since the minimum over \(\mathcal F\) is
then attained in \(\mathcal F_{-j}\). Strict properness of \(L\) enters
only through the existence and uniqueness of \(f^{*}\); the argument
above uses the other three alignments once each.
\end{proof}

Both qualifications are needed and neither is vacuous. The conclusion is
almost sure with respect to \(P_V\), so it constrains \(f^{*}\) only
where \(V_j\) actually varies given \(V_{-j}\): if the conditional
support of \(V_j\) given \(V_{-j}\) is a single point almost surely, the
proposition holds but says nothing, because no function could have
depended on \(V_j\) in any case. It has content exactly when \(V_j\)
carries conditional variation, which is the situation of interest---a
characteristic recorded on the application form because it differs
across applicants with otherwise similar profiles. The closure
assumption rules out classes contrived so that the only representative
of a \(v_{-j}\)-measurable function carries a redundant \(v_j\)
argument; it holds for the linear and additive classes used in
scorecards, and for any class defined by a constraint on the function
rather than on its arguments.

The mechanism is worth stating plainly. The selector reaches \(V_j\)
only through the score, so if the score moved with \(V_j\) the
acceptance probability would move with it, and that is precisely what
exclusion forbids. Exclusion therefore does not merely restrict how
\(V_j\) may enter the response mechanism; it requires that the score not
use \(V_j\) at all.

The contrapositive is the statement of practical interest: \emph{a
characteristic that is relevant for the outcome within \(\mathcal F\)
cannot satisfy the exclusion restriction.} Relevance and exclusion are
not in tension by accident of a particular data set; under an aligned
selector they are incompatible, and a scorecard built to predict default
is precisely such a selector.

\subsection{How much misalignment is needed}\label{sec-misalignment}

Proposition~\ref{prp-selection} says that an aligned selector leaves no
relevant characteristic excluded. The practical question is the converse
one: how far must the selector's objective depart from the analyst's
before a usable shadow variable appears? The first half of the answer is
immediate.

\begin{proposition}[Exclusion and invisibility are the same
thing]\protect\hypertarget{prp-converse}{}\label{prp-converse}

Under the conditions of Proposition~\ref{prp-selection}, \(V_j\)
satisfies the exclusion restriction if and only if \(f^{*}\) does not
depend on \(V_j\).

\end{proposition}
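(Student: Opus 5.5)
The plan is to prove the two directions separately, both resting on the identity established in the proof of Proposition~\ref{prp-selection}. Because \(U\) is independent of \((V,Y)\) and \(R\) depends on \(V\) only through the score,
\[
\operatorname{pr}(R=1\mid Y,V)=g\{f^{*}(V)\},\qquad g(t)=E_U\{\eta(t,U)\}.
\]
Since \(\eta\) is strictly monotone in its first argument, \(g\) is strictly monotone. Throughout, ``does not depend on \(V_j\)'' will be read in the same \(P_V\)-almost-sure sense that the earlier proposition uses: \(f^{*}\) agrees \(P_V\)-almost surely with a function of \(v_{-j}\) alone. Under the closure assumption this is the same as agreeing with an element of \(\mathcal F_{-j}\).

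\emph{Only if.} This direction is exactly the first conclusion of Proposition~\ref{prp-selection}, so I will simply invoke it.

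\emph{If.} Suppose \(f^{*}(V)=\tilde f(V_{-j})\) almost surely. Then
\[
\operatorname{pr}(R=1\mid Y,V)=g\{\tilde f(V_{-j})\}\quad\text{almost surely}.
\]
This is \(\sigma(Y,V_{-j})\)-measurable, so by the tower property it also equals \(\operatorname{pr}(R=1\mid Y,V_{-j})\). Because \(R\) is binary, its conditional law given \((Y,V)\) is pinned down by this one probability. That law coincides with its law given \((Y,V_{-j})\), and this is precisely the conditional independence \(V_j\perp\!\!\!\perp R\mid(Y,V_{-j})\).

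I expect no real obstacle. The only care needed is with null sets. In the if direction I must check that an almost-sure version of the conditional probability which does not involve \(V_j\) suffices for the conditional independence. It does, because conditional independence is itself a statement about almost-sure versions. Strict monotonicity of \(g\) is used only in the only-if direction, through Proposition~\ref{prp-selection}.
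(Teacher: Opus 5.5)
Your proof is correct, and it is more complete than the paper's. You both obtain the implication from exclusion to invisibility by citing Proposition~\ref{prp-selection}.

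For the other half, the paper argues as follows. If \(f^{*}\) depends on \(V_j\) given \(V_{-j}\) on a set of positive probability, then, since \(g\) is strictly monotone, so does \(\operatorname{pr}(R=1\mid Y,V)=g\{f^{*}(V)\}\), and exclusion fails. But that is the contrapositive of exclusion \(\Rightarrow\) invisibility. It rederives the implication of Proposition~\ref{prp-selection} and leaves the converse unwritten.

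Your ``if'' argument supplies exactly that converse:
\begin{itemize}
\item \(g\{\tilde f(V_{-j})\}\) is \(\sigma(Y,V_{-j})\)-measurable;
\item the tower property makes it equal to \(\operatorname{pr}(R=1\mid Y,V_{-j})\);
\item since \(R\) is binary, equality of the two conditional probabilities is the conditional independence \(V_j\perp\!\!\!\perp R\mid(Y,V_{-j})\).
\end{itemize}
Your handling of null sets is adequate, since conditional independence is itself a statement about almost-sure versions.

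Your remark that this direction needs no monotonicity of \(g\) is also right. It uses only that \(R\) reaches \(V\) through the score and that \(U\) is independent of \((V,Y)\). This shows the division of labour cleanly: injectivity of \(g\) is what makes exclusion force invisibility, while invisibility forces exclusion for any measurable \(g\).
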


\begin{proof}
Sufficiency is Proposition~\ref{prp-selection}. For necessity,
\(\operatorname{pr}(R=1\mid Y,V)=g\{f^{*}(V)\}\) with \(g\) strictly
monotone, so if \(f^{*}\) depends on \(V_j\) given \(V_{-j}\) on a set
of positive probability then so does the response probability, and
exclusion fails.
\end{proof}

Exclusion is therefore not a matter of degree in the selector's
objective: it is invisibility to it. What remains is quantitative, and
it is not what one would guess. Write \(\mathrm{exc}_j\) for the risk
reduction \(V_j\) buys for \(R\) within \(\mathcal F\)---the exclusion
violation of Section~\ref{sec-screens}---and \(\mathrm{rel}_j\) for the
risk reduction it buys for \(Y\) in the richer class in which relevance
is measured. A usable shadow variable needs
\(\mathrm{rel}_j/\mathrm{exc}_j\) large. Table~\ref{tbl-misalignment}
tilts a characteristic's contribution to the outcome continuously out of
the scorer's class, from full alignment to none, holding its variance
fixed.

\begin{longtable}[]{@{}rrrrl@{}}

\caption{\label{tbl-misalignment}Relevance against exclusion violation
as a characteristic's contribution to the outcome is tilted out of the
scorer's class. The alignment \(\phi=1\) places it entirely inside;
\(\phi=0\) entirely outside. There is no intermediate regime.}

\tabularnewline

\toprule\noalign{}
\(\phi\) & relevance & exclusion violation & ratio & form that helps \\
\midrule\noalign{}
\endhead
\bottomrule\noalign{}
\endlastfoot
1.0 & 0.0640 & 0.0477 & 1.34 & linear \\
0.8 & 0.0459 & 0.0338 & 1.36 & linear \\
0.6 & 0.0255 & 0.0193 & 1.32 & linear \\
0.5 & 0.0185 & 0.0142 & 1.31 & linear \\
0.5 & 0.0172 & 0.0001 & 285 & quadratic \\
0.4 & 0.0224 & 0.0000 & 1308 & quadratic \\
0.2 & 0.0358 & 0.0000 & 1310 & quadratic \\
0.0 & 0.0503 & 0.0000 & 79883 & quadratic \\

\end{longtable}

The ratio does not rise through the table; it jumps. While any part of
the characteristic's effect remains inside \(\mathcal F\) the ratio sits
near unity, and it does so while the relevance itself varies by a factor
of three---the scorer sees exactly what the outcome model sees, because
under alignment both are transforms of the same index. The constant
depends on the two links and on nothing else: replacing the logistic
response by a probit moves the plateau from 1.32 to 0.56, while moving
the response rate from 0.50 to 0.84 moves it by 5.1 per cent. Once the
contribution leaves \(\mathcal F\) entirely, the exclusion violation
collapses to numerical zero and the ratio is limited only by that. On a
grid refined to steps of \(0.01\) the transition occupies a single step,
between \(\phi=0.50\) and \(\phi=0.49\).

Part of that jump is a change in which functional form of the
characteristic is tested, as the last column records, and that is not an
artefact but the content of item (i) below: a characteristic is usable
precisely when the form in which it predicts the outcome is one the
scorer's class cannot represent. The switch from a linear to a quadratic
entry is what it looks like for the relevant direction to leave
\(\mathcal F\).

The answer to the question, then, is that partial misalignment is worth
nothing. Halving the part of a characteristic that the scorecard can use
does not halve its exclusion violation relative to its relevance; it
leaves the ratio where it was. Only a characteristic whose entire
contribution to the outcome lies outside the class the scorer optimizes
over is usable, which is a far more demanding requirement than the
informal advice to look for variables the lender ``does not really
use''. It also explains the shape of Figure~\ref{fig-frontier}: the
characteristics there cluster along the diagonal not by coincidence but
because each is partly visible to the scorecard, and
Table~\ref{tbl-misalignment} shows that partly is enough. The diagonal
is not a reference line drawn for the reader; it is where an aligned
selector puts everything it can see.

\subsection{Where a valid shadow variable can still
live}\label{sec-where}

Proposition~\ref{prp-selection} is stated under four alignments, and
each failure of one of them is a place to look.

\begin{enumerate}
\def\labelenumi{(\roman{enumi})}
\item
  \emph{Class.} \(f^{*}\) minimizes risk over \(\mathcal F\), not over
  all functions. A characteristic entering the outcome through a
  functional form outside \(\mathcal F\)---an interaction, a threshold,
  a non-monotonicity---is invisible to the score and may therefore
  satisfy exclusion while remaining relevant.
\item
  \emph{Outcome.} The scorecard predicts default within a fixed horizon;
  an analyst interested in a different outcome, or a different horizon,
  faces a different \(f^{*}\).
\item
  \emph{Population.} The score was fitted on a historical and itself
  selected sample, so \(f^{*}\) minimizes risk under a law that is not
  the analyst's.
\item
  \emph{Rule.} \(\eta\) need not be monotone in the score alone; policy
  overrides make \(R\) depend on \(V\) outside \(f^{*}\), which breaks
  exclusion rather than restoring relevance, and so does not help.
\end{enumerate}

Only (i)--(iii) create room; (iv) closes it further. The search for a
shadow variable in reject inference should therefore be directed at
characteristics the scorecard could not use in the form in which they
matter, not at characteristics it declined to use.

There is also the option of not searching. \citet{Maity2025} estimates
the tilt parameters of a missing-not-at-random model with a binary
outcome directly, dispensing with a shadow variable altogether at the
price of an identifiability condition imposed on the tilt. Where
Proposition~\ref{prp-selection} applies---and
Section~\ref{sec-application} is such a case, the outcome there being
default---that is the more promising direction, because it does not
require the scorecard to have overlooked something.

A simulation makes both directions concrete. Relevance and exclusion
violation are measured as in Section~\ref{sec-screens}, by incremental
McFadden \(R^{2}\), so the ratio of the two is comparable with the
values reported there. With \(\mathcal F\) rich enough to contain the
outcome model, the candidate has relevance 0.064 against an exclusion
violation of 0.064, a ratio of 1.00: relevant, but equally visible to
the selector, and useless as an instrument. The credit score of
Section~\ref{sec-screens} sits at 0.95 on the same scale---the same
configuration, reached in data rather than by construction. When instead
the outcome depends on \(V_j^{2}\) while \(\mathcal F\) is linear,
\(V_j^{2}\) has relevance 0.118 against an exclusion violation of
\(2.5\times10^{-7}\), a ratio above \(10^{5}\): relevant and effectively
excluded. No precision should be read into that ratio---its denominator
is indistinguishable from zero---but its order of magnitude is the
point. It lives exactly in the gap that (i) predicts.

\section{Discussion}\label{sec-discussion}

The literature treats the relevance of a shadow variable as a condition
to be satisfied. Making it a scalar and following it through the problem
changes what can be said. The identifying signal factorizes through that
scalar at every point of the identified set (Lemma~\ref{lem-signal});
the second moment of the inverse weight is an exponential-tilt
divergence that determines the effective sample size exactly
(Theorem~\ref{thm-decomposition}, Corollary~\ref{cor-ess}) and is itself
free of relevance (Proposition~\ref{prp-rho-free}); and the two facts
pull in opposite directions, so that the range one must scan grows as
relevance falls while the range one can scan does not
(Theorem~\ref{thm-reach}). None of this is a statement about a
particular estimator. It is a statement about what any procedure built
on inverse weights must contend with.

Three consequences are practical. First, an effective sample size should
be evaluated at the estimated response model, not at ignorability: in
Section~\ref{sec-application} the two differ by a factor of two, in the
direction that matters. Second, a confidence set obtained by scanning
should record \emph{why} it closed. \citet{Dufour1997} requires
unboundedness where the data are uninformative, and inverting a
statistic delivers it in exact arithmetic; in floating point the scan
may instead stop because \(\Lambda\) overflowed, and the two outcomes
are indistinguishable in the output. Reporting the reach alongside the
set costs nothing and distinguishes them. Third, a candidate shadow
variable should be placed on the relevance--exclusion plane before it is
adopted. The screens of Section~\ref{sec-screens} took minutes and would
have prevented the choice actually made.

Several limitations qualify all of this, and one of them is structural.
The factorization is a property of the parametrization
(\ref{eq-relevance}), not of the model; a different parametrization of
relevance would not factorize, and the results are stated for this one.
The exclusion restriction remains untestable, so the screens bound how
far a characteristic can violate it without bounding it away from zero:
Section~\ref{sec-screens} reports characteristics for which exclusion is
not rejected, which is weaker than exclusion. The regularization of
Section~\ref{sec-representer} is derived only above a transition:
(\ref{eq-tuning-derived}) balances a proved bias against a variance form
that is established numerically, and the balance is meaningful only
where the variance exists, which is where \(\rho\sqrt n\) is large. That
the fitted rule holds on both sides of that transition is unexplained.
And Proposition~\ref{prp-selection} is stated under four alignments
between the scorer's objective and the analyst's; it explains why no
shadow variable was available in these data, not why none could be
available anywhere.

The binary case deserves a final word, because it inverts the paper's
main failure without repairing its cause. When the tilting variable has
an atom at its essential supremum, as a binary outcome does, \(\Lambda\)
saturates (Proposition~\ref{prp-dichotomy}), the arithmetic never gives
out, and the scanned set is honestly unbounded when the data are
uninformative---the behavior \citet{Dufour1997} prescribes. That is the
benign case, and it is the case of Section~\ref{sec-application}. But
saturation is not identification: beyond a moderate \(|\gamma_Y|\) the
moment condition simply ceases to discriminate, and the threshold
(\ref{eq-rhostar}) says when. The difference between the two cases is
whether the analyst is told.

Two questions are left open, and both are narrower than they were. The
variance form (\ref{eq-sdlambda}) on which the tuning rule rests is what
the noise transmission through \((\hat M+\lambda I)^{-1}\) suggests and
what the well-conditioned cells show, but it is not derived; and nothing
here explains why the fitted rule (\ref{eq-tuning}) should continue to
hold below the transition, where the second moment is infinite and the
balance (\ref{eq-tuning-derived}) has no meaning. A rate-optimal penalty
in the sense of \citet{CarrascoFlorensRenault2007} would settle the
first. The second is the more interesting: it asks what selects
\(\lambda\) when there is no variance to trade against bias, and the
answer cannot be a moment calculation.

Proposition~\ref{prp-converse} and Table~\ref{tbl-misalignment} answer
the question we had left about the selection result, and the answer
restricts the search rather than widening it: a characteristic partly
visible to the scorer is worth no more than one fully visible to it.
What remains is whether the wholly invisible component can be identified
in advance. Item (i) of Section~\ref{sec-where} says where it lives---in
functional forms outside the class the scorer optimizes over---but
recognizing such a form requires knowing something about the outcome
model that the analyst is trying to estimate. Whether that circle can be
broken, or whether it is the real reason reject inference has resisted a
shadow-variable solution for forty years, we do not know.

\section*{Funding}\label{funding}
\addcontentsline{toc}{section}{Funding}

This research received no specific grant from any funding agency in the
public, commercial or not-for-profit sectors.

\section*{Disclosure statement}\label{disclosure-statement}
\addcontentsline{toc}{section}{Disclosure statement}

The author is a co-author of the analysis reanalyzed in
Section~\ref{sec-application} and reports no other competing interests.

\section*{Declaration of Generative AI
Use}\label{declaration-of-generative-ai-use}
\addcontentsline{toc}{section}{Declaration of Generative AI Use}

The author used Anthropic's Claude, accessed through Claude Code
(version 2.1.295), for coding assistance in the simulation and
replication scripts, for literature identification and classification,
for copyediting, and for auditing the internal consistency of the
manuscript: numerical cross-checks against the deposited results,
cross-reference and citation verification, and checks that every
numbered result carries a proof. All mathematical results, their proofs,
and all substantive text are the author's own. The author has checked
the terms of use of the tool and confirms that its use is suitable for
publication. The author verified every number, reference and derivation
reported here and takes full responsibility for the integrity of the
whole content, including the accuracy of the references.

\section*{Data and code availability}\label{data-and-code-availability}
\addcontentsline{toc}{section}{Data and code availability}

Code and derived data reproducing every table, figure and number in this
paper---the data-preparation code of Section~\ref{sec-application}, the
simulation scripts and the derived results---are openly available at
\url{https://doi.org/10.5281/zenodo.23262505}, under an MIT licence for
the code and CC BY 4.0 for the derived results. The credit-application
data of \citet{BueckervanKampenKraemer2013} are proprietary and cannot
be redistributed; the deposited code reproduces every other number in
the paper without them, and reproduces those of
Section~\ref{sec-application} from the original files. The Lending Club
data used in Section~\ref{sec-screens} are public and are identified in
the deposit by vintage, filters and derived columns.

\section*{References}\label{references}
\addcontentsline{toc}{section}{References}

\renewcommand{\bibsection}{}
\bibliography{refs}

\appendix

\section{Notation}\label{sec-notation}

The paper carries a large symbol load, and four letters would otherwise
have to serve twice. The table records the convention actually used; the
section numbers are those of the body above.

\begin{longtable}[]{@{}
  >{\raggedright\arraybackslash}p{(\linewidth - 4\tabcolsep) * \real{0.3333}}
  >{\raggedright\arraybackslash}p{(\linewidth - 4\tabcolsep) * \real{0.3333}}
  >{\raggedright\arraybackslash}p{(\linewidth - 4\tabcolsep) * \real{0.3333}}@{}}
\toprule\noalign{}
\begin{minipage}[b]{\linewidth}\raggedright
symbol
\end{minipage} & \begin{minipage}[b]{\linewidth}\raggedright
meaning
\end{minipage} & \begin{minipage}[b]{\linewidth}\raggedright
first used
\end{minipage} \\
\midrule\noalign{}
\endhead
\bottomrule\noalign{}
\endlastfoot
\(Y\), \(X\), \(Z\), \(R\) & outcome, always-observed covariates, shadow
variable, response indicator & Section~\ref{sec-moments} \\
\(O=(X,Z,R,RY)\) & the observation & Section~\ref{sec-moments} \\
\(\psi(Y,X;\beta)\), \(\beta_0\) & estimating function and its
complete-data root & Section~\ref{sec-moments} \\
\(\pi_0(Y,X)\), \(\pi_\gamma\) & true and working response probability &
Section~\ref{sec-moments} \\
\(\gamma=(\gamma_0,\gamma_X,\gamma_Y)\), \(w\) & response-model
parameter and its data-dependent part & Section~\ref{sec-moments},
Section~\ref{sec-tilt-setup} \\
\(h(Z,X)\), \(g(O;\theta)\), \(g_Z\) & instrument functions, moment
vector, its \(Z\)-coordinate & Section~\ref{sec-moments} \\
\(s\), \(\theta(s)\), \(J(s)\) & scan position, identified curve,
Jacobian of the non-\(Z\) moments & Section~\ref{sec-curve-sub} \\
\(\rho\), \(S\), \(\nu\) & relevance and the two components of \(Z\) &
Section~\ref{sec-relevance} \\
\(\varphi(s)\) & signal function; the \(Z\)-moment equals
\(\rho\varphi(s)\) & Section~\ref{sec-relevance} \\
\(\sigma\), \(\varsigma^{2}=\sigma^{2}(1-\rho^{2})\) & outcome noise and
the variance left after conditioning on \(Z\) &
Section~\ref{sec-varprop} \\
\(\Lambda(\gamma)\), \(Q\), \(Q_s\), \(W=-w\) & tilt functional,
\(\pi_0\)-weighted law, its tilt, tilting variable &
Section~\ref{sec-decomposition}, Section~\ref{sec-integral} \\
\(K(t)\), \(T=\lvert\gamma_Y\rvert\) & cumulant generating function of
\(W\), tilt magnitude & Section~\ref{sec-integral} \\
\(M_W\) & essential supremum of \(W\) --- \textbf{not} the design matrix
\(M\) & Section~\ref{sec-dichotomy} \\
\(\alpha\) & density exponent at \(M_W\); also the measured tail index
of Section~\ref{sec-tuning-derivation} & Section~\ref{sec-dichotomy} \\
\(v_\infty\), \(v_-\), \(v_+\) & limiting and bracketing tilted
variances & Section~\ref{sec-integral}, Section~\ref{sec-thereach} \\
\(\omega_i=R_i/\pi_\gamma\), \(\mathrm{ESS}\), \(m\) & weights,
effective sample size, smallest usable value &
Section~\ref{sec-decomposition}, Section~\ref{sec-thereach} \\
\(b(X,Z)\), \(a(X,Y)\), \(d\) & sieve, test functions, their common
dimension & Section~\ref{sec-construction} \\
\(M=E(Rab^{\mathrm T})\), \(r\), \(v\), \(\Theta\) & design matrix,
right-hand side, \(E\{(1-R)b\}\), representer coefficients &
Section~\ref{sec-construction} \\
\(\delta_0=\Theta^{\mathrm T}b\), \(q^{\mathrm T}=v^{\mathrm T}M^{-1}\)
& representer and the weight that is free of \(\rho\) &
Section~\ref{sec-construction}, Section~\ref{sec-cancellation} \\
\(\lambda\), \(\lambda^{*}\) & ridge and the value the rule prescribes &
Section~\ref{sec-construction}, Section~\ref{sec-tuning} \\
\(s=\sigma_{\min}(M)\), \(s_2\), \(\cos\theta\) & smallest and next
singular value, and the \textbf{signed} \(u_1^{\mathrm T}v_1\) &
Section~\ref{sec-tuning-derivation} \\
\(\kappa\), \(B\), \(C\), \(\sigma_0\) & variance constant,
\textbf{bias} constant, tuning constant, scale of
\(\mathrm{sd}(\hat\beta_\lambda)\) &
Section~\ref{sec-tuning-derivation} \\
\(H\), \(\mathrm{IF}\) & complete-data Jacobian and the influence
function & Section~\ref{sec-ordersub} \\
\(V=(V_1,\dots,V_p)\), \(f^{*}\), \(\eta\), \(U\), \(L\), \(\mathcal F\)
& characteristics, optimizing score, selection rule, its noise, loss,
scorer's class & Section~\ref{sec-noshadow} \\
\(\mathrm{rel}_j\), \(\mathrm{exc}_j\) & risk reduction for \(Y\) and
for \(R\) & Section~\ref{sec-misalignment} \\
\(\phi\) & alignment in the misalignment design --- \textbf{not} the
estimating function \(\psi\) & Section~\ref{sec-misalignment} \\
\end{longtable}

Two conventions are worth stating because they are easy to misread.
\(\operatorname{expit}\) is the inverse of \(\operatorname{logit}\),
\(\operatorname{expit}(u)=(1+e^{-u})^{-1}\). And \(B\) is a scalar, the
constant in the bias \(-B\lambda/\rho\); the vector of test functions is
\(a(X,Y)\), which it multiplies nowhere.

\section{Regularity conditions}\label{sec-conditions}

Throughout, \(O=(X,Z,R,RY)\) is independent and identically distributed
and \(\psi(Y,X;\beta)\) is the estimating function of the complete-data
problem. The sieve \(b(X,Z)\in\mathbb R^{d}\) and test functions
\(a(X,Y)\in\mathbb R^{d}\) are fixed, with \(M=E(Rab^{\mathrm T})\),
\(r(\beta)=E\{Ra\psi(\beta)^{\mathrm T}\}\) and \(v=E\{(1-R)b\}\) as
above.

\begin{definition}[Nonsingular
design]\protect\hypertarget{def-M}{}\label{def-M}

\(M\) is nonsingular; \(s:=\sigma_{\min}(M)>0\).

\end{definition}

\begin{definition}[Representer in the
sieve]\protect\hypertarget{def-rep}{}\label{def-rep}

The representer equation has a solution in the span of \(b\).

\end{definition}

\begin{definition}[Representer for the
derivative]\protect\hypertarget{def-rep2}{}\label{def-rep2}

The same holds for each column of
\(\partial\psi(\beta)/\partial\beta^{\mathrm T}\) in a neighborhood of
\(\beta_0\).

\end{definition}

\begin{definition}[Fourth
moments]\protect\hypertarget{def-mom}{}\label{def-mom}

\(E\|a\|^{4}<\infty\), \(E\|b\|^{4}<\infty\),
\(E\|\psi(\beta_0)\|^{4}<\infty\) and \(E\|\delta_0\|^{4}<\infty\).

\end{definition}

\begin{definition}[Differentiability]\protect\hypertarget{def-diff}{}\label{def-diff}

\(\psi\) is continuously differentiable in \(\beta\) on a neighborhood
\(N\) of \(\beta_0\), with
\(\sup_{\beta\in N}\|\partial\psi(\beta)/\partial\beta^{\mathrm T}\|\le D\)
for a function \(D\) with \(ED^{4}<\infty\), and
\(H:=E\{\partial\psi(\beta_0)/\partial\beta^{\mathrm T}\}\) is
nonsingular.

\end{definition}

\begin{definition}[Consistency]\protect\hypertarget{def-cons}{}\label{def-cons}

\(\hat\beta_\lambda\to\beta_0\) in probability.

\end{definition}

Condition~\ref{def-mom} requires fourth and not second moments because
the quantities that must be \(O_p(n^{-1/2})\) in
Section~\ref{sec-neumann} are sample averages of \(Rab^{\mathrm T}\) and
\(Ra\psi^{\mathrm T}\), whose second moments follow from
Condition~\ref{def-mom} by the Cauchy--Schwarz inequality.
Condition~\ref{def-rep2} is used only in Section~\ref{sec-zest}; it is
implied by differentiating the representer equation under the integral
sign whenever that is legitimate, and is stated separately because the
asymptotic normality theorem depends on it.

\section{Proofs}\label{sec-proofs}

\subsection{Bounded outcome}\label{bounded-outcome}

\begin{corollary}[An atom at the essential
supremum]\protect\hypertarget{cor-atom-s}{}\label{cor-atom-s}

Let \(W\) have essential supremum \(M_W<\infty\) and
\(p:=\operatorname{pr}(W=M_W)>0\). Then \(\log\Lambda(T)\to-\log p\) as
\(T\to\infty\), and symmetrically
\(\log\Lambda(T)\to-\log\operatorname{pr}(W=\operatorname{ess\,inf}W)\)
as \(T\to-\infty\). If \(W\) has no atom at \(M_W\) and its density
behaves like \(c(M_W-x)^{\alpha}\) there, then
\(\log\Lambda(T)=(\alpha+1)\log(T/2)+O(1)\to\infty\). When \(W\) is
binary with \(\operatorname{pr}(W=1)=p\) the limit is approached from
below at an exponential rate,
\(\log\Lambda(T)=-\log p-2p^{-1}(1-p)e^{-T}\{1+o(1)\}\).

\end{corollary}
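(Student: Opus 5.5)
The plan rests on one reduction: $\Lambda(T)=E(e^{2TW})/\{E(e^{TW})\}^{2}$ is invariant under a shift $W\mapsto W-c$. The shift multiplies the numerator by $e^{-2Tc}$ and the squared denominator by the same factor. I would therefore replace $W$ by $W-M_W$ and assume throughout that $M_W=0$, so that $W\le 0$ almost surely and $p=\operatorname{pr}(W=0)$. For the atomic case, split $E(e^{kTW})=p+E\{e^{kTW}\mathbb 1(W<0)\}$ for $k=1,2$. On $\{W<0\}$ the integrand decreases to $0$ as $T\to\infty$ and is bounded by $1$, so dominated convergence sends the second term to $0$. Hence $\Lambda(T)\to p/p^{2}=1/p$ and $\log\Lambda(T)\to-\log p$. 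The statement for $T\to-\infty$ follows by applying the same argument to $-W$, whose essential supremum is $-\operatorname{ess\,inf}W$, because $\Lambda$ computed for $-W$ at $-T$ equals $\Lambda$ for $W$ at $T$.

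For the binary case I would compute exactly. With $E(e^{TW})=1-p+pe^{T}$ (taking $W\in\{0,1\}$ directly, or equivalently after the shift), factor $pe^{T}$ out of the denominator and $pe^{2T}$ out of the numerator. Writing $r=(1-p)/p$, this gives
\[
\log\Lambda(T)=-\log p+\log\bigl(1+re^{-2T}\bigr)-2\log\bigl(1+re^{-T}\bigr).
\]
Expanding both logarithms gives $-\log p-2re^{-T}+O(e^{-2T})$, which is the stated rate, approached from below.

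For the atomless case I would use Watson's lemma. With $M_W=0$ and density $f(x)\sim c(-x)^{\alpha}$ as $x\uparrow0$, substituting $u=-x$ gives
\[
E(e^{TW})=\int_0^\infty e^{-Tu}f(-u)\,du=c\,\Gamma(\alpha+1)\,T^{-(\alpha+1)}\{1+o(1)\}.
\]
To justify this, fix $\delta>0$ such that $f(-u)/\{cu^{\alpha}\}\in[1-\epsilon,1+\epsilon]$ on $(0,\delta)$, and integrate that piece explicitly. The remainder is at most $e^{-T\delta}\operatorname{pr}(W<-\delta)$, which is exponentially negligible against any power of $T$. Applying the same expansion at $2T$ and taking the ratio,
\[
\Lambda(T)=\frac{c\Gamma(\alpha+1)(2T)^{-(\alpha+1)}}{c^{2}\Gamma(\alpha+1)^{2}T^{-2(\alpha+1)}}\{1+o(1)\}=\frac{(T/2)^{\alpha+1}}{c\,\Gamma(\alpha+1)}\{1+o(1)\},
\]
so $\log\Lambda(T)=(\alpha+1)\log(T/2)+O(1)$. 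The constant term is $-\log\{c\Gamma(\alpha+1)\}$, and it diverges in $T$ as claimed. As a sanity check, the uniform case ($\alpha=0$, $c=1$) recovers the closed form quoted after Corollary~\ref{cor-atom}.

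The step I expect to need the most care is this Watson-lemma argument. The hypothesis ``behaves like $c(M_W-x)^{\alpha}$'' has to be read as $f(x)/\{c(M_W-x)^{\alpha}\}\to1$ with $\alpha>-1$, so that the local integral converges. The law may also carry an atom or singular mass away from $M_W$, and this is harmless only because of the uniform exponential bound on $\{W<M_W-\delta\}$. I would state both points explicitly. Finiteness of $K$ on $[0,2T]$, assumed in Theorem~\ref{thm-integral}, is automatic here because $W$ is bounded above and $T\ge0$.
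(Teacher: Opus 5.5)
Your proposal is correct and follows the paper's route. You use dominated convergence for the atom, Watson's lemma for the atomless case, and the exact factorization $\log\Lambda(T)=-\log p+g(2T)-2g(T)$ with $g(t)=\log\{1+p^{-1}(1-p)e^{-t}\}$ for the binary case; your shift to $M_W=0$ merely replaces the paper's cancellation of the $e^{tM_W}$ factors in $K(2T)-2K(T)$, and you make the constant $-\log\{c\,\Gamma(\alpha+1)\}$ and the tail bound explicit. The paper additionally shows monotonicity via $f'(T)=2(p_{2T}-p_T)$ with $p_t=pe^{t}/(1-p+pe^{t})$ increasing, which the statement does not require but which upgrades your asymptotic ``from below'' to an inequality at every $T>0$.
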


\begin{proof}
With an atom, \(E(e^{tW})=pe^{tM_W}\{1+o(1)\}\) as \(t\to\infty\), so
\(K(t)=tM_W+\log p+o(1)\) and
\(\log\Lambda(T)=K(2T)-2K(T)=-\log p+o(1)\); the same computation
applied to \(-W\) gives the statement as \(T\to-\infty\). Without one,
Watson's lemma gives
\(E(e^{tW})=e^{tM_W}\Gamma(\alpha+1)c\,t^{-(\alpha+1)}\{1+o(1)\}\), so
\(K(t)=tM_W-(\alpha+1)\log t+O(1)\) and
\(\log\Lambda(T)=-(\alpha+1)\log(2T)+2(\alpha+1)\log T+O(1)=(\alpha+1)\log(T/2)+O(1)\).
For \(W\in\{0,1\}\) the first case applies with \(M_W=1\); the
monotonicity claimed in the main text follows from
\(f'(T)=2(p_{2T}-p_T)\) with \(p_t=pe^{t}/(1-p+pe^{t})\) strictly
increasing. The rate is exponential there: writing
\(g(t)=\log\{1+p^{-1}(1-p)e^{-t}\}\) we have \(K(t)=t+\log p+g(t)\), so
\[
  \log\Lambda(T)=K(2T)-2K(T)=-\log p+g(2T)-2g(T),
\] in which \(g(2T)=O(e^{-2T})\) is of smaller order while
\(g(T)=p^{-1}(1-p)e^{-T}\{1+o(1)\}\), whence
\(\log\Lambda(T)=-\log p-2p^{-1}(1-p)e^{-T}\{1+o(1)\}\). The term that
survives is the one carried over from \(-2K(T)\), so the limit is
approached from below, as the monotonicity just established requires.
This is the rate Section~\ref{sec-saturation} uses.
\end{proof}

Bounded support alone does not suffice, and the uniform law is the
instructive counterexample: for \(W\sim\mathrm{Un}(0,1)\),
\(K(t)=\log\{(e^{t}-1)/t\}\) and \[
  \log\Lambda(T)=\log\frac{(e^{T}+1)T}{2(e^{T}-1)}\longrightarrow\log(T/2),
\] which diverges, in agreement with the second case at \(\alpha=0\).
Table~\ref{tbl-atom} evaluates all three cases numerically: the uniform
law tracking \(\log(T/2)\) without limit, the binary law approaching
\(-\log p\) from below, and a mixture carrying an atom of mass \(0.3\)
at its supremum approaching \(-\log 0.3\).

\begin{longtable}[]{@{}rrrrr@{}}

\caption{\label{tbl-atom}Corollary~\ref{cor-atom-s} checked: a uniform
\(W\) (no atom, diverges), a binary \(W\) (atom, converges to
\(-\log p = 0.693\)), and a mixture with an atom of mass \(0.3\) at the
supremum (limit \(1.204\)).}

\tabularnewline

\toprule\noalign{}
\(T\) & uniform & \(\log(T/2)\) & binary & atom \(0.3\) \\
\midrule\noalign{}
\endhead
\bottomrule\noalign{}
\endlastfoot
1 & 0.079 & -0.693 & 0.194 & 0.089 \\
5 & 0.930 & 0.916 & 0.680 & 0.653 \\
20 & 2.303 & 2.303 & 0.693 & 1.041 \\
50 & 3.219 & 3.219 & 0.693 & 1.137 \\
100 & 3.912 & 3.912 & 0.693 & 1.171 \\

\end{longtable}

\subsection{Variance propagation}\label{variance-propagation}

\begin{lemma}[Variance
propagation]\protect\hypertarget{lem-varprop-s}{}\label{lem-varprop-s}

Let \(Y\mid X,Z\sim N(\mu,\varsigma^{2})\) and
\(\pi(y)=\operatorname{expit}(\gamma_0+\gamma_X^{\mathrm T}x+\gamma_Y y)\).
Then
\(\varsigma^{2}/(1+\gamma_Y^{2}\varsigma^{2}/4)\le\operatorname{var}(Y\mid R=1,X,Z)\le\varsigma^{2}\).

\end{lemma}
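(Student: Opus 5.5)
The proof works with the conditional law of \(Y\) given \((R=1,X,Z)\) at fixed \((X,Z)\). By Bayes' rule, and because \(\pi\) depends on \(Y\) and \(X\) only, this law has density proportional to \(\pi(y)\phi_\varsigma(y-\mu)\), where \(\phi_\varsigma\) is the \(N(0,\varsigma^{2})\) density. Its negative log-density is
\[
  V(y)=\frac{(y-\mu)^{2}}{2\varsigma^{2}}-\log\pi(y)+\text{const}.
\]
Both bounds follow from the curvature of \(V\). Since \(\log\operatorname{expit}(u)\) has second derivative \(-\operatorname{expit}(u)\{1-\operatorname{expit}(u)\}\in[-1/4,0)\), we get
\[
  V''(y)=\varsigma^{-2}+\gamma_Y^{2}\,\pi(y)\{1-\pi(y)\},
  \qquad
  \varsigma^{-2}\le V''(y)\le\varsigma^{-2}+\gamma_Y^{2}/4 .
\]

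\emph{Upper bound.} Because \(V''\ge\varsigma^{-2}\), the tilted law is strongly log-concave with constant \(\varsigma^{-2}\). The Brascamp--Lieb inequality, \(\operatorname{var}\{f(Y)\}\le E\{f'(Y)^{2}/V''(Y)\}\), applied with \(f(y)=y\) gives \(\operatorname{var}(Y\mid R=1,X,Z)\le E\{1/V''(Y)\}\le\varsigma^{2}\).

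\emph{Lower bound.} Use the Cramér--Rao inequality in its location form: for a density \(p\) with finite Fisher information \(I=E\{(\log p)'(Y)^{2}\}\), \(\operatorname{var}(Y)\ge1/I\). This follows from Cauchy--Schwarz applied to \(E\{(Y-EY)(\log p)'(Y)\}=-1\), which is obtained by integrating by parts. The boundary terms vanish because \(p\) has Gaussian tails. A second integration by parts gives \(I=E\{V''(Y)\}\le\varsigma^{-2}+\gamma_Y^{2}/4\), and inverting yields \(\operatorname{var}\ge\varsigma^{2}/(1+\gamma_Y^{2}\varsigma^{2}/4)\).

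\emph{Strictness.} Suppose \(\gamma_Y\neq0\). Then \(\pi(1-\pi)>0\) everywhere, so \(V''(y)>\varsigma^{-2}\) for every \(y\), and hence \(E\{1/V''\}<\varsigma^{2}\). This makes the upper bound strict.

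The only step needing care is justifying the integration by parts, that is, the vanishing of the boundary terms and the finiteness of \(I\). Both hold because \(\pi\) is bounded and smooth while the Gaussian factor dominates in the tails. Everything else is a direct substitution of the curvature bounds.
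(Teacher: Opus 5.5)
Your proof is correct and follows the paper's route: the tilted density $p\propto e^{-V}$ with $\varsigma^{-2}\le V''\le\varsigma^{-2}+\gamma_Y^{2}/4$, Brascamp--Lieb with $f(y)=y$ for the upper bound, and the location-form Cram\'er--Rao bound with $I=E(V'')$ by integration by parts for the lower bound. Your use of the full form $\operatorname{var}\le E\{1/V''\}$ improves slightly on the paper's constant-curvature form, because it is what actually yields the strict inequality for $\gamma_Y\neq0$ that the paper only asserts; your Gaussian-tail justification of the boundary terms is also more concrete than the paper's appeal to log-concavity.
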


\begin{proof}
Conditioning on \(R=1\) tilts the conditional law by \(\pi\), so its
density is \(p(y)\propto e^{-V(y)}\) with
\(V(y)=(y-\mu)^{2}/(2\varsigma^{2})-\log\operatorname{expit}(\gamma_0+\gamma_X^{\mathrm T}x+\gamma_Y y)\).
Since
\((d^{2}/dy^{2})\log\operatorname{expit}(\cdot)=-\gamma_Y^{2}\varpi(1-\varpi)\)
with \(\varpi=\operatorname{expit}(\cdot)\), \[
  V''(y)=\varsigma^{-2}+\gamma_Y^{2}\varpi(1-\varpi),\qquad
  \varsigma^{-2}\le V''\le \varsigma^{-2}+\gamma_Y^{2}/4,
\] using \(\varpi(1-\varpi)\le1/4\); in particular \(p\) is strongly
log-concave. The Brascamp--Lieb inequality, which for
\(p\propto e^{-V}\) with \(V''\ge m>0\) gives
\(\operatorname{var}_p(g)\le E_p\{(g')^{2}\}/m\), applied with
\(g(y)=y\) and \(m=\varsigma^{-2}\) yields the upper bound, strictly
when \(\gamma_Y\neq0\). For the lower bound apply the Cramér--Rao
inequality to the location family \(\{p(\cdot-t)\}\), for which \(Y\) is
unbiased for \(t\): \(\operatorname{var}_p(Y)\ge1/I(p)\) with
\(I(p)=E_p\{(V')^{2}\}=E_p(V'')\) by integration by parts, the boundary
terms vanishing by log-concavity, and
\(E_p(V'')\le \varsigma^{-2}+\gamma_Y^{2}/4\).
\end{proof}

As \(\rho\to1\), \(\varsigma^{2}=\sigma^{2}(1-\rho^{2})\to0\) and the
lower bound tends to \(1\), so the constant \(c(\rho)\) increases to
\(1\). Table~\ref{tbl-varprop} confirms both bounds by Gauss--Hermite
quadrature.

\begin{longtable}[]{@{}rrrr@{}}

\caption{\label{tbl-varprop}Bounds of Lemma~\ref{lem-varprop-s} against
quadrature, \(\gamma_Y=-1\), \(\sigma=1\), averaged over a \(4\times4\)
grid of \((X,Z)\).}

\tabularnewline

\toprule\noalign{}
\(\rho\) & lower bound & measured \(c(\rho)\) & upper bound \\
\midrule\noalign{}
\endhead
\bottomrule\noalign{}
\endlastfoot
0.00 & 0.8000 & 0.8349 & 1 \\
0.30 & 0.8147 & 0.8481 & 1 \\
0.50 & 0.8421 & 0.8726 & 1 \\
0.70 & 0.8869 & 0.9117 & 1 \\
0.90 & 0.9547 & 0.9666 & 1 \\
0.99 & 0.9950 & 0.9965 & 1 \\

\end{longtable}

\subsection{Factorization under the shadow
construction}\label{sec-factorisation}

\begin{lemma}[Factorization]\protect\hypertarget{lem-fact-s}{}\label{lem-fact-s}

\(M=\tilde MD\) with \(D=\operatorname{diag}(I,\rho I)\) and
\(\tilde M\) free of \(\rho\); consequently \(v^{\mathrm T}M^{-1}\) is
free of \(\rho\).

\end{lemma}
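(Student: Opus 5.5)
The plan is to compute $M=E(Rab^{\mathrm T})$ block by block, using the partition of the sieve $b=(b^{0}(X),\,Zk(X))$ that Theorem~\ref{thm-drifting} makes explicit: the first block does not involve $Z$, and the second is linear in $Z$. The first block of columns, $E(Rab^{0\mathrm T})$, is a functional of the law of $(X,Y,R)$ alone. By the argument already used for Lemma~\ref{lem-signal}, that law is free of $\rho$, because (\ref{eq-relevance}) fixes only the conditional law of $Z$; the test functions $a(X,Y)$ likewise involve no $Z$. Set $\tilde M$'s first block equal to this one.

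For the $Z$-block I will substitute $Z=\rho S+(1-\rho^{2})^{1/2}\nu$, which gives
\[
E(Ra\,Zk^{\mathrm T})=\rho\,E(Ra\,Sk^{\mathrm T})+(1-\rho^{2})^{1/2}E(Ra\,\nu k^{\mathrm T}).
\]
The second term vanishes because $\nu$ is independent of $(X,Y,R)$ and has mean zero; factor first and take $E(\nu)=0$. Hence the block equals $\rho\,\tilde m$ with $\tilde m=E(RaSk^{\mathrm T})$. This $\tilde m$ is free of $\rho$, since the joint law of $(X,Y,R,S)$ is $\rho$-free; this is the same fact invoked in Lemma~\ref{lem-signal}, with $S$ taken to be a function of the outcome noise, not of $\rho$. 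Setting $\tilde M=[E(Rab^{0\mathrm T}),\tilde m]$ then gives $M=\tilde MD$ with $D=\operatorname{diag}(I,\rho I)$. For $\rho>0$, nonsingularity of $M$ (Condition~\ref{def-M}) forces $\tilde M$ to be nonsingular too, so $M^{-1}=D^{-1}\tilde M^{-1}$. It follows that $\Theta=M^{-1}r=D^{-1}\tilde M^{-1}r$, and $r$ is $\rho$-free because $a$ and $\psi$ involve only $(X,Y)$.

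For $v=E\{(1-R)b\}$ the same computation applies. The $b^{0}$ entries are $\rho$-free. The $Z$ entries equal $\rho E\{(1-R)Sk\}$, again because the $\nu$ term drops out. So $v=D\tilde v$ with $\tilde v=(E\{(1-R)b^{0}\},E\{(1-R)Sk\})$, and
\[
v^{\mathrm T}M^{-1}=\tilde v^{\mathrm T}D\,D^{-1}\tilde M^{-1}=\tilde v^{\mathrm T}\tilde M^{-1},
\]
which is free of $\rho$. This is (\ref{eq-cancellation}).

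The main care point is the claim that $\tilde M$ and $\tilde v$ are genuinely $\rho$-free. That requires $S$ to be defined independently of $\rho$ as part of the $(X,Y,R)$ system, and requires $\nu$ to be independent of $(X,Y,R)$ jointly with $S$ only insofar as the mean-zero factorization is used. I will state these explicitly from (\ref{eq-relevance}). I will also note the identity $v^{\mathrm T}M^{-1}$, as opposed to $M^{-\mathrm T}v$, since $D$ is diagonal and therefore symmetric, so the order of the factors is harmless.
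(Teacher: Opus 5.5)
Your proposal is correct and follows the paper's proof essentially step for step. You partition $b=(b^{0}(X),Zk(X))$, remove the $\nu$ term by independence and $E(\nu)=0$, pull $\rho$ out of the $Z$-columns of $M$ and the $Z$-block of $v$, and cancel $D$ against $D^{-1}$; your explicit check that $\tilde M=MD^{-1}$ is nonsingular for $\rho>0$, and your writing the $Z$-block of $v$ as $\rho E\{(1-R)Sk\}$ rather than the paper's $\rho E\{(1-\pi_0)Sk\}$, only spell out details the paper leaves implicit.
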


\begin{proof}
Write \(b=(b^{0}(X),Zk(X))\). The \(Z\)-columns of \(M\) are
\(E\{RaZk(X)^{\mathrm T}\}=\rho E\{RaSk(X)^{\mathrm T}\}+(1-\rho^{2})^{1/2}E\{Ra\nu k(X)^{\mathrm T}\}\),
and the second term vanishes because \(a\) and \(k\) are
\((X,Y)\)-measurable while \(\nu\) is independent of \((X,Y,R)\) with
mean zero. The non-\(Z\) columns do not involve \(\rho\). The same
computation gives the \(Z\)-block of \(v\) as \(\rho E\{(1-\pi_0)Sk\}\).
Writing \(v=D\tilde v\) and \(M=\tilde MD\) gives
\(v^{\mathrm T}M^{-1}=\tilde v^{\mathrm T}DD^{-1}\tilde M^{-1}=\tilde v^{\mathrm T}\tilde M^{-1}\).
\end{proof}

\subsection{How the ridge advances the small
direction}\label{sec-resolvent}

Proof of \emph{How the ridge advances the small direction}
(Lemma~\ref{lem-resolvent} of the body above).

\begin{proof}
\(M+\lambda I=M(I+\lambda M^{-1})\), so
\((M+\lambda I)^{-1}=(I+\lambda M^{-1})^{-1}M^{-1}\), the second factor
being invertible because every eigenvalue \(\mu\) of \(M\) has
\(|\mu|\ge\sigma_{\min}(M)=s\), so that \(-\lambda\) is not one of them
when \(\lambda<s\); this is what the hypothesis \(\lambda<s\) is for.
Now \(M^{-1}=\sum_k\sigma_k^{-1}v_ku_k^{\mathrm{T}}\), whose action is
dominated by \(s^{-1}v_1u_1^{\mathrm{T}}\) with the remainder bounded by
\(s_2^{-1}\); in particular
\(M^{-1}v_1=s^{-1}v_1(u_1^{\mathrm{T}}v_1)+O(s_2^{-1})\). Hence \(v_1\)
is an approximate eigenvector of \(M^{-1}\) with eigenvalue
\(\cos\theta/s\), and
\((I+\lambda M^{-1})^{-1}v_1=v_1\{1+\lambda\cos\theta/s\}^{-1}+O(\lambda/s_2)\).
Combining the two displays gives the resolvent law, equation
(\ref{eq-resolvent}) of the body above, the bound on \(E_\lambda\)
following from \(\sigma_2(M+\lambda I)\ge s_2-\lambda\) by Weyl's
inequality.
\end{proof}

\subsection{The standard deviation as a function of the
ridge}\label{sec-sdlambda}

Proof of \emph{The standard deviation as a function of the ridge}
(Proposition~\ref{prp-sdlambda}). The notation is that of Section 5.

\begin{proof}
Write \(\hat\Theta_\lambda=(\hat M+\lambda I)^{-1}\hat r\) and expand
about the population value; the expansion is legitimate because
\(\|\hat M-M\|=o_p(s+\lambda\cos\theta)\), and gives
\(\hat\Theta_\lambda=\Theta_\lambda+(M+\lambda I)^{-1}(\Delta_r-\Delta_M\Theta_\lambda)+O_p(n^{-1})\).
Substituting into the sample criterion and collecting the mean-zero
terms replaces \(\Theta\) and \(w\) in the influence function
(\ref{eq-if}) by \(\Theta_\lambda\) and
\(q_\lambda=(M+\lambda I)^{-\mathrm{T}}v\). By the cancellation
(\ref{eq-cancellation}), \(q_\lambda\) is bounded uniformly in
\(\lambda\) and free of \(\rho\), so all of the \(\lambda\)-dependence
sits in \(\Theta_\lambda\) and in
\(\delta_{0,\lambda}=\Theta_\lambda^{\mathrm{T}}b\). By
(\ref{eq-resolvent}),
\(\Theta_\lambda=v_1(u_1^{\mathrm{T}}r)/(s+\lambda\cos\theta)+O(s_2^{-1})\),
so both carry the single scalar factor \((s+\lambda\cos\theta)^{-1}\)
and nothing else depending on \(\lambda\). Hence
\(\operatorname{var}(\mathrm{IF}_\lambda)=\tau^{2}(s+\lambda\cos\theta)^{-2}\{1+O(\lambda/s_2)\}\)
with \(\tau\) free of \(\lambda\), and the display follows on dividing
by the value at \(\lambda=0\). That \(\sigma_0^{2}\asymp1/(n\rho^{2})\)
is Theorem~\ref{thm-orders}.
\end{proof}

\subsection{The regime boundary, and the rule out of
sample}\label{sec-tuning-evidence}

The material behind \emph{Where the rule comes from},
Section~\ref{sec-tuning-derivation} of the body above.

The design is a grid of \((\rho,n)\) cells. At each cell the estimator
is run over a fixed ladder of eighteen ridge values on common random
numbers, and the bias and spread curves are read off the resulting error
matrix. Every cell uses 1200 replications, and each cell's count is
stored with its output in the replication package.

One condition on that derivation deserves emphasis, because it is what
defeated two earlier attempts of our own. The expansion assumes the
second moment exists, and it need not. The smallest singular value of
\(\hat M\) has mean of order \(\rho\) and standard deviation of order
\(n^{-1/2}\), so its distance from the origin in its own standard
deviations is of order \(\rho\sqrt n\). When that is small, \(\hat M\)
is near-singular with non-negligible probability, the error behaves like
the reciprocal of a quantity with a density at zero, and its tail index
is one: the variance is infinite and there is nothing for a
bias--variance argument to balance. Measured tail indices run from 1.00
at \(\rho\sqrt n\) small to 4.79 at \(\rho\sqrt n\) large, and the
transition is at about four standard deviations. The same prediction
that is accurate to a factor of 1.05 above the transition is out by
\(2\times10^{5}\) below it. The balance (\ref{eq-tuning-derived}) is a
statement about the well-conditioned regime; that the fitted rule
(\ref{eq-tuning}) holds on both sides of the transition is not explained
by it.

Because the rule is fitted on a design, it must be evaluated on another.
Table~\ref{tbl-oos} applies it unchanged to a design that played no part
in its calibration: different regression coefficients, a different error
scale, a different response model, and a skewed covariate in place of
the Gaussian one.

\begin{longtable}[]{@{}
  >{\raggedright\arraybackslash}p{(\linewidth - 12\tabcolsep) * \real{0.0959}}
  >{\raggedleft\arraybackslash}p{(\linewidth - 12\tabcolsep) * \real{0.0685}}
  >{\raggedright\arraybackslash}p{(\linewidth - 12\tabcolsep) * \real{0.1918}}
  >{\raggedright\arraybackslash}p{(\linewidth - 12\tabcolsep) * \real{0.1644}}
  >{\raggedright\arraybackslash}p{(\linewidth - 12\tabcolsep) * \real{0.1644}}
  >{\raggedright\arraybackslash}p{(\linewidth - 12\tabcolsep) * \real{0.2192}}
  >{\raggedright\arraybackslash}p{(\linewidth - 12\tabcolsep) * \real{0.0959}}@{}}

\caption{\label{tbl-oos}The rule transported to a design not used to
calibrate it. Median absolute deviation without regularization, at the
rule, and at the best value found by search in that design; 250
replications per cell.}

\tabularnewline

\toprule\noalign{}
\begin{minipage}[b]{\linewidth}\raggedright
\(\rho\)
\end{minipage} & \begin{minipage}[b]{\linewidth}\raggedleft
\(n\)
\end{minipage} & \begin{minipage}[b]{\linewidth}\raggedright
complete case
\end{minipage} & \begin{minipage}[b]{\linewidth}\raggedright
\(\lambda=0\)
\end{minipage} & \begin{minipage}[b]{\linewidth}\raggedright
at the rule
\end{minipage} & \begin{minipage}[b]{\linewidth}\raggedright
best attainable
\end{minipage} & \begin{minipage}[b]{\linewidth}\raggedright
excess
\end{minipage} \\
\midrule\noalign{}
\endhead
\bottomrule\noalign{}
\endlastfoot
0.05 & 500 & 0.873 & 1.831 & 0.814 & 0.772 & 1.05 \\
0.05 & 2000 & 0.866 & 1.382 & 0.670 & 0.642 & 1.04 \\
0.10 & 500 & 0.873 & 1.481 & 0.695 & 0.645 & 1.08 \\
0.10 & 2000 & 0.866 & 0.597 & 0.470 & 0.455 & 1.03 \\
0.20 & 500 & 0.873 & 0.603 & 0.483 & 0.457 & 1.06 \\
0.20 & 2000 & 0.866 & 0.264 & 0.272 & 0.249 & 1.09 \\
0.35 & 500 & 0.873 & 0.305 & 0.322 & 0.305 & 1.06 \\
0.35 & 2000 & 0.866 & 0.139 & 0.157 & 0.139 & 1.13 \\

\end{longtable}

Transported to the new design the rule costs a median 6\% over the best
value obtainable there by search. It improves on no regularization in 5
of the 8 cells, namely those with \(\rho\le0.10\) together with
\(\rho=0.20\) at \(n=500\), which is where the conditioning actually
needs repair. It over-regularises in the remaining 3---already at
\(\rho=0.20\), \(n=2000\) (0.272 against 0.264), and not only at
\(\rho=0.35\)---though never by more than 13\%. It improves on
complete-case estimation in every cell. The rule is therefore
transportable, which is the property a rule needs; it is not optimal,
and we do not claim it is.

\subsection{Influence function and asymptotic normality}\label{sec-if}

Proof of \emph{Influence function and asymptotic normality}
(Theorem~\ref{thm-if}), in three steps.

\emph{Step 1: the representer coefficients.} Condition~\ref{def-M} makes
\(\Theta_\lambda=(M+\lambda I)^{-1}r\) well defined and
Condition~\ref{def-rep} makes the representer it approximates exist in
the span of \(b\), so that \(\delta_0=\Theta^{\mathrm T}b\) is the
object appearing in the influence function. Section~\ref{sec-neumann}
expands \(\hat\Theta_\lambda-\Theta_\lambda\) about its population value
and bounds the remainder by \(O_p(n^{-1})\) on an event of probability
tending to one. Hence \(n^{1/2}\hat G_\lambda(\beta_0)\) equals its
linear part plus \(o_p(1)\), and the linear part is the sample average
of
\(R\psi(\beta_0)+(1-R)\delta_{0,\lambda}+R\{\psi(\beta_0)-\delta_{0,\lambda}\}a^{\mathrm T}q_\lambda\),
a mean-zero function of one observation with finite variance under
Condition~\ref{def-mom}.

\emph{Step 2: the Jacobian.} Section~\ref{sec-zest} shows that
\(\partial\hat G_\lambda(\beta)/\partial\beta^{\mathrm T}\) converges in
probability, uniformly near \(\beta_0\), to
\(H=E\{\partial\psi(\beta_0)/\partial\beta^{\mathrm T}\}\), which is
nonsingular by Condition~\ref{def-diff}, and gives the linearization
\(n^{1/2}(\hat\beta_\lambda-\beta_0)=-H^{-1}n^{1/2}\hat G_\lambda(\beta_0)+o_p(1)\)
under Condition~\ref{def-cons}.

\emph{Step 3: the limit.} At \(\lambda=\lambda_n\) with
\(\lambda_n=o(n^{-1/2})\) the regularization bias satisfies
\(n^{1/2}\|G_{\lambda_n}(\beta_0)\|=n^{1/2}O(\lambda_n/\rho)=o(1)\) for
fixed \(\rho\), by the bias order of Theorem~\ref{thm-orders}, so it
does not enter the limit; and \(\Theta_{\lambda_n}\to\Theta\),
\(q_{\lambda_n}\to q\) as \(\lambda_n\to0\), which replaces
\(\delta_{0,\lambda}\) and \(q_\lambda\) in Step 1 by \(\delta_0\) and
\(q\). The central limit theorem applied to the sample average of Step
1, combined with Step 2 and Slutsky's theorem, gives
\(n^{1/2}(\hat\beta_{\lambda_n}-\beta_0)\to N\{0,H^{-1}\operatorname{var}(\mathrm{IF})H^{-\mathrm T}\}\)
with \(\mathrm{IF}\) as stated. Steps 1 and 2 rest on
Section~\ref{sec-neumann} and Section~\ref{sec-zest}, which follow.

\subsection{The Neumann remainder}\label{sec-neumann}

\begin{lemma}[The Neumann
remainder]\protect\hypertarget{lem-neumann}{}\label{lem-neumann}

Let \(A=M+\lambda I\), \(\hat A=A+\Delta_M\), \(0\le\lambda\le s/2\).
Then
\(\hat\Theta_\lambda-\Theta_\lambda=A^{-1}(\Delta_r-\Delta_M\Theta_\lambda)+\mathrm{Rem}\)
with
\(\mathrm{Rem}=-A^{-1}\Delta_MA^{-1}\Delta_r+A^{-1}\Delta_MA^{-1}\Delta_M\hat A^{-1}\hat r\),
and on \(\{\|\Delta_M\|\le s/4\}\), whose probability tends to one, \[
  \|\mathrm{Rem}\|\le 4s^{-2}\|\Delta_M\|\,\|\Delta_r\|+16s^{-3}\|\Delta_M\|^{2}\|\hat r\|=O_p(n^{-1}).
\]

\end{lemma}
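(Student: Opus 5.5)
The plan is to obtain the expansion as an exact algebraic identity, using two rounds of the second resolvent identity, and then bound the remainder with singular-value estimates from Weyl's inequality. Write \(\hat r=r+\Delta_r\), so that \(\Theta_\lambda=A^{-1}r\) and \(\hat\Theta_\lambda=\hat A^{-1}\hat r\), with \(\hat A=A+\Delta_M\).

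\emph{Step 1: the identity.} Wherever \(\hat A\) is invertible,
\[
  \hat A^{-1}=A^{-1}-A^{-1}\Delta_M\hat A^{-1},
\]
which one checks by multiplying on the left by \(A\). Substituting the same identity once more for the \(\hat A^{-1}\) on its right-hand side gives
\[
  \hat A^{-1}=A^{-1}-A^{-1}\Delta_MA^{-1}+A^{-1}\Delta_MA^{-1}\Delta_M\hat A^{-1}.
\]
Apply this to \(\hat r\) and split only the first two terms as \(\hat r=r+\Delta_r\). This yields
\[
  \hat\Theta_\lambda=A^{-1}r+A^{-1}\Delta_r-A^{-1}\Delta_M\Theta_\lambda-A^{-1}\Delta_MA^{-1}\Delta_r+A^{-1}\Delta_MA^{-1}\Delta_M\hat A^{-1}\hat r .
\]
Subtracting \(\Theta_\lambda=A^{-1}r\) gives the stated linear term together with \(\mathrm{Rem}\), exactly as displayed in the lemma. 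The last term is left in terms of \(\hat r\) deliberately, so that no further expansion is needed.

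\emph{Step 2: operator norms.} Weyl's inequality for singular values gives \(\sigma_{\min}(A)\ge\sigma_{\min}(M)-\lambda=s-\lambda\ge s/2\), since \(\lambda\le s/2\). Hence \(\|A^{-1}\|\le2/s\). On the event \(\{\|\Delta_M\|\le s/4\}\), a second application gives \(\sigma_{\min}(\hat A)\ge s/2-s/4=s/4\). So \(\hat A\) is invertible there, which legitimises Step 1, and \(\|\hat A^{-1}\|\le4/s\). Submultiplicativity then bounds the first remainder term by \((2/s)^2\|\Delta_M\|\,\|\Delta_r\|\). It bounds the second by \((2/s)^2\|\Delta_M\|^2\,(4/s)\|\hat r\|\). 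These are the constants \(4s^{-2}\) and \(16s^{-3}\) in the statement.

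\emph{Step 3: stochastic orders.} The entries of \(\Delta_M=\hat M-M\) and \(\Delta_r=\hat r-r\) are centred sample averages of \(Rab^{\mathrm T}\) and \(Ra\psi(\beta)^{\mathrm T}\). Their second moments are finite by Condition~\ref{def-mom} and Cauchy--Schwarz, as noted after the conditions. Chebyshev's inequality then gives \(\|\Delta_M\|,\|\Delta_r\|=O_p(n^{-1/2})\), and \(\|\hat r\|=O_p(1)\). Because \(s>0\) is fixed by Condition~\ref{def-M}, \(\Delta_M\to_p0\) implies \(\operatorname{pr}(\|\Delta_M\|\le s/4)\to1\). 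Both remainder terms are therefore \(O_p(n^{-1})\).

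No step is a genuine obstacle; the only point needing care is the conditioning. The bound degrades like \(s^{-3}\), and by Lemma~\ref{lem-factorisation} \(s\asymp\rho\). The \(O_p(n^{-1})\) claim is thus a fixed-\(\rho\) statement, which is what Theorem~\ref{thm-if} needs. It would require \(\rho\sqrt n\to\infty\) if \(\rho\) drifted, consistent with the transition discussed in Section~\ref{sec-tuning-evidence}.
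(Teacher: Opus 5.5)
Your proposal is correct and follows the paper's proof essentially step for step: the once-iterated resolvent identity applied to \(\hat r=r+\Delta_r\) with \(A^{-1}r=\Theta_\lambda\) cancelled, the bounds \(\|A^{-1}\|\le 2/s\) and \(\|\hat A^{-1}\|\le 4/s\) from the singular-value perturbation bound, and Chebyshev under Condition~\ref{def-mom} for the stochastic orders. Your observation that the event \(\{\|\Delta_M\|\le s/4\}\) is also what makes \(\hat A\) invertible, and so legitimises the identity itself, is a point the paper leaves implicit.
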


\begin{proof}
Iterating \((A+E)^{-1}=A^{-1}-A^{-1}E(A+E)^{-1}\) once gives the exact
identity \((A+E)^{-1}=A^{-1}-A^{-1}EA^{-1}+A^{-1}EA^{-1}E(A+E)^{-1}\);
multiply by \(\hat r=r+\Delta_r\) and cancel \(A^{-1}r=\Theta_\lambda\).
For the bound, \(\sigma_{\min}(A)\ge s-\lambda\ge s/2\), which is
positive by Condition~\ref{def-M}, so \(\|A^{-1}\|\le2/s\), and on the
stated event \(\sigma_{\min}(\hat A)\ge s/4\) so
\(\|\hat A^{-1}\|\le4/s\). By Condition~\ref{def-mom} and Chebyshev's
inequality \(\|\Delta_M\|\) and \(\|\Delta_r\|\) are \(O_p(n^{-1/2})\)
and \(\|\hat r\|=O_p(1)\).
\end{proof}

The bound is used by Section~\ref{sec-if} not at \(\beta_0\) alone but
on a neighborhood, and it holds there uniformly. Only \(\Delta_r\) and
\(\hat r\) carry \(\beta\): \(M=E(Rab^{\mathrm T})\) does not, so
\(\Delta_M\), \(A\) and \(s\) are free of it. Uniformity therefore
reduces to \(\sup_{\beta\in N}\|\Delta_r(\beta)\|=O_p(n^{-1/2})\), a
maximal inequality for the single class
\(\{Ra\psi(\beta)^{\mathrm T}:\beta\in N\}\), which is Lipschitz in
\(\beta\) with envelope
\(\|a\|\{\|\psi(\beta_0)\|+D\operatorname{diam}N\}\); that envelope is
square-integrable by Condition~\ref{def-mom} and
Condition~\ref{def-diff}, which is what the moment on \(D\) is for. For
the estimating function of Section~\ref{sec-application} the point is
simpler: \(\psi\) is affine in \(\beta\), so \(\Delta_r\) is affine and
its supremum over a box is attained at a vertex, which reduces
uniformity to finitely many pointwise statements. Measured there, the
supremum falls at log-log rate \(-0.53\) against the \(-1/2\) it should.

Since
\(n^{-1}\sum_i(1-R_i)\hat\Theta_\lambda^{\mathrm T}b_i=\hat\Theta_\lambda^{\mathrm T}\hat v\)
identically, \[
  \hat\Theta_\lambda^{\mathrm T}\hat v=\Theta_\lambda^{\mathrm T}v+\Theta_\lambda^{\mathrm T}\Delta_v
  +(\Delta_r-\Delta_M\Theta_\lambda)^{\mathrm T}w_\lambda+\mathrm{Rem}^{\mathrm T}v
  +(\hat\Theta_\lambda-\Theta_\lambda)^{\mathrm T}\Delta_v ,
\] whose last two terms are \(O_p(n^{-1})\), the second because both
factors are \(O_p(n^{-1/2})\). The regularization contributes
\(r-M\Theta_\lambda=\lambda\Theta_\lambda\), which is \(o(n^{-1/2})\)
under \(\lambda_n=o(n^{-1/2})\).

\begin{longtable}[]{@{}rrr@{}}

\caption{\label{tbl-neumann}Median Neumann remainder against the bound
of Lemma~\ref{lem-neumann}, \(\rho=0.4\), 60 replications per cell.}

\tabularnewline

\toprule\noalign{}
\(n\) & remainder & bound \\
\midrule\noalign{}
\endhead
\bottomrule\noalign{}
\endlastfoot
500 & 0.27100 & 1390.0 \\
2000 & 0.06670 & 329.0 \\
8000 & 0.01750 & 130.0 \\
16000 & 0.00902 & 56.6 \\

\end{longtable}

In Table~\ref{tbl-neumann} the log-log slope of the remainder in \(n\)
is -0.981 against the predicted \(-1\), and the bound of
Lemma~\ref{lem-neumann} exceeds the remainder in every cell with
\(\sigma_{\min}(M)=\) 0.0585.

\subsection{\texorpdfstring{The \(Z\)-estimator
step}{The Z-estimator step}}\label{sec-zest}

Under Condition~\ref{def-rep2} and Condition~\ref{def-diff}, \[
  \frac{\partial\hat G_\lambda(\beta)}{\partial\beta^{\mathrm T}}
  =n^{-1}\sum_iR_i\frac{\partial\psi_i}{\partial\beta^{\mathrm T}}
  +\hat v^{\mathrm T}(\hat M+\lambda I)^{-1}n^{-1}\sum_iR_ia_i\frac{\partial\psi_i}{\partial\beta^{\mathrm T}},
\] which converges in probability, uniformly on a neighborhood of
\(\beta_0\), to
\(E\{R\,\partial\psi/\partial\beta^{\mathrm T}\}+v^{\mathrm T}M^{-1}E\{Ra\,\partial\psi/\partial\beta^{\mathrm T}\}\).
By Condition~\ref{def-rep2} and the representer identity applied
columnwise, the second term equals
\(E\{(1-R)\partial\psi/\partial\beta^{\mathrm T}\}\), so the limit is
\(H\), nonsingular by Condition~\ref{def-diff}. That step is where the
model enters, and it is worth writing out. Let \(\delta\) denote the
representer of a column, so that \(v^{\mathrm T}M^{-1}E\{Ra\,\cdot\}
=E\{(1-R)\delta\}\) by the definitions of \(v\) and \(\Theta\). The
exclusion restriction makes \(\pi_0\) a function of \((X,Y)\) alone, so
for any \((X,Z)\)-measurable \(\delta\) \[
  E\{\delta\mid R=1,X,Y\}
  =\frac{E\{\delta\,\pi_0(Y,X)\mid X,Y\}}{\pi_0(Y,X)}=E\{\delta\mid X,Y\} ,
\] the conditioning on \(R=1\) being redundant. Hence
\(E\{(1-R)\delta\}=E[\{1-\pi_0(Y,X)\}E\{\delta\mid X,Y\}]
=E[\{1-\pi_0(Y,X)\}\,\partial\psi/\partial\beta^{\mathrm T}]
=E\{(1-R)\partial\psi/\partial\beta^{\mathrm T}\}\), where the middle
equality is Condition~\ref{def-rep2}. Without exclusion the first
display fails and the identity with it. A componentwise mean-value
expansion of \(\hat G_\lambda(\hat\beta)=0\) about \(\beta_0\), together
with Condition~\ref{def-cons} and
\(n^{1/2}\hat G_\lambda(\beta_0)=O_p(1)\), gives the stated
linearization.

\subsection{The orders in the relevance, checked
numerically}\label{sec-rates}

Each order the body above asserts in \(\rho\) is a claim about an
exponent, and each can be measured. Table~\ref{tbl-rates} forms the
population objects on a sample of \(2\times10^{6}\) at five values of
\(\rho\) and regresses the logarithm of each quantity on \(\log\rho\).
The two entries for the sieve bias are the point of
Proposition~\ref{prp-sievebias}: the approximation error itself grows
like \(\rho^{-1}\), while the moment that enters the estimating equation
does not grow at all.

\begin{longtable}[]{@{}llrr@{}}

\caption{\label{tbl-rates}Asserted against measured exponent in
\(\rho\), from script 74. The population objects are formed on
\(2\times10^{6}\) observations at \(\rho=0.05\), \(0.10\), \(0.20\),
\(0.35\) and \(0.50\); the measured exponent is the slope of a log-log
regression.}

\tabularnewline

\toprule\noalign{}
quantity & asserted in & exponent & measured \\
\midrule\noalign{}
\endhead
\bottomrule\noalign{}
\endlastfoot
\(\sigma_{\min}(M)\) & Lemma 5.1 & +1 & +0.997 \\
\(\operatorname{cond}(M)\) & Theorem 5.1 & -1 & -0.995 \\
\(\lVert\Theta\rVert\) & Lemma 5.1 & -1 & -0.999 \\
\(\lVert q\rVert\) & Lemma 5.1 & +0 & -0.005 \\
\(\lVert e\rVert\) & Proposition 5.1 & -1 & -0.998 \\
\(E\{(1-R)e\}\) & Proposition 5.1 & +0 & +0.002 \\

\end{longtable}

Two further identities are exact rather than asymptotic and are checked
the same way. The normalization of Lemma~\ref{lem-intercept} holds to
\(2\times10^{-15}\) by quadrature at every scanned value of
\(\gamma_Y\), and the ceiling of Corollary~\ref{cor-ess-bound} is
satisfied at each of them, with the effective sample size falling from
\(0.37\) of the respondents at \(\gamma_Y=-0.5\) to \(5\times10^{-13}\)
at \(\gamma_Y=-4\) against the response rate \(0.40\).

\subsection{Sieve misspecification}\label{sieve-misspecification}

\begin{proposition}[Sieve
misspecification]\protect\hypertarget{prp-sieve-s}{}\label{prp-sieve-s}

If the sieve omits \(Zk_j(X)\) then
\(e=(c_j/\rho)\{Zk_j-P_{\mathrm{ret}}(Zk_j)\}\) and \(E\{(1-R)e\}=O(1)\)
uniformly in \(\rho\).

\end{proposition}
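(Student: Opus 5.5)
Work with the partition $b=(b^{0}(X),Zk(X))$ of Lemma~\ref{lem-fact-s}, and suppose the omitted coordinate is $Zk_j$. The true representer is then $\delta_0=\Theta^{\mathrm T}b+\Theta_j Zk_j$.

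The first step fixes the size of the omitted coefficient. By Lemma~\ref{lem-factorisation} every $Z$-coordinate of $\Theta$ carries $\rho^{-1}$, so $\Theta_j=c_j/\rho$ with $c_j$ free of $\rho$. Let $P_{\mathrm{ret}}$ be the $L^2$ projection onto the retained span. The best retained approximation to $\delta_0$ leaves the error $e=(c_j/\rho)\{Zk_j-P_{\mathrm{ret}}(Zk_j)\}$. If the sieve is fitted through the $M$-equations rather than by least squares, the same structure holds with $P_{\mathrm{ret}}$ replaced by the oblique projection defined by $E(Ra\,\cdot)$; I would treat both cases in parallel.

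The second step shows $\|e\|\asymp\rho^{-1}$. The residual $Zk_j-P_{\mathrm{ret}}(Zk_j)$ contains the component $(1-\rho^2)^{1/2}\nu k_j$. Since $\nu$ is independent of $(X,Y,R)$ with unit variance, this component is orthogonal to every $X$-measurable retained coordinate. It also cannot be absorbed by the other $\nu k_i$ coordinates, provided $k_j$ is not in the span of the remaining $k_i$. So the norm of the residual is bounded below by a constant as $\rho\to0$, and bounded above by $\|Zk_j\|$.

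The third step, which carries the content, computes $E\{(1-R)e\}$. Write
\[
P_{\mathrm{ret}}(Zk_j)=\alpha^{\mathrm T}b^{0}+\beta^{\mathrm T}Zk_{-j}.
\]
Applying $E\{(1-R)\,\cdot\}$ to $Zk_j$ and to each $Zk_i$ gives, as in Lemma~\ref{lem-fact-s}, $\rho E\{(1-\pi_0)Sk\}$, because the $\nu$ part averages to zero. Hence
\[
E\{(1-R)e\}=(c_j/\rho)\bigl[\rho E\{(1-\pi_0)Sk_j\}-\rho\beta^{\mathrm T}E\{(1-\pi_0)Sk_{-j}\}-\alpha^{\mathrm T}E\{(1-R)b^{0}\}\bigr].
\]
The first two terms are $O(1)$ once $\beta$ is shown bounded in $\rho$. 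For $\beta$, the Gram block of the $Zk$ coordinates is $\rho^2 E(S^2kk^{\mathrm T})+(1-\rho^2)E(\nu^2kk^{\mathrm T})+\dots$, which is nonsingular uniformly in $\rho$, so $\beta$ is bounded.

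The remaining term needs $\alpha=O(\rho)$, and this is the obstacle. The normal equations give $\alpha$ as the solution of a system whose right-hand side involves cross-moments $E(Zk_j b^{0})=\rho E(Sk_jb^{0})$, since $E(\nu k_j b^{0})=0$. After partialling out the $Zk_{-j}$ coordinates, whose cross-moments with $b^{0}$ are likewise $O(\rho)$, the Schur complement for $\alpha$ is the Gram matrix of $b^{0}$ up to an $O(\rho^2)$ perturbation. That matrix is nonsingular uniformly, so $\alpha=O(\rho)$. The factor $\rho$ then cancels the $\rho^{-1}$ and gives a term that is uniformly $O(1)$.

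If $Z$ is independent of $X$, then $E(Sk_jb^{0})=0$ and $\alpha\equiv0$, which is the remark after Proposition~\ref{prp-sievebias}. For the oblique projection the same argument goes through with $M$ in place of the Gram matrix, using $M=\tilde MD$ from Lemma~\ref{lem-factorisation}. The delicate point is checking that the $b^{0}$-block of $\tilde M$ stays invertible uniformly in $\rho$, which is where the full-rank condition on $A=E(Rab^{0\mathrm T})$ used in Theorem~\ref{thm-drifting} would be invoked.
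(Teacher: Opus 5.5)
Your proposal is correct and is essentially the paper's proof. It uses the same three-term expansion of $E\{(1-R)e\}$: the identity $E\{(1-R)Zk\}=\rho E\{(1-\pi_0)Sk\}$ disposes of the two terms with an explicit $\rho$, and the $O(\rho)$ cross-blocks and right-hand side of the normal equations give $\alpha=O(\rho)$. Your Schur-complement argument and your explicit bound on $\beta$ only spell out what the paper compresses into ``both $O(\rho)$, whence $\alpha=O(\rho)$''.
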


\begin{proof}
By Lemma~\ref{lem-fact-s} the \(Z\)-block of \(\Theta\) is
\(\rho^{-1}c\) with \(c\) free of \(\rho\), so
\(\delta_0=\Theta_0^{\mathrm T}b^{0}+\rho^{-1}c^{\mathrm T}(Zk)\). The
retained sieve represents \(b^{0}\) and \(\{Zk_k:k\neq j\}\) exactly, so
the whole error is the omitted term projected off the retained span.
Writing
\(P_{\mathrm{ret}}(Zk_j)=\alpha^{\mathrm T}b^{0}+\sum_{k\neq j}\beta_kZk_k\),
\[
  E\{(1-R)e\}=\frac{c_j}{\rho}\Big[\rho E\{(1-\pi_0)Sk_j\}-\alpha^{\mathrm T}E\{(1-R)b^{0}\}
  -\rho\sum_k\beta_kE\{(1-\pi_0)Sk_k\}\Big].
\] The first and third terms carry \(\rho\). For the second, the normal
equations defining \(P_{\mathrm{ret}}\) have cross-blocks
\(E(b^{0}Zk^{\mathrm T})=\rho E(b^{0}Sk^{\mathrm T})\) and right-hand
side \(E(b^{0}Zk_j)=\rho E(b^{0}Sk_j)\), both \(O(\rho)\), whence
\(\alpha=O(\rho)\). If in addition \(Z\) is independent of \(X\) then
\(E\{Zk_j(X)b^{0}(X)\}=E\{E(Z\mid X)k_jb^{0}\}=0\) and \(\alpha\)
vanishes identically.
\end{proof}

\begin{longtable}[]{@{}
  >{\raggedleft\arraybackslash}p{(\linewidth - 10\tabcolsep) * \real{0.0449}}
  >{\raggedleft\arraybackslash}p{(\linewidth - 10\tabcolsep) * \real{0.1923}}
  >{\raggedleft\arraybackslash}p{(\linewidth - 10\tabcolsep) * \real{0.1923}}
  >{\raggedleft\arraybackslash}p{(\linewidth - 10\tabcolsep) * \real{0.1795}}
  >{\raggedleft\arraybackslash}p{(\linewidth - 10\tabcolsep) * \real{0.2115}}
  >{\raggedleft\arraybackslash}p{(\linewidth - 10\tabcolsep) * \real{0.1795}}@{}}

\caption{\label{tbl-sieve}The projection coefficients on the non-\(Z\)
part of the retained sieve, sieve \({1,X,X^2,Z}\) with \(XZ\) omitted.
Left: \(Z\) independent of \(X\), where \(\alpha\) vanishes. Right:
\(S=(\varepsilon+X/2)/\sqrt{5/4}\), where \(\alpha=O(\rho)\).}

\tabularnewline

\toprule\noalign{}
\begin{minipage}[b]{\linewidth}\raggedleft
\(\rho\)
\end{minipage} & \begin{minipage}[b]{\linewidth}\raggedleft
\(\lVert\alpha\rVert\) (indep.)
\end{minipage} & \begin{minipage}[b]{\linewidth}\raggedleft
\(\lVert e^{*}\rVert\) (indep.)
\end{minipage} & \begin{minipage}[b]{\linewidth}\raggedleft
\(\lVert\alpha\rVert\) (dep.)
\end{minipage} & \begin{minipage}[b]{\linewidth}\raggedleft
\(\lVert\alpha\rVert/\rho\) (dep.)
\end{minipage} & \begin{minipage}[b]{\linewidth}\raggedleft
\(\lVert e^{*}\rVert\) (dep.)
\end{minipage} \\
\midrule\noalign{}
\endhead
\bottomrule\noalign{}
\endlastfoot
0.05 & 0.0040 & 1.002 & 0.0262 & 0.524 & 1.002 \\
0.10 & 0.0019 & 0.998 & 0.0462 & 0.462 & 0.999 \\
0.20 & 0.0016 & 0.998 & 0.0917 & 0.458 & 0.996 \\
0.35 & 0.0025 & 1.002 & 0.1586 & 0.453 & 0.989 \\
0.50 & 0.0016 & 1.003 & 0.2223 & 0.445 & 0.974 \\
0.80 & 0.0024 & 1.000 & 0.3555 & 0.444 & 0.935 \\

\end{longtable}

In Table~\ref{tbl-sieve} the prototype has \(Z\) independent of \(X\),
which is the degenerate case: \(\alpha\) is zero to Monte-Carlo
precision at every \(\rho\). Only when \(Z\) and \(X\) are dependent
does the claim \(\alpha=O(\rho)\) have content, and the ratio
\(\|\alpha\|/\rho\) is then constant to within 18\% across a sixteenfold
range of \(\rho\).

\section{Data preparation}\label{sec-data-prep}

The credit data comprise two files, of accepted and rejected
applications. Three conventions of the original analysis cannot be
inferred from the published text and are required to reproduce its
tables.

\begin{enumerate}
\def\labelenumi{(\roman{enumi})}
\item
  Each file carries one trailing all-empty record. Read under the
  conventions of the software then in use, this created an additional
  factor level sorting ahead of all others, which survives subsetting;
  the month-conversion tables of the original code are indexed by that
  shifted level index.
\item
  Durations are recoded to the lower bound of their interval plus six
  months (plus three for the two six-month intervals).
\item
  Age is recoded by \(\text{level index}+16\) followed by the
  replacement of the value \(18\) by \(65\). This is coherent only if
  the level \(\ge65\) sorts ahead of the numeric levels, which it did in
  the original locale; under that ordering every frequency matches its
  own label, and under any other it does not.
\end{enumerate}

Two errors in the original preparation are recovered by the reproduction
and reproduced deliberately so that the published tables can be
reconciled.

\emph{Age.} The recoding in (iii) interchanges the two extreme
categories: \(\ge65\) becomes \(17\) and \(18\)--\(20\) becomes \(65\),
while every other stated age is reduced by one year. Correcting it moves
the intercept and the contrast for pensioners by more than \(0.05\) and
no other coefficient; it does not affect any diagnostic reported.

\emph{Working duration.} One interval occurs only in the accepted file,
and the original code indexed the month-conversion table by factor
levels computed \emph{within each file}. Rejected applicants from that
level onward therefore received the duration of the next level; the 402
rejected applicants whose intended tenure is 9 months were coded as 558
months instead. Reproducing the per-file indexing is necessary to
recover the published coefficient of -0.002 for working duration; pooled
indexing returns +0.0014.

On this basis all 41 of the 41 coefficients of the published acceptance
model, their standard errors and \(p\)-values reproduce to three decimal
places. One printed value does not: the published table gives \(+0.495\)
for the contrast of three or more children, whereas magnitude, standard
error and \(p\)-value all match a coefficient of \(-0.495\), which the
monotone pattern across the categories supports; the later table of the
same source prints \(-0.496\).

\section{Computation}\label{computation}

Quantities involving \(\Lambda\) are evaluated on the logarithmic scale
by Gauss--Hermite quadrature, since Monte Carlo evaluation fails in
exactly the regime under study. The semiparametric likelihood is
maximized in its saddle form: with \(p_i=1/(nz_i)\) and
\(z_i=1+\lambda^{\mathrm T}g_i\), the inner problem in \(\lambda\) is
concave and is solved by damped Newton iteration, and only the outer
problem in the response coefficients is non-convex. Solving the full
score system directly is not advisable: its Jacobian has condition
number \(5\times10^{9}\) after column equilibration on the application
data---\(7.0\times10^{14}\) before it---and the iteration used in the
original analysis tests the size of its step rather than the size of the
score, so that it cannot detect failure to converge.

\end{document}